\documentclass[11pt,reqno]{amsart}
\usepackage{dsfont, amssymb,amsmath,amscd,latexsym, amsthm, amsxtra,amsfonts,enumerate}
\allowdisplaybreaks[3]
\usepackage{caption,subcaption}
\usepackage[all]{xy}
\usepackage[active]{srcltx}
\usepackage{mathrsfs}
\usepackage[all]{xy}
\usepackage[active]{srcltx}
\usepackage{graphicx}
\usepackage{bm}
\usepackage{bbm}
\usepackage{graphicx}
\usepackage{caption}
\usepackage{color}
\usepackage{algorithmicx,algorithm}
\usepackage{longtable}
\usepackage[round]{natbib}
\usepackage{ulem}
\usepackage{verbatim}
\bibliographystyle{plainnat}
\usepackage[pdfstartview=FitH, bookmarksnumbered=true,bookmarksopen=true, colorlinks=true, pdfborder=001, citecolor=blue, linkcolor=blue,urlcolor=blue]{hyperref}
\usepackage{CJK}
\captionsetup{font={scriptsize}}
\textheight230mm
 \textwidth155mm
 \topmargin=-1cm
\oddsidemargin=1.4cm
 \evensidemargin=1.4cm

\newtheorem{theorem}{Theorem}[section]
\newtheorem{lemma}[theorem]{Lemma}

\newtheorem{proposition}[theorem]{Proposition}

\renewcommand{\theequation}{\arabic{section}.\arabic{equation}}
\newtheorem{Def}{Definition}[section]

\newtheorem{Assumption}{Assumption}[section]

\newcommand{\dif}{\mathrm{d}}

\renewcommand{\theequation}{\arabic{section}.\arabic {equation}}

\begin{document}
\makeatletter
    \def\@setauthors{%
        \begingroup
        \def\thanks{\protect\thanks@warning}%
        \trivlist \centering\footnotesize \@topsep30\p@\relax
        \advance\@topsep by -\baselineskip
        \item\relax
        \author@andify\authors
        \def\\{\protect\linebreak}%
        {\authors}%
        \ifx\@empty\contribs \else ,\penalty-3 \space \@setcontribs
        \@closetoccontribs \fi
        \endtrivlist
        \endgroup } \makeatother
\baselineskip 17pt
 \title[{{\tiny A Stackelberg reinsurance-investment game}}]
    {{\tiny A Stackelberg reinsurance-investment game under $\alpha$-maxmin mean-variance criterion and stochastic volatility}}
    \vskip 10pt\noindent
    \author[{\tiny Guohui Guan, Zongxia Liang and Yilun Song}]
    {\tiny {\tiny  Guohui Guan $^{a,\dag}$, Zongxia Liang$^{b,\ddag}$, Yilun Song$^{b,*}$ }
        \vskip 10pt\noindent
        {\tiny ${}^a$ School of Statistics, Renmin University of China, Beijing
            100872, China \vskip 10pt\noindent\tiny ${}^b$Department of
            Mathematical Sciences, Tsinghua University, Beijing 100084, China }
        \footnote{\\
            $ \dag$ email: guangh@ruc.edu.cn\\
            $ \ddag$ email: liangzongxia@mail.tsinghua.edu.cn \\
            $*$ Corresponding author, email: songyl18@mails.tsinghua.edu.cn    }}
    \maketitle
\begin{abstract}
This paper investigates a Stackelberg game between an insurer and a reinsurer under the $\alpha$-maxmin mean-variance criterion. The insurer can purchase per-loss reinsurance from the reinsurer. With the insurer's feedback reinsurance strategy, the reinsurer optimizes the reinsurance premium in the Stackelberg game.  The financial market consists of cash and stock with Heston's stochastic volatility. Both the insurer and reinsurer maximize their respective $\alpha$-maxmin mean-variance preferences in the market. The criterion is time-inconsistent and we derive the equilibrium strategies by the extended Hamilton-Jacobi-Bellman equations. Similar to the non-robust case in \cite{li2022stackelberg}, excess-of-loss reinsurance is the optimal form of reinsurance strategy for the insurer. The equilibrium investment strategy is determined by a system of Riccati differential equations. Besides, the equations determining the equilibrium reinsurance strategy and reinsurance premium rate are given semi-explicitly, which is simplified to an algebraic equation in a specific example. Numerical examples illustrate that the game between the insurer and reinsurer makes the insurance more radical when the agents become more ambiguity aversion or risk aversion. Furthermore, the level of ambiguity, ambiguity attitude, and risk attitude of the insurer (reinsurer) have similar effects on the equilibrium reinsurance strategy, reinsurance premium, and investment strategy.
%
%
\\ \\
\noindent {\small\textbf{Keywords:}  Stackelberg game; Reinsurance; Investment; Stochastic volatility; Model uncertainty; $\alpha$-maxmin mean-variance.}
\\  \\
\noindent \textbf{AMS Subject Classification (2010)}: 91B30, 91A23, 91G10, 49L20.
\end{abstract}
\maketitle
\section{Introduction}
The optimal reinsurance-investment problem has long been a popular topic in actuarial sciences. Insurance risk is one of the main risks in the insurance market. For the insurer, reinsurance is an efficient way to control the insurance risk. In recent decades, researchers study different types of reinsurance for the insurer, such as proportional reinsurance (see \cite{schmidli2001optimal}, \cite{liang2007optimal}, \cite{yuen2015optimal}, etc.), the excess-of-loss reinsurance (see \cite{asmussen2000optimal}, \cite{li2018robust}, \cite{zhang2021optimal}, etc.), quota-share reinsurance (see \cite{centeno1991combining}, \cite{zhang2018optimal}, etc.). The abundant reinsurance contracts in the insurance market help the insurer manage the insurance risk well.  Besides, the insurer also suffers from financial risk in the financial market. Investment is an efficient tool for the insurer in the financial market, which has been studied in \cite{browne1995optimal}, \cite{irgens2004optimal}, \cite{guan2014optimal}, etc.

However, most of the previous studies derive the optimal reinsurance strategy from the perspective of the insurer. \cite{borch1969optimal} shows that the two parties have conflicting interests over the reinsurance premium. Reinsurance is a mutual agreement between the insurer and the reinsurer. The reinsurance strategy only concerning one party's interest may not be acceptable to another party. Generally, there are two ways to study the reinsurance arrangement from both the insurer's and the reinsurer's point of view. One way is to study a weighted goal of the insurer and reinsurer, which has been comprehensively investigated in \cite{cai2016optimal}, \cite{zhang2018optimal}, \cite{yang2022optimal}, etc. Another way is to formulate a Stackelberg game between the insurer and reinsurer. In the reinsurance contract, the insurer decides the number of losses ceded to the reinsurer, while the reinsurer sets the list of the reinsurance premium corresponding to the amount of reinsurance. In \cite{chen2018new}, a stochastic Stackelberg differential game between the insurer and reinsurer in a  continuous-time framework is studied. In the Stackelberg game, the reinsurer moves first and the insurer does subsequently to achieve a Stackelberg equilibrium toward optimal reinsurance arrangement.

As the Stackelberg game balances the interests of the insurer and reinsurer, recently, many scholars study the reinsurance game from different aspects. \cite{CS2019} investigate the  Stackelberg differential reinsurance games under the time-inconsistent mean–variance framework. \cite{bai2021stackelberg} consider the insurer and reinsurer's wealth processes with delay to characterize bounded memory. In \cite{bai2022hybrid},  a hybrid stochastic differential reinsurance and investment game between one reinsurer and two insurers is studied. \cite{zheng2021stackelberg} suppose that the follower cannot observe the state process directly, but could observe a noisy observation process, while the leader can completely observe the state process. \cite{li2022stackelberg} consider a reinsurance problem for a mean-variance Stackelberg game with a random time horizon. However, except for risk in the market,  ambiguity is another main concern for the agents, see \cite{ellsberg1961risk}. The agents cannot estimate the economic parameters accurately and are in fact ambiguity aversion.  For the ambiguity attitude, most of the existing works apply maxmin expected utility preference proposed by \cite{GS1989}. \cite{maenhout2004robust} studies the robust portfolio choices under the max-min expected utility. The ambiguity over the market is characterized by a set of equivalent probability measures. \cite{maenhout2004robust} indicates that the agent will consume and invest less with a higher ambiguity aversion attitude. From the perspective of the insurer, \cite{YLVZ2013}, \cite{yi2015robust}, \cite{bauerle2021robust} show that an ambiguity aversion attitude will decrease the insurance risk undertaken by the insurer. Besides, when ignoring ambiguity, the insurer may be suffered from large utility losses. In the insurance market, the insurer and reinsurer are also ambiguity averse. Although many works study the effect of insurance risk in the Stackelberg game, it remains to show whether model uncertainty has a big influence on the game.  As such, it is also necessary to introduce the ambiguity attitude in the Stackelberg game between the insurer and reinsurer.

Recently, there are some researches on the robust Stackelberg game. \cite{gu2020optimal} study the robust excess-of-loss reinsurance contract for the insurer and reinsurer with exponential utility. The numerical results show that the insurer will pay little attention to the reinsurer's ambiguity aversion level while the reinsurer will pay rather close attention to the insurer's ambiguity aversion. \cite{wang2020robust} simultaneously suppose that the insurer is subject to a dynamic Value-at-Risk constraint in the robust game. In \cite{yuan2022robust},  a loss-dependent premium principle is applied to the insurer. They also derive the robust reinsurance contract explicitly by solving the coupled extended Hamilton–Jacobi–Bellman systems under the time-consistent mean-variance criterion. They find that the reinsurer would like to raise the reinsurance price to guard against the model uncertainty. Other related works can refer to \cite{hu2022robust}, \cite{yang2022robust}, etc.

Although there are various works studying the robust Stackelberg game, most of the works
apply the maxmin exponential/mean-variance preferences. In the framework,  the agents only concentrate on the worst case of the market and are extremely ambiguity-averse.  However, \cite{HT1991} show that the agents' ambiguity attitude is not systematically negative and sometimes even ambiguity-seeking in their empirical studies. \cite{M2002} and \cite{GMM2004} introduce and axiomatically characterize the $\alpha$-maxmin expected utility to mix the ambiguity-averse case and the ambiguity-seeking case. Later, the $\alpha$-maxmin mean-variance criterion is studied in \cite{LLX2016} for an insurer. The insurer may have different levels of ambiguity and different ambiguity attitudes. Inspired by the above-mentioned works, we have three concerns in this paper. First, most literature considers the robust Stackelberg game in the maxmin robust model. We aim to study the Stackelberg game between the insurer and reinsurer under the $\alpha$-maxmin mean-variance criterion. We compare and discuss the effects of the level of ambiguity,  ambiguity attitude, and risk attitude on the reinsurance strategy and reinsurance price. The related robust problem is time-inconsistent and cannot be solved by the traditional dynamic programming method. Second, most of the previous works consider a specific reinsurance contract. The insurer can neither buy proportional reinsurance nor excess-of-loss reinsurance, such as  \cite{gu2020optimal} and \cite{yuan2022robust}. However, except for the amount of the reinsurance contract, the type of the reinsurance contract is also a mutual agreement between the insurer and reinsurer. In this paper, the insurer is allowed to purchase the per-loss reinsurance strategy, while the reinsurer optimizes the reinsurance premium. Third, although \cite{LLX2016} study the equilibrium investment strategy under the $\alpha$-maxmin mean-variance criterion, the stock price is supposed to follow the geometric Brownian motion in their work.  However,  empirical results support the existence of stochastic volatility such as \cite{FSS1987} and \cite{PS1990}, etc. The volatility of the stock fluctuates and volatility risk plays an important role in risky allocation. Most works considering the robust investment strategy under stochastic volatility are formulated in the maxmin robust model. Only the effect of the level of ambiguity is discussed for an investor with stochastic volatility. However, when the investor has different ambiguity attitudes, the impact on the investment strategy has not been discussed.
\vskip 5pt
We have the following contributions to this paper. (1) We formulate the reinsurance-investment Stackelberg differential game between the insurer (AAI) and the ambiguity averse reinsurer (AAR) under the $\alpha$-maxmin mean-variance criterion. Different from \cite{bai2021stackelberg}, \cite{yang2022robust}, etc., the insurer is allowed
to purchase the per-loss reinsurance business as in \cite{liang2018minimizing} and \cite{li2022stackelberg}. However, model uncertainty is ignored in \cite{liang2018minimizing} and \cite{li2022stackelberg}. (2) As the empirical results do not support the geometric Brownian motion, we consider Heston's stochastic volatility in the financial market. The optimal investment strategy under the $\alpha$-maxmin mean-variance criterion is studied. (3) The optimization problem is time-inconsistent and we present the verification theorem of the equilibrium strategies and value functions rigorously. (4) The closed form of the AAI's feedback retention level to AAR's reinsurance premium is given, and we find that similar to the non-robust case of \cite{li2022stackelberg}, the excess-of-loss reinsurance is still the optimal form of reinsurance strategy. Besides, we solve the equilibrium investment strategies semi-explicitly, which is determined by a system of Riccati differential equations. We also present a global existence result for the equations. Furthermore, the equations determining the equilibrium reinsurance strategy and reinsurance premium rate are given semi-explicitly, which is simplified to an algebraic equation in a specific example.  (5) Numerical analyses reveal that the level of ambiguity, ambiguity attitude, and risk attitude of the AAI (AAR) have similar effects on the equilibrium strategies. However, the attitudes of the AAI and AAR have different impacts on the equilibrium retention level. When the AAI is more averse to risk or ambiguity, the equilibrium retention level decreases, and the reinsurance premium increases. However, when the AAR is {suffering from a higher level of ambiguity or} more averse to risk or ambiguity, the equilibrium retention level and reinsurance premium increase, leading to less insurance risk divided by the AAR. Moreover, we compare the equilibrium investment strategy with different levels of ambiguity, ambiguity attitudes, and risk attitudes. The results show that these parameters have a similar effect on the equilibrium investment, i.e., the AAI or AAR will become less interested in the financial market with a higher level of ambiguity, ambiguity attitude, or risk attitude.

\vskip 5pt
The rest of this paper is organized as follows. The insurance and financial markets, the AAI, and AAR’s wealth processes are presented in Section 2. In Section 3, we show the Stackelberg game between the AAI and AAR under the $\alpha$-maxmin mean-variance criterion. Section 4 presents the definition of the equilibrium strategy and verification theorem of the time-inconsistent problem. Section 5 shows the numerical results and Section 6 is a conclusion. All the proofs are presented in the Appendices.
\vskip 20pt
\setcounter{equation}{0}
\section{\bf{Financial Model}}
In this paper, we consider a stochastic Stackelberg differential game between an insurer (follower) and a reinsurer (leader). The insurer and the reinsurer have conflicting interests over the reinsurance premium. For the reinsurance premium determined by the reinsurer, the insurer chooses the reinsurance policy to divide insurance risk. The reinsurer sets the reinsurance premium according to the insurer's retention level. 

Assume that $(\Omega.\mathcal{F},\{\mathcal{F}_t\}_{t\in[0,T]},\mathbb{P})$ is the filtered complete probability space satisfying the usual conditions. $\mathbb{P}$ represents the reference probability measure. The insurer and reinsurer are both ambiguous about the insurance and financial market. We investigate the behaviors of the ambiguity averse insurer (AAI) and ambiguity averse reinsurer (AAR) in this paper. All the processes below are assumed to be well-defined and adapted to $\{\mathcal{F}_t\}_{t\in[0,T]}$. We suppose that the AAI and AAR are both concerned with wealth at some terminal time $T$ and adjust the strategies within the time horizon $[0,T]$.
\subsection{{\bf{Insurance and financial markets}}}
The surplus process of the insurer without taking investment and reinsurance strategies is given by the following spectrally negative L\'evy process
\begin{equation}\nonumber
U(t)=ct-\int_0^t\int_0^\infty zN(\dif t,\dif z),~~U(0)>0,
\end{equation}
where $c>0$ is the premium rate and $N(\dif t,\dif z)$ is a Poisson random measure characterizing the claims of the insurer with size range of $(z,z+\dif z)$ within time horizon $(t,t+\dif t)$. Denote $\tilde{N}(\dif t,\dif z) = N(\dif t,\dif z)-\nu(\dif z)\dif t$ as the compensated measure of
$N(\dif t, \dif z)$, where $\nu(\dif z)$ is a L\'evy measure satisfying $\int_0^1 z\nu(\dif z)<\infty$, and $\int_1^\infty z^2\nu(\dif z)<\infty$, which are required to ensure the existence of the first two moments of $U(t)$. We suppose that the insurance premium is calculated according to the expected value principle:
\begin{equation}\nonumber
c=\frac{1+\theta}{t}\mathbf{E}\left[\int_0^t\int_0^\infty z N(\dif s, \dif z)\right]=(1+\theta)\int_0^\infty z\nu(\dif z),
\end{equation}
where $\theta>0$ represents insurance premium factor of the insurer. The above spectrally negative L\'evy process includes the compound Poisson process as a special case.
\vskip 5pt
The insurer can purchase per-loss reinsurance from a reinsurer to divide the insurance risk.
Denote the retention level of the insurer as $a=\{a(t,z), t\in[0,T]\}$ with the requirement $$a(t,z)\in[0,z],\ \  \forall t\in[0,T].$$ Then the surplus process of the insurer with reinsurance strategy $a$ is as follows{{:}}
\begin{equation}\label{reini}
\dif U_I(t)=\int_0^\infty [(\theta-\eta(t))z+\eta(t)a(t,z)]\nu(\dif z)\dif t -\int_0^\infty a(t,z)\tilde{N}(\dif t,\dif z),
\end{equation}
where $\eta(t)>\theta$ is the reinsurance premium, which is determined by the reinsurer.
\vskip 5pt
With the reinsurance strategy $a$ adopted by the insurer, the reinsurer is faced with insurance risk and has a surplus process as follows
\begin{equation}\label{reinr}
\dif U_R(t)=\int_0^\infty \eta(t)[z-a(t,z)]\nu(\dif z)\dif t -\int_0^\infty [z-a(t,z)]\tilde{N}(\dif t,\dif z).
\end{equation}
From Eqs.~(\ref{reini}) and (\ref{reinr}), we see that the reinsurance premium $\eta$ has opposite effects on the surplus processes of the insurer and reinsurer. The insurer's wealth process decreases with $\eta$. The reinsurer expects to set a higher reinsurance premium, which however will reduce the willingness of the insurer to the reinsurance purchase. Hence, the trade-off of the reinsurance premium and reinsurance strategy formulates a game between the reinsurer and the insurer.
\vskip 5pt
Besides participating in the insurance market, the insurer and reinsurer also invest in the financial market. In this paper, we suppose that there are two assets: a risk-free asset (cash) and a risky asset (stock). The process of the cash is given by
\begin{equation}\nonumber
\frac{\dif S_0(t)}{S_0(t)}=r\dif t,~~S_0(0)=s_0,
\end{equation}
where $r\geq 0$ is the risk-free interest rate. For the risky asset, we consider the process of the stock with  Heston's stochastic volatility model:
\begin{equation}\nonumber
\begin{cases}
&\frac{\dif S(t)}{S(t)}=(r+\xi Y(t))\dif t+\sqrt{Y(t)}\dif W(t),\\
&\dif Y(t)=\kappa(\delta-Y(t))\dif t+\sigma\sqrt{Y(t)}\left(\rho_0\dif W(t)+\rho \dif W_1(t)\right),
\end{cases}
\end{equation}
where $W_1=\{W_1(t), t\in[0,T]\}$ and $W=\{W(t), t\in[0,T]\}$ are two independent standard Brownian motions on the space $(\Omega.\mathcal{F},\{\mathcal{F}_t\}_{t\in[0,T]},\mathbb{P})$. $\xi$, $\kappa$, $\delta$ and $\sigma$ are all positive constants. $\rho=\sqrt{1-\rho_0^2}$ and, $\rho_0$ represents the correlation between the stock index and the stochastic volatility. $\kappa$ and $\delta$ represent the reversion rate and long-term level of the volatility, respectively. We require $2\kappa\delta\geq\sigma^2$ to ensure the positivity of $Y(t)$.

The insurer and reinsurer both can invest in the financial market. Denote the money invested in $S$ at time $t$ by $\pi_I(t)$ and $\pi_R(t)$ for the insurer and reinsurer, respectively. The wealth process of the insurer under strategy $(a,\pi_I)$ is
\begin{equation}\label{XI}
\begin{split}
\dif X_I(t)=&\int_0^\infty\left [(\theta-\eta(t))z+\eta(t)a(t,z)\right]\nu(\dif z)\dif t-\int_0^\infty a(t,z)\tilde{N}(\dif t,\dif z)+rX_I(t)\dif t\\
&+\pi_I(t)\left\{[\xi Y(t)+\int_{-1}^\infty z\nu_1(\dif z)]\dif t+\sqrt{Y(t)}\dif W(t)\right\}.
\end{split}
\end{equation}
Meanwhile, the wealth process of the reinsurer under strategy $(\eta,\pi_R)$ is
\begin{equation}\label{XR}
\begin{split}
\dif X_R(t)=&\int_0^\infty\eta(t)[z-a(t,z)]\nu(\dif z)\dif t-\int_0^\infty [z-a(t,z)]\tilde{N}(\dif t,\dif z)+rX_R(t)\dif t\\
&+\pi_R(t)\left\{[\xi Y(t)+\int_{-1}^\infty z\nu_1(\dif z)]\dif t+\sqrt{Y(t)}\dif W(t)\right\}.
\end{split}
\end{equation}
\vskip 5pt
\subsection{Ambiguity attitudes}
In practice, the insurer and reinsurer cannot estimate the financial model accurately. As such, it is necessary to consider ambiguity when making a decision. Many previous works show that the agent is faced with large utility loss when ignoring ambiguity, see \cite{YLVZ2013}. To show the agents' ambiguity attitudes, we define a set of equivalent probability measures to $\mathbb{P}$
$$\mathcal{Q}:=\{\mathbb{Q}|\mathbb{Q}\sim\mathbb{P}\},$$
where the equivalent probability measure $\mathbb{Q}$ is defined by  a process $\Phi\triangleq\{(\phi_0(t,Y(t)), \phi_Y(t,$ $Y(t)), \phi_{{Z}}(t,z) ), t\in[0,T]\}$ satisfying the following conditions:
\begin{enumerate}
	\item $\phi_0(t,y)$, $\phi_Y(t,y)$  are deterministic functions of $t$ and $y$
	, and $\phi_{{Z}}(t,z)$ is a deterministic function of $t$ and $z$.

\item  $\mathbf{E}\left[\exp\left\{\frac{1}{2}\int_0^T[\phi_0(t,Y(t))^2+\phi_Y(t,Y(t))^2]\dif t\right\}\right]<\infty$, and \\
$\mathbf{E}\left[\exp\left\{\int_0^T\int_0^\infty\left [(1-\phi_{{Z}}(t,z))\ln(1-\phi_{{Z}}(t,z))+\phi_{{Z}}(t,z)\right ]\nu(\dif z)\dif t\right\}\right]<\infty$.\\
\end{enumerate}
Denote $\phi\triangleq\{(\phi_0(t,y), \phi_Y(t,y), \phi_{{Z}}(t,z) ), t\in[0,T]\}$. And the set of all functions $\phi$ satisfying the above conditions is denoted by $\Theta$.
For each $\phi\in\Theta$, we define the process $\Xi^\phi$ as
\begin{equation}
\begin{split}
\Xi^\phi(t)=\exp&\left\{-\int_0^t\phi_0(s,Y(s))\dif W(s)-\frac{1}{2}\int_0^t\phi_0(s,Y(s))^2\dif s\right.\\
&-\int_0^t\phi_Y(s,Y(s))\dif W_1(s)-\frac{1}{2}\int_0^t\phi_Y(s,Y(s))^2\dif s\\
&+\int_0^t\int_0^\infty\ln(1-\phi_{{Z}}(s,z))\tilde{N}(\dif s,\dif z)\\
&\left.+\int_0^t\int_0^\infty\left[\ln(1-\phi_{{Z}}(s,z))+\phi_{{Z}}(s,z)\right]\nu(\dif z)\dif s\right\}.
\end{split}
\end{equation}

Under Conditions (1) and (2), $\Xi^\phi$ is a $\mathbb{P}$-martingale and there is an equivalent probability measure $\mathbb{Q}$ defined by
\begin{equation}\nonumber
\frac{\dif \mathbb{Q}}{\dif \mathbb{P}}|_{\mathcal{F}_T}=\Xi^\phi(T).
\end{equation}

It follows from the Girsanov's Theorem that the following processes are standard Brownian motions under $\mathbb{Q}$
\begin{equation}\nonumber
\begin{split}
&\dif W^{\mathbb{Q}}(t)=\dif W(t)+\phi_0(t)\dif t,\\
&\dif W_1^{\mathbb{Q}}(t)=\dif W_1(t)+\phi_Y(t)\dif t.
\end{split}
\end{equation}
Besides
\begin{equation}\nonumber
\begin{split}
&\tilde{N}^{\mathbb{Q}}(\dif t,\dif z)=\tilde{N}(\dif t,\dif z)+\phi_{{Z}}(t,z)\nu(\dif z)\dif t,\\
\end{split}
\end{equation}
is the compensated Poisson random measure under $\mathbb{Q}$ with compensator  $[1-\phi_{{Z}}(t,z)]\nu(\dif z)$.

Under $\mathbb{Q}$, the wealth process of the insurer is
\begin{equation}\label{XIQ}
\begin{split}
\dif X_I(t)=&\int_0^\infty\left[(\theta-\eta(t))z+\eta(t)a(t,z)+\phi_{{Z}}(t,z)a(t,z)\right ]\nu(\dif z)\dif t\\
&-\int_0^\infty a(t,z)\tilde{N}^{\mathbb{Q}}(\dif t,\dif z)+rX_I(t)\dif t\\
&+\pi_I(t)\left\{[\xi Y(t)
-\sqrt{Y(t)}\phi_0(t)]\dif t+\sqrt{Y(t)}\dif W^{\mathbb{Q}}(t)\right\},
\end{split}
\end{equation}
the wealth process of the reinsurer is
\begin{equation}\label{XRQ}
\begin{split}
\dif X_R(t)=&\int_0^\infty(\eta(t)+\phi_{{Z}}(t,z))[z-a(t,z)]\nu(\dif z)\dif t-\int_0^\infty [z-a(t,z)]\tilde{N}^{\mathbb{Q}}(\dif t,\dif z)\\
&+rX_R(t)\dif t
+\pi_R(t)\left\{[\xi Y(t)
-\sqrt{Y(t)}\phi_0(t)]\dif t+\sqrt{Y(t)}\dif W^{\mathbb{Q}}(t)\right\},
\end{split}
\end{equation}
and the stochastic volatility $\{Y(t),t\in[0,T]\}$ follows
\begin{equation*}\nonumber
\begin{split}
\dif Y(t)=&\kappa(\delta-Y(t))\dif t+\sigma\sqrt{Y(t)}\left[-\rho_0\phi_0(t)\dif t -\rho\phi_Y(t)\dif t\right.\\
&\left.+\rho_0\dif W^{\mathbb{Q}}(t)+\rho \dif W_1^{\mathbb{Q}}(t)\right].
\end{split}
\end{equation*}
The insurer and reinsurer search for the equilibrium per-loss reinsurance strategy and reinsurance premium, respectively. Besides, they also participate in the financial market. The insurer adopts the strategy  $u_I=\{(a(t,z),\pi_I(t)), t\in[0,T]\}$ and the reinsurer adopts $u_R=\{(\eta(t),\pi_R(t)), t\in[0,T]\}$. $u_I$ is said to be admissible if it satisfies the following conditions \\
(1) $a=\{a(t,z), t\in[0,T]\}$ and $\pi_I=\{\pi_I(t), t\in[0,T]\}$ are progressively measurable w.r.t. $\{\mathcal{F}_t\}_{t\in[0,T]}$.\\
(2) $a(t,z)\in[0,z]$ and $\mathbf{E}^{\phi}\left[\int_0^T\pi_I(t)^2Y(t)\dif t\right]<+\infty$ for any $\phi\in\Theta$.\\
(3) SDE~(\ref{XI}) exists a unique strong solution for any $(t,x,y)\in{[0,T]}\times\mathbb{R}\times\mathbb{R}^+$.\\
The admissible set of all the admissible strategies of $u_I$ is denoted by $\Pi_I$. Similarly, the admissible set $\Pi_R$ of the reinsurer contains all the admissible strategies satisfying \\
(1) $\eta=\{\eta(t), t\in[0,T]\}$ and $\pi_R=\{\pi_R(t), t\in[0,T]\}$ are progressively measurable w.r.t. $\{\mathcal{F}_t\}_{t\in[0,T]}$.\\
(2) $\eta(t)>\theta$ and $\mathbf{E}^{\phi}\left[\int_0^T\pi_R(t)^2Y(t)\dif t\right]<+\infty$ for any $\phi\in\Theta$.\\
(3) SDE~(\ref{XR}) exists a unique strong solution for any $(t,x,y)\in{[0,T]}\times\mathbb{R}\times\mathbb{R}^+$.
\vskip 20pt
\setcounter{equation}{0}
\section{{\bf $\alpha$-Robust Stackelberg Game}}
In this section, we formulate the Stackelberg differential game between the AAI and AAR under the $\alpha$-maxmin mean-variance criterion. The AAI and AAR are both uncertain about the financial market and search the $\alpha$-robust optimal strategies under the mean-variance criterion, which is formulated in \cite{LLX2016}. The penalty to describe the distance between the probability measures $\mathbb{Q}$ and $\mathbb{P}$ is given by {{
\begin{equation*}
\begin{split}
h_{\mathbf{b}}(\phi(s))=&\frac{1}{\beta}\int_0^\infty\left [(1-\phi_{{Z}}(s,z))\ln(1-\phi_{{Z}}(s,z))+\phi_{{Z}}(s,z)\right]\nu(\dif z)\\
&+\frac{\phi_0(s)^2}{2\beta_0}+\frac{\phi_Y(s)^2}{2\beta_Y},
\end{split}
\end{equation*}
where $\beta$, $\beta_0$ and $\beta_Y$}} are positive constants representing the level of ambiguity for the AAI 
towards insurance risk, equity risk and volatility risk, respectively. Denote  $\mathbf{b}=(\beta,\beta_0,\beta_Y)$.

The optimization problem of the insurer under the $\alpha$-maxmin mean-variance criterion is {{
\begin{equation}\label{JIm}
\sup\limits_{a,\pi_I}J_{\alpha}(t,x,y,a,\pi_I)=\sup\limits_{a,\pi_I}
\left\{\alpha\inf\limits_{\phi}\underline{J}_{\alpha}^\phi(t,x,y,a,\pi_I)+\hat{\alpha}
\sup\limits_{\phi}\overline{J}_{\alpha}^\phi(t,x,y,a,\pi_I)\right\},
\end{equation}
where $\alpha\in[0.5,1]$ and $\hat{\alpha}=1-\alpha$. $\alpha$ represents the AAI's ambiguity attitude, and the AAI is more ambiguity aversion with a larger $\alpha$. 
\begin{equation}\label{JIp}
\begin{split}
\underline{J}_{\alpha}^\phi(t,x,y,a,\pi_I)=\mathbf{E}_{t,x,y}^\phi[X_I(T)]
-\frac{\gamma}{2}\rm{Var}_{t,x,y}^\phi[X_I(T)]+\mathbf{E}^\phi_{t,x,y}\left[\int_t^T h_{\mathbf{b}}(\phi(s))\dif s\right],\\
\overline{J}_{\alpha}^\phi(t,x,y,a,\pi_I)=\mathbf{E}_{t,x,y}^\phi[X_I(T)]
-\frac{\gamma}{2}\rm{Var}_{t,x,y}^\phi[X_I(T)]-\mathbf{E}^\phi_{t,x,y}\left[\int_t^T h_{\mathbf{b}}(\phi(s))\dif s\right],
\end{split}
\end{equation}
where $\gamma>0$ represents the AAI's risk aversion coefficient over financial and insurance risks.}}

After the AAI adopts the optimal strategy $(a^*,\pi_I^*)$, the optimization problem for the reinsurer is as follows
\begin{equation}\label{objecti}
\sup\limits_{\eta,\pi_R}\!JR_{\alpha_R}(t,x,y,\eta,\pi_R)\!=\sup\limits_{\eta,\pi_R}\!\left\{\!\alpha_R\inf\limits_{\phi}\underline{JR}_{\alpha_R}^\phi(t,x,y,\eta,\pi_R)\!+\!\hat{\alpha}_R\sup\limits_{\phi}\overline{JR}_{\alpha_R}^\phi(t,x,y,\eta,\pi_R)\!\right\}\!,
\end{equation}
where $\alpha_R\in[0.5,1]$ and $\hat{\alpha}_R=1-\alpha_R$.  $\alpha_R$ represents the AAR's ambiguity attitude, and the AAR is more ambiguity aversion with larger $\alpha_R$.
\begin{equation}\label{JRp}
\begin{split}
\underline{JR}_{\alpha_R}^\phi(t,x,y,\eta,\pi_R)=\mathbf{E}_{t,x,y}^\phi[X_R(T)]-\frac{\gamma_R}{2}\rm{Var}_{t,x,y}^\phi[X_R(T)]+\mathbf{E}^\phi_{t,x,y}\left[\int_t^T h_{\mathbf{b_r}}(\phi(s))\dif s\right],\\
\overline{JR}_{\alpha_R}^\phi(t,x,y,\eta,\pi_R)=\mathbf{E}_{t,x,y}^\phi[X_R(T)]-\frac{\gamma_R}{2}\rm{Var}_{t,x,y}^\phi[X_R(T)]-\mathbf{E}^\phi_{t,x,y}\left[\int_t^T h_{\mathbf{b_r}}(\phi(s))\dif s\right],
\end{split}
\end{equation}
where $\gamma_R>0$ represents the AAR's risk aversion coefficient of financial risk. The penalty term is given by
\begin{equation*}
\begin{split}
h_{\mathbf{b_r}}(\phi(s))=&\frac{1}{\beta_r}\int_0^\infty\left[(1-\phi_{{Z}}(s,z))\ln(1-\phi_{{Z}}(s,z))+\phi_{{Z}}(s,z)\right ]\nu(\dif z)\\
&+\frac{\phi_{0}(s)^2}{2\beta_{r0}}+\frac{\phi_{Y}(s)^2}{2\beta_{rY}},
\end{split}
\end{equation*}
where  $\beta_r$, $\beta_{r0}$ and $\beta_{rY}$ are positive constants representing the level of ambiguity for the AAR 
towards insurance risk, equity risk and volatility risk, respectively. Denote $\mathbf{b_r}=(\beta_r,\beta_{r0},\beta_{rY})$.

The procedure to solve the Stackelberg game consists of three steps:
\begin{enumerate}
	\item The AAR (leader) moves first by adopting any admissible strategy $u_R\in\Pi_R$.
	\item With response to the AAR's strategy, the AAI solves Problem (\ref{JIm}) and chooses the optimal  $u_I^*(\cdot,u_R)$.
	\item Knowing the AAI's behaviour $u_I^*(\cdot,u_R)$, the AAR solves Problem (\ref{objecti}) and derives the optimal strategy $u_R^*$.
\end{enumerate}

\vskip 20pt
\setcounter{equation}{0}
\section{{\bf{Equilibrium Strategy and Verification Theorem}}}
In this section, we present the verification theorem in the Stackelberg game between the AAI and the AAR. The mean-variance criterion is time-inconsistent, which makes Problems (\ref{JIm}) and (\ref{objecti})  time-inconsistent. Besides, the linear combination of the two mean-variance criteria also leads to time inconsistency. To solve the time-inconsistent problem, we first present the equilibrium strategy and extended HJB equations similar to  \cite{bjork2017time}.  Then, we derive the equilibrium reinsurance strategy, reinsurance premium, and investment strategies. To ensure the integrability condition, throughout this paper, we make the following integrability assumption on the L\'evy measure $\nu$.
\begin{Assumption}\label{ass1}
	(1) If there is ambiguity to the distribution of insurance claims, i.e., the function $\phi_{{Z}}(t,z):[0,T]\times[0,\infty)\rightarrow(-\infty,1)$, 
	we assume
	\begin{equation}\nonumber
		\begin{split}
			\int_{0}^1 z\nu(\mathrm{d}z)<\infty~\text{and}~\int_1^\infty e^{cz^2}\nu(\mathrm{d}z)<\infty,~\text{for~some}~c>0.\\
		\end{split}
	\end{equation}
	
	(2) If there is no ambiguity to the distribution of insurance claims, i.e., the function $\phi_{{Z}}(t):[0,T]\rightarrow(-\infty,1)$, 
	we assume
	\begin{equation}\nonumber
		\begin{split}
			\nu(0,\infty)<\infty~\text{and}~\int_1^\infty z^2\nu(\mathrm{d}z)<\infty.\\
		\end{split}
	\end{equation}
\end{Assumption}

In order to handle the time-inconsistent optimization  problems (\ref{JIm}) and (\ref{objecti}), we follow \cite{bjork2017time} to present the definition of equilibrium strategies for the two agents.
\begin{Def}\label{equilibrium}
For any fixed $(t,x,y)\in[0,T]\times\mathbb{R}\times\mathbb{R}^+$, consider an admissible strategy $u^*_I$. For any $u_I\in\Pi_I$ and $\epsilon>0$, define a new perturbed strategy $u^\epsilon_I$ by
\begin{equation}
u^\epsilon_I(s)=\begin{cases}
u_I,&t\leq s\leq t+\epsilon,\\
u^*_I(s),&t+\epsilon\leq s\leq T.
\end{cases}
\end{equation}
If for all $u_I\in\Pi_I$,
\begin{equation}
\liminf\limits_{h\rightarrow 0^+}\frac{J_\alpha(t,x,y,u^*_I)-J_\alpha(t,x,y,u^\epsilon_I)}{\epsilon}\geq 0,
\end{equation}
then $u^*_I$ is an equilibrium strategy  of the AAI and the equilibrium value function of the AAI is given by
\begin{equation}
V(t,x,y)=J_\alpha(t,x,y,u^*_I).
\end{equation}
\end{Def}
The definition of the equilibrium strategies for the AAR is similar and we omit it here. In the following, we present the verification theorems for the optimization problems of the AAI and AAR.

 First, we define the infinitesimal generator of $(X_I(t),Y(t))$ under $\mathbb{Q}$ for any function $f(t,x,y)\in C^{1,2,2}([0,T]\times\mathbb{R}\times{\mathbb{R}^+})$ by
\begin{equation}\label{geni}
\begin{split}
\mathcal{A}^{u_I,\phi}f(t,x,y)&=f_t+\left[\kappa(\delta-y)-\sigma\rho_0\sqrt{y}\phi_0-\sigma\rho\sqrt{y}\phi_Y\right]f_y\\
&+f_x\left\{rx+\!\!\int_0^\infty\![(\theta-\eta(t))z\!+\!(1+\eta)a(t,z)]\nu(\dif z)\!+\!\pi_I(\xi y-\sqrt{y}\phi_0)\right\}\\
&+\frac{1}{2}\pi_I^2 yf_{xx}+\frac{1}{2}\sigma^2 yf_{yy}+\sigma\pi_I yf_{xy}\\
&+\int_0^\infty\left[f(t,x-a(t,x),y)-f(t,x,y)\right ](1-\phi_{{Z}}(t,z))\nu(\dif z).
\end{split}
\end{equation}

Similarly, we also define the infinitesimal generator of $(X_R(t),Y(t))$ under $\mathbb{Q}$ for any function $f(t,x,y)\in C^{1,2,2}([0,T]\times\mathbb{R}\times{\mathbb{R}^+})$ by
\begin{equation}\label{genr}
\begin{split}
\mathcal{AR}^{u_R,\phi}f(t,x,y)&=f_t+[\kappa(\delta-y)-\sigma\rho_0\sqrt{y}\phi_0-\sigma\rho\sqrt{y}\phi_Y]f_y\\
&+f_x\left\{rx+\int_0^\infty(1+\eta)[z-a(t,z)]\nu(\dif z)+\pi_R(\xi y-\sqrt{y}\phi_0)\right\}\\
&+\frac{1}{2}\pi_R^2 yf_{xx}+\frac{1}{2}\sigma^2 yf_{yy}+\sigma\pi_R yf_{xy}\\
&+\int_0^\infty\left [f(t,x-z+a(t,x),y)-f(t,x,y)\right](1-\phi_{{Z}}(t,z))\nu(\dif z).
\end{split}
\end{equation}

Next, we present the verification theorems of the AAI and AAR separately. The verification theorem of Problem (\ref{JIm})  for the AAI is as follows.
\begin{theorem}[Verification theorem (AAI)]\label{THM4.1}
Suppose that there exist $V_I(t,x,y)$, $\underline{g}_I(t,x,y)$, $\overline{g}_I(t,x,y)\in C^{1,2,2}([0,T]\times\mathbb{R}\times\mathbb{R}^+)$ satisfying the following conditions{{:}}

(1) For any $(t,x,y)\in[0,T]\times\mathbb{R}\times\mathbb{R}^+$,
\begin{equation}\label{veri}
\begin{split}
0=\sup\limits_{u_I\in\Pi_I}&\left\{\alpha\inf\limits_{\phi\in\Theta}\left[\mathcal{A}^{u_I,\phi}V_I(t,x,y)-\frac{\gamma}{2}\mathcal{A}^{u_I,\phi}\underline{g}_I^2(t,x,y)\right.\right.\\
&\left.+\gamma\underline{g}_I(t,x,y)\mathcal{A}^{u_I,\phi}\underline{g}_I(t,x,y)+h_{\mathbf{b}}(\phi(t,y))\right]\\
&+\hat{\alpha}\sup\limits_{\phi\in\Theta}\left[\mathcal{A}^{u_I,\phi}V_I(t,x,y)-\frac{\gamma}{2}\mathcal{A}^{u_I,\phi}\overline{g}_I^2(t,x,y)\right.\\
&\left.\left.+\gamma\overline{g}_I(t,x,y)\mathcal{A}^{u_I,\phi}\overline{g}_I(t,x,y)-h_{\mathbf{b}}(\phi(t,y))\right]\right\}.
\end{split}
\end{equation}

(2) For any $(t,x,y)\in[0,T]\times\mathbb{R}\times\mathbb{R}^+$,
\begin{equation}\label{bondi}
\begin{cases}
V_I(T,x,y)=x,\\
\mathcal{A}^{u^*_I,\underline{\phi}^*}\underline{g}_I(t,x,y)=\mathcal{A}^{u^*_I,\overline{\phi}^*}\overline{g}_I(t,x,y)=0,\\
\underline{g}_I(T,x,y)=\overline{g}_I(T,x,y)=x.
\end{cases}
\end{equation}

(3) For any $(t,x,y)\in[0,T]\times\mathbb{R}\times\mathbb{R}^+$, $u^*(t)$, $\overline{\phi}^*(t,y)$ and $\underline{\phi}^*(t,y)$ are independent of $x$.

(4) $\underline{\phi}^*=\underline{\phi}^{u^*_I}$ and $\overline{\phi}^*=\overline{\phi}^{u^*_I}$.\\
Then $u_I^*$ is the equilibrium strategy of the AAI and the equilibrium value function is given by $V_I(t,x,y)$. Besides, $\overline{g}_I(t,x,y)=\mathbf{E}^{\overline{\phi}^*}_{t,x,y}[X_I^{u^*_I}(T)]$, and  $\underline{g}_I(t,x,y)=\mathbf{E}^{\underline{\phi}^*}_{t,x,y}[X_I^{u^*_I}(T)]$.
\end{theorem}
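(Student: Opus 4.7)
The plan is to adapt the time-inconsistent verification machinery of \cite{bjork2017time} to the present $\alpha$-maxmin setting, treating the pessimistic branch (governed by $\underline{g}_I$, $\underline{\phi}^*$) and the optimistic branch (governed by $\overline{g}_I$, $\overline{\phi}^*$) in parallel and then recombining them via the convex weights $\alpha$ and $\hat{\alpha}$. Fix $(t,x,y)$ and, for any admissible $u_I$, form the perturbation $u_I^\epsilon$ of Definition~\ref{equilibrium}. For each $\phi\in\Theta$, I decompose
\begin{equation*}
\mathrm{Var}^\phi_{t,x,y}[X_I(T)]=\mathbf{E}^\phi_{t,x,y}[X_I(T)^2]-\bigl(\mathbf{E}^\phi_{t,x,y}[X_I(T)]\bigr)^2,
\end{equation*}
so as to split $\underline{J}_\alpha^\phi$ and $\overline{J}_\alpha^\phi$ into a linear-in-distribution part and a squared-expectation part. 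The squared-expectation part is handled by introducing the auxiliary Feynman--Kac function $g^{u_I^\epsilon,\phi}(s,x,y):=\mathbf{E}^\phi_{s,x,y}[X_I^{u_I^\epsilon}(T)]$ and writing $(g^{u_I^\epsilon,\phi})^2$ as the solution of an auxiliary Cauchy problem with source; this isolates the time-inconsistency into exactly the extended-HJB correction $-\tfrac{\gamma}{2}\mathcal{A}g^{\,2}+\gamma g\mathcal{A}g$ appearing in \eqref{veri}.

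Next, I would apply Dynkin's formula under the twisted measure associated with $\phi$ to $V_I$, $\underline{g}_I^{\,2}$ and $\underline{g}_I$ on $[t,t+\epsilon]$ along the perturbed dynamics driven by $u_I$, combine the three identities with coefficients $1$, $-\gamma/2$, $\gamma\underline{g}_I$, and add the running penalty $\int_t^{t+\epsilon}h_{\mathbf{b}}(\phi)\dif s$; taking the infimum over $\phi$ and using condition~(1), the resulting bracket is $\le 0$ for every $u_I$ and attains $0$ at $u_I=u_I^*$ with optimizer $\underline{\phi}^*$. The symmetric argument with the opposite sign of the penalty delivers the corresponding upper bound on $\sup_\phi\overline{J}_\alpha^\phi$ in terms of $\overline{g}_I$ and $\overline{\phi}^*$. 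Multiplying by $\alpha$ and $\hat{\alpha}$ respectively, summing, and dividing by $\epsilon$ yields
\begin{equation*}
\liminf_{\epsilon\downarrow 0}\frac{J_\alpha(t,x,y,u_I^*)-J_\alpha(t,x,y,u_I^\epsilon)}{\epsilon}\ge 0,
\end{equation*}
which is precisely the equilibrium condition of Definition~\ref{equilibrium}. Condition~(2) together with the classical Feynman--Kac representation then identifies $\underline{g}_I(t,x,y)=\mathbf{E}^{\underline{\phi}^*}_{t,x,y}[X_I^{u_I^*}(T)]$ and $\overline{g}_I(t,x,y)=\mathbf{E}^{\overline{\phi}^*}_{t,x,y}[X_I^{u_I^*}(T)]$, and plugging the candidate $(u_I^*,\underline{\phi}^*,\overline{\phi}^*)$ back into the HJB shows $V_I(t,x,y)=J_\alpha(t,x,y,u_I^*)$.

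The hard part is the integrability and localization needed to legitimize the Dynkin expansions. One must first verify that $\Xi^{\underline{\phi}^*}$ and $\Xi^{\overline{\phi}^*}$ are true $\mathbb{P}$-martingales rather than merely local ones, so that the measure changes are valid on $[0,T]$; this uses conditions~(1)--(2) on $\Theta$ together with Assumption~\ref{ass1} to control the jump Girsanov kernel $\phi_Z$. Then the stochastic integrals produced by applying It\^o to $V_I$, $\underline{g}_I$ and $\underline{g}_I^{\,2}$ must be promoted from local martingales to genuine martingales, using the admissibility bound $\mathbf{E}^\phi[\int_0^T\pi_I(t)^2Y(t)\dif t]<\infty$, Assumption~\ref{ass1}(1) for the quadratic-exponential moment of jumps, and a standard stopping-time localization sending the stopping time to $T$. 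The main obstacle, however, is self-consistency condition~(4), $\underline{\phi}^*=\underline{\phi}^{u^*_I}$ and $\overline{\phi}^*=\overline{\phi}^{u^*_I}$: this fixed-point identity is what ensures that the Feynman--Kac $g$-functions constructed under the frozen worst/best-case measures actually coincide with the \emph{equilibrium} conditional means, and it is precisely what couples the outer maximization over $u_I$ with the inner inf and sup over $\phi$ in the $\alpha$-mixture, which is absent from the pure maxmin verification arguments in \cite{maenhout2004robust,yi2015robust}.
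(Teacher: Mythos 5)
Your high-level plan (perturb, decompose the variance, expand $V_I$, $\underline{g}_I$, $\underline{g}_I^{\,2}$, $\overline{g}_I$, $\overline{g}_I^{\,2}$ by Dynkin, invoke the extended-HJB inequality, recombine with weights $\alpha,\hat\alpha$) is the same as the paper's, and your identification of $\underline{g}_I,\overline{g}_I$ via Feynman--Kac from condition~(2) is also how the paper argues. However, there is a genuine gap in the recombination step.

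Writing $J^{u^\epsilon}_\alpha(t,x,y)-J^{u^*}_\alpha(t,x,y)$ out, the natural objects that appear are the infinitesimal increments of $\underline{J}^{u^*,\underline{\phi}^*}$ and $\overline{J}^{u^*,\overline{\phi}^*}$ \emph{separately}, i.e.\ terms of the form $\alpha\,\mathcal{A}^{\tilde u,\underline{\phi}^{\tilde u}}\underline{J}^{u^*,\underline{\phi}^*}+\hat\alpha\,\mathcal{A}^{\tilde u,\overline{\phi}^{\tilde u}}\overline{J}^{u^*,\overline{\phi}^*}$. The HJB inequality you invoke, condition~(1), is stated in terms of $\mathcal{A}^{u_I,\phi}V_I$, and since
\[
V_I=\alpha\,\underline{J}^{u^*,\underline{\phi}^*}+\hat\alpha\,\overline{J}^{u^*,\overline{\phi}^*},
\]
the quantity $\alpha\,\mathcal{A}^{\tilde u,\underline{\phi}^{\tilde u}}V_I+\hat\alpha\,\mathcal{A}^{\tilde u,\overline{\phi}^{\tilde u}}V_I$ actually expands into four terms, with a residual cross-term
\[
\alpha\hat\alpha\left(\mathcal{A}^{\tilde u,\overline{\phi}^{\tilde u}}-\mathcal{A}^{\tilde u,\underline{\phi}^{\tilde u}}\right)\left[\overline{J}^{u^*,\overline{\phi}^*}-\underline{J}^{u^*,\underline{\phi}^*}\right].
\]
Your sketch silently drops this term, which is only justified in the pure maxmin case $\alpha=1$. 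The paper handles it by first proving a lemma asserting that $\overline{J}^{u,\overline{\phi}^u}-\underline{J}^{u,\underline{\phi}^u}$ is independent of $x$ (using the conditional independence of the discounted wealth increments given $\mathcal F^Y$, which in turn relies on condition~(3) and the affine structure of the wealth dynamics), so that applying the generators to this $x$-independent difference gives precisely what is needed to discard the cross-term. You do flag condition~(4) as the main obstacle, but it is really this $x$-independence, feeding off conditions~(3)--(4), that is the load-bearing step missing from your argument; without it the passage from the $V_I$-form of the HJB to the $\underline J,\overline J$-form of the increment is unjustified. The same lemma (its third part) is also what legitimizes replacing the inf/sup over $\phi$ on $[t,T]$ by the inf/sup on the short interval $[t,t+\epsilon]$ with the \emph{same} minimizers/maximizers, another step you pass over implicitly. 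Your remarks on martingale integrability and Girsanov validity are correct but secondary; the genuinely missing ingredient is the $x$-independence lemma.
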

\begin{proof}
	See Appendix \ref{A.1}.
\end{proof}

The verification theorem of Problem  (\ref{objecti}) for the AAR  is as follows.
\begin{theorem}[Verification theorem (AAR)]\label{THM4.2}
Suppose that there exist $V_R(t,x,y)$, $\underline{g}_R(t,x,y)$, $\overline{g}_R(t,x,y)\in C^{1,2,2}([0,T]\times\mathbb{R}\times\mathbb{R}^+)$ satisfying the following conditions

(1) For any $(t,x,y)\in[0,T]\times\mathbb{R}\times\mathbb{R}^+$,
\begin{equation*}
\begin{split}
0=\sup\limits_{u_R\in\Pi_I}&\left\{\alpha_R\inf\limits_{\phi\in\Theta}\left[\mathcal{AR}^{u_R,\phi}V_R(t,x,y)-\frac{\gamma_R}{2}\mathcal{AR}^{u_R,\phi}\underline{g}_R^2(t,x,y)\right.\right.\\
&\left.+\gamma_R\underline{g}_R(t,x,y)\mathcal{AR}^{u_R,\phi}\underline{g}_R(t,x,y)+h_{\mathbf{b_r}}(\phi(t,y))\right]\\
&+\hat{\alpha}_R\sup\limits_{\phi\in\Theta}\left[\mathcal{AR}^{u_R,\phi}V_R(t,x,y)-\frac{\gamma_R}{2}\mathcal{AR}^{u_R,\phi}\overline{g}_R^2(t,x,y)\right.\\
&\left.\left.+\gamma_R\overline{g}_R(t,x,y)\mathcal{AR}^{u_R,\phi}\overline{g}_R(t,x,y)-h_{\mathbf{b_r}}(\phi(t,y))\right]\right\}.
\end{split}
\end{equation*}

(2) For any $(t,x,y)\in[0,T]\times\mathbb{R}\times\mathbb{R}^+$,
\begin{equation*}
\begin{cases}
V_R(T,x,y)=x,\\
\mathcal{AR}^{u^*_R,\underline{\phi}^*}\underline{g}_R(t,x,y)=\mathcal{AR}^{u^*_R,\overline{\phi}^*}\overline{g}_R(t,x,y)=0,\\
\underline{g}_R(T,x,y)=\overline{g}_R(T,x,y)=x.
\end{cases}
\end{equation*}

(3) For any $(t,x,y)\in[0,T]\times\mathbb{R}\times\mathbb{R}^+$, $u^*(t)$, $\overline{\phi}^*(t,y)$ {{and}} $\underline{\phi}^*(t,y)$ are independent of $x$.

(4) $\underline{\phi}^*=\underline{\phi}^{u^*_I}$ and $\overline{\phi}^*=\overline{\phi}^{u^*_I}$.\\
Then $u_R^*$ is the equilibrium strategy of the AAR and the equilibrium value function is given by $V_R(t,x,y)$. Besides, $\overline{g}_R(t,x,y)=\mathbf{E}^{\overline{\phi}^*}_{t,x,y}[X_R^{u^*_R}(T)]$, and  $\underline{g}_R(t,x,y)=\mathbf{E}^{\underline{\phi}^*}_{t,x,y}[X_R^{u^*_R}(T)]$.
\end{theorem}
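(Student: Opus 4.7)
The plan is to adapt the proof of Theorem \ref{THM4.1} almost verbatim, since the AAR's optimization problem has structurally the same form as the AAI's: the generator $\mathcal{AR}^{u_R,\phi}$ replaces $\mathcal{A}^{u_I,\phi}$, the risk aversion $\gamma_R$ replaces $\gamma$, the mixing weight $\alpha_R$ replaces $\alpha$, and the penalty $h_{\mathbf{b_r}}$ replaces $h_{\mathbf{b}}$. I would fix $(t,x,y)\in[0,T]\times\mathbb{R}\times\mathbb{R}^+$, pick any $u_R\in\Pi_R$, and build the perturbed control $u_R^\epsilon$ from Definition \ref{equilibrium}. The target is to establish
\[\liminf_{\epsilon\to 0^+}\frac{JR_{\alpha_R}(t,x,y,u_R^*)-JR_{\alpha_R}(t,x,y,u_R^\epsilon)}{\epsilon}\ge 0\]
and to identify $V_R$ together with the two conditional-expectation representations.

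First I would rewrite each half of the AAR's objective under the measure $\mathbb{Q}$ generated by $\phi\in\Theta$ as
\begin{equation*}
\underline{JR}^\phi_{\alpha_R}(t,x,y,u_R^\epsilon)=V_R(t,x,y)+\mathbf{E}^\phi_{t,x,y}\!\left[\int_t^T\!\!\Big(\mathcal{AR}^{u_R^\epsilon,\phi}V_R-\tfrac{\gamma_R}{2}\mathcal{AR}^{u_R^\epsilon,\phi}\underline{g}_R^{\,2}+\gamma_R\underline{g}_R\,\mathcal{AR}^{u_R^\epsilon,\phi}\underline{g}_R+h_{\mathbf{b_r}}(\phi)\Big)\mathrm{d}s\right],
\end{equation*}
which follows from It\^o's formula applied to $V_R$, $\underline{g}_R$ and $\underline{g}_R^{\,2}$ along $(X_R,Y)$ under $\mathbb{Q}$, the identity $\mathrm{Var}^\phi[X_R(T)]=\mathbf{E}^\phi[X_R(T)^2]-(\mathbf{E}^\phi[X_R(T)])^2$, and the terminal conditions in (2). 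The analogous identity holds for $\overline{JR}^\phi_{\alpha_R}$ with $\overline{g}_R$ and penalty $-h_{\mathbf{b_r}}$. On $[t+\epsilon,T]$, conditions (3) and (4) force $\underline{\phi}^*$ and $\overline{\phi}^*$ to be the pointwise optimizers, so the infimum and supremum over $\Theta$ collapse to a pointwise optimization of the integrand on the short interval $[t,t+\epsilon)$ alone.

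Next I would form the $\alpha_R$-convex combination, divide by $\epsilon$, and let $\epsilon\to 0^+$. By the mean value theorem the limit reduces to the bracketed quantity in condition (1) evaluated at $u_R^*(t,x,y)$, which is non-negative by (1); hence $u_R^*$ satisfies Definition \ref{equilibrium}. Setting $u_R\equiv u_R^*$ (so $u_R^\epsilon=u_R^*$ throughout) in the same telescoping identity and invoking $\mathcal{AR}^{u_R^*,\underline{\phi}^*}\underline{g}_R=\mathcal{AR}^{u_R^*,\overline{\phi}^*}\overline{g}_R=0$ from (2) produces $V_R(t,x,y)=JR_{\alpha_R}(t,x,y,u_R^*)$, and the same martingale argument with terminal data $\underline{g}_R(T,\cdot,\cdot)=\overline{g}_R(T,\cdot,\cdot)=x$ yields $\underline{g}_R=\mathbf{E}^{\underline{\phi}^*}[X_R^{u_R^*}(T)]$ and $\overline{g}_R=\mathbf{E}^{\overline{\phi}^*}[X_R^{u_R^*}(T)]$.

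The hard part will be the simultaneous book-keeping of three coupled HJB-type equations (for $V_R$ and for the pairs $\underline{g}_R,\underline{g}_R^{\,2}$ and $\overline{g}_R,\overline{g}_R^{\,2}$) together with the $\alpha_R$-mixing of an inf and a sup over $\Theta$: one has to justify exchanging $\inf/\sup$ with the expectation under the jump-diffusion $\mathbb{Q}$-dynamics, and verify the fixed-point identification $\underline{\phi}^*=\underline{\phi}^{u_R^*}$, $\overline{\phi}^*=\overline{\phi}^{u_R^*}$ of condition (4). Assumption \ref{ass1} is indispensable here: its exponential-moment control on $\nu$ furnishes uniform integrability of the Girsanov density $\Xi^\phi$ and of the compensated Poisson integrals against $(1-\phi_{{Z}}(t,z))\nu(\mathrm{d}z)$, so the relevant stochastic integrals are genuine $\mathbb{Q}$-martingales and the Fubini-type interchanges needed for the $\epsilon\to 0^+$ limit are legitimate.
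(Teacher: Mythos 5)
Your approach is exactly the one the paper takes: the paper's own proof of Theorem~\ref{THM4.2} is a one-line remark that it ``is similar to Theorem~\ref{THM4.1} and we omit it here,'' and your plan—carry the proof of Theorem~\ref{THM4.1} over verbatim, substituting $\mathcal{AR}^{u_R,\phi}$, $\gamma_R$, $\alpha_R$, $h_{\mathbf{b_r}}$, and $X_R$ for their insurer counterparts—is precisely that. One item you should mention explicitly rather than bury under ``verbatim adaptation'': the proof of Theorem~\ref{THM4.1} is preceded by a three-part lemma (conditional independence of discounted-wealth increments given $Y$; $x$-independence of $\overline{J}^{u,\overline{\phi}^u}-\underline{J}^{u,\underline{\phi}^u}$; locality of the optimizing distortions $\underline{\phi}^u,\overline{\phi}^u$ in $(t,Y(t),u(t))$), and all three parts are used non-trivially in the equilibrium-perturbation step—in particular the $x$-independence is what collapses $\alpha_R\mathcal{AR}^{\tilde u,\underline{\phi}^{\tilde u}}V_R+\hat\alpha_R\mathcal{AR}^{\tilde u,\overline{\phi}^{\tilde u}}V_R$ into $\alpha_R\mathcal{AR}^{\tilde u,\underline{\phi}^{\tilde u}}\underline{JR}+\hat\alpha_R\mathcal{AR}^{\tilde u,\overline{\phi}^{\tilde u}}\overline{JR}$, a step your sketch glosses over when you invoke ``the mean value theorem.'' Since the reinsurer's wealth dynamics (\ref{XRQ}) have the same linear, additively separable structure in $(X_R,Y)$ as (\ref{XIQ}), that lemma transfers word for word, so the adaptation is indeed routine; just flag it.
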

\begin{proof}
	The proof is similar to Theorem \ref{THM4.1} and we omit it here.
\end{proof}

\setcounter{equation}{0}
\section{\bf Equilibrium strategies in the Stackelberg Game}
In this section, we derive the equilibrium strategies of the Stackelberg game based on Theorems \ref{THM4.1} and \ref{THM4.2}. First, we derive the equilibrium reinsurance strategy for a given reinsurance premium. Then, knowing the response of the AAI, the AAR finds the equilibrium reinsurance premium.

\subsection{{$\alpha$-}robust equilibrium strategies of the AAI with given $\eta$}
 As stated above, we need to solve Problem  (\ref{JIm}) first. In the subsection, we present our main results on the solution to the equilibrium reinsurance-investment problem for the AAI with a given reinsurance premium $\eta$.

\begin{theorem}\label{insurer}
For the AAI with a given reinsurance premium $\eta$,
\begin{enumerate}
	\item The feedback equilibrium reinsurance strategy is the excess-of-loss reinsurance
$$\hat{a}(t,z;\eta)=a_0(t;\eta)e^{-r(T-t)}\wedge z.$$
	\item  For a given reinsurance premium $\eta$,  the equilibrium retention level $a_0$ is determined as follows
\begin{equation}\label{a}
0=(1+\eta)- \left(1+\gamma a_0\right)\left[\alpha e^{\beta\left(a_0+\frac{\gamma}{2}a_0^2\right)}+\hat{\alpha}e^{-\beta\left(a_0+\frac{\gamma}{2}a_0^2\right)}\right].
\end{equation}
	\item The equilibrium investment strategy $\pi_I^*(t)$ is as follows 
\begin{equation}\label{pii}
\begin{split}
\pi_I^*(t)=\frac{\xi -(2\alpha-1)\beta_0\rho_0\sigma A(t)-\gamma\sigma\rho_0\left(\alpha\underline{H}(t)+\hat{\alpha}\overline{H}(t)\right)}{[\gamma+(2\alpha-1)\beta_0]e^{r(T-t)}}.
\end{split}
\end{equation}

	\item  The corresponding equilibrium value function is given by
\begin{equation*}
V(t,x)=e^{r(T-t)}x+A(t)y+B(t),
\end{equation*}
where {{ $A'(t)$, $\underline{H}'(t)$ and $\overline{H}'(t)$ are solutions to the following Riccati differential equations
\begin{equation}\label{ABHH}
\begin{split}
A'(t)=&\kappa A(t)+\frac{1}{2}(2\alpha-1)\sigma^2[\beta_0\rho_0^2+\beta_Y(1-\rho_0^2)]A(t)^2+\frac{\gamma}{2}\sigma^2\left[\alpha\underline{H}(t)^2+\hat{\alpha}\overline{H}(t)^2\right]\\
&-\frac{1}{2[\gamma+(2\alpha-1)\beta_0]}\left[\xi -(2\alpha-1)\beta_0\rho_0\sigma A(t)-\gamma\sigma\rho_0\left(\alpha\underline{H}(t)+\hat{\alpha}\overline{H}(t)\right)\right]^2,\\
\underline{H}'(t)=&\kappa \underline{H}(t)-\xi\frac{\xi -(2\alpha-1)\beta_0\rho_0\sigma A(t)-\gamma\sigma\rho_0\left(\alpha\underline{H}(t)+\hat{\alpha}\overline{H}(t)\right)}{\gamma+(2\alpha-1)\beta_0}\\
&+\beta_0\left[\frac{\xi -(2\alpha-1)\beta_0\rho_0\sigma A(t)-\gamma\sigma\rho_0\left(\alpha\underline{H}(t)+\hat{\alpha}\overline{H}(t)\right)}{\gamma+(2\alpha-1)\beta_0}\right]^2\\
&+\sigma\beta_0\frac{\xi -(2\alpha-1)\beta_0\rho_0\sigma A(t)-\gamma\sigma\rho_0\left(\alpha\underline{H}(t)+\hat{\alpha}\overline{H}(t)\right)}{\gamma+(2\alpha-1)\beta_0}\left(A(t)+\underline{H}(t)\right)\\
&+\sigma^2(\beta_0\rho_0^2+\beta_Y\rho^2)A(t)\underline{H}(t),\\
\overline{H}'(t)=&\kappa \overline{H}(t)-\xi\frac{\xi -(2\alpha-1)\beta_0\rho_0\sigma A(t)-\gamma\sigma\rho_0\left(\alpha\underline{H}(t)+\hat{\alpha}\overline{H}(t)\right)}{\gamma+(2\alpha-1)\beta_0}\\
&-\beta_0\left[\frac{\xi -(2\alpha-1)\beta_0\rho_0\sigma A(t)-\gamma\sigma\rho_0\left(\alpha\underline{H}(t)+\hat{\alpha}\overline{H}(t)\right)}{\gamma+(2\alpha-1)\beta_0}\right]^2\\
&-\sigma\beta_0\frac{\xi -(2\alpha-1)\beta_0\rho_0\sigma A(t)-\gamma\sigma\rho_0\left(\alpha\underline{H}(t)+\hat{\alpha}\overline{H}(t)\right)}{\gamma+(2\alpha-1)\beta_0}\left(A(t)+\overline{H}(t)\right)\\
&-\sigma^2(\beta_0\rho_0^2+\beta_Y\rho^2)A(t)\overline{H}(t),
\end{split}
\end{equation}}}
and
{{
\begin{equation*}\nonumber
\begin{split}
B(t)=\int_t^T&\left\{\kappa\lambda A(s)+(\theta-\eta(s))e^{r(T-s)}\int_0^\infty z\nu(\mathrm{d}z)+(1+\eta(s))\int_0^\infty a_0\wedge ze^{r(T-s)}\nu(\mathrm{d}z)\right.\\
&+\frac{\alpha}{\beta}\int_{0}^\infty\left\{1-e^{\beta\left[a_0\wedge ze^{r(T-s)}+\frac{\gamma}{2}(a_0\wedge ze^{r(T-s)})^2\right]}\right\}\nu(\mathrm{d}z)\\
&\left.-\frac{\hat{\alpha}}{\beta}\int_{0}^\infty\left\{1-e^{-\beta\left[a_0\wedge ze^{r(T-s)}+\frac{\gamma}{2}(a_0\wedge ze^{r(T-s)})^2\right]}\right\}\nu(\mathrm{d}z)\right\}\dif s.
\end{split}
\end{equation*}}}


\item The associated probability distortion functions of extremely ambiguity-averse measure and the extremely ambiguity-seeking measure are given respectively by
\begin{equation}\nonumber
\begin{cases}
\underline{\phi}_0(t,y)=\beta_0\sqrt{y}\left[\frac{\xi -(2\alpha-1)\beta_0\rho_0\sigma A(t)-\gamma\sigma\rho_0\left(\alpha\underline{H}(t)+\hat{\alpha}\overline{H}(t)\right)}{\gamma+(2\alpha-1)\beta_0}+\sigma\rho_0 A(t)\right],\\
\underline{\phi}_Y(t,y)=\sigma\beta_Y\rho A(t)\sqrt{y},\\
\underline{\phi}_{{Z}}(t,x,z)=1-e^{\beta\left[a_0\wedge ze^{r(T-s)}+\frac{\gamma}{2}(a_0\wedge ze^{r(T-s)})^2\right]},
\end{cases}
\end{equation}
and
\begin{equation}\nonumber
\begin{cases}
\overline{\phi}_0(t,y)=-\beta_0\sqrt{y}\left[\frac{\xi -(2\alpha-1)\beta_0\rho_0\sigma A(t)-\gamma\sigma\rho_0\left(\alpha\underline{H}(t)+\hat{\alpha}\overline{H}(t)\right)}{\gamma+(2\alpha-1)\beta_0}+\sigma\rho_0 A(t)\right],\\
\overline{\phi}_Y(t,y)=-\sigma\beta_Y\rho A(t)\sqrt{y},\\
\overline{\phi}_{{Z}}(t,x,z)=1-e^{-\beta\left[a_0\wedge ze^{r(T-s)}+\frac{\gamma}{2}(a_0\wedge ze^{r(T-s)})^2\right]}.
\end{cases}
\end{equation}
\end{enumerate}
\end{theorem}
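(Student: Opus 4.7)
The plan is to construct explicit solutions $V_I,\underline{g}_I,\overline{g}_I$ to the extended HJB system of Theorem~\ref{THM4.1} and then identify the optimizers. Because the wealth $X_I$ enters its dynamics only through the discount $rx$ and because the terminal data in (\ref{bondi}) are $x$, the natural ansatz is
\[
V_I(t,x,y)=e^{r(T-t)}x+A(t)y+B(t),\qquad \underline{g}_I(t,x,y)=e^{r(T-t)}x+\underline{H}(t)y+\underline{I}(t),
\]
with $\overline{g}_I$ of the same shape with coefficients $\overline{H},\overline{I}$, and all functions vanishing at $t=T$. The key algebraic simplification is the carré-du-champ identity
\[
\gamma g\,\mathcal{A}^{u_I,\phi}g-\tfrac{\gamma}{2}\mathcal{A}^{u_I,\phi}g^{2}=-\tfrac{\gamma}{2}\Gamma^{\phi}(g),
\]
where $\Gamma^\phi(g)$ only collects the diffusion quadratic form plus the jump square $\int(g(t,x-a,y)-g(t,x,y))^2(1-\phi_{Z})\nu(\dif z)$; in particular all drift-only dependence on $(\phi_0,\phi_Y)$ drops out of the $\gamma$-correction, which is what makes the problem tractable.

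The next step is the inner optimisation over $\phi=(\phi_0,\phi_Y,\phi_{Z})\in\Theta$. For $(\phi_0,\phi_Y)$ the problem is strictly convex-concave with quadratic penalty $\phi_0^2/(2\beta_0)+\phi_Y^2/(2\beta_Y)$, and the first-order conditions against $\mathcal{A}^{u_I,\phi}V_I$ give closed forms which, once $\pi_I^*$ is substituted, become the $\underline{\phi}_0,\underline{\phi}_Y$ and the opposite-signed $\overline{\phi}_0,\overline{\phi}_Y$ announced in the theorem. For the jump distortion I use $\partial_{\phi_{Z}}[(1-\phi_{Z})\ln(1-\phi_{Z})+\phi_{Z}]=-\ln(1-\phi_{Z})$; since the indemnity enters linearly in $\mathcal{A}V_I$ and quadratically in $\Gamma(\underline{g}_I)$, both weighted by $e^{r(T-t)}$, the FOC yields $1-\underline{\phi}_{Z}=\exp\{\beta[e^{r(T-t)}a+\tfrac{\gamma}{2}(e^{r(T-t)}a)^2]\}$ and the reciprocal for $\overline{\phi}_{Z}$. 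After substitution a striking cancellation occurs: the log-penalty exactly kills the quadratic $\Gamma$-penalty and the linear drift contribution from $\mathcal{A}V_I$, leaving only $\tfrac{1}{\beta}(1-e^{\pm\beta\psi})$ with $\psi:=e^{r(T-t)}a+\tfrac{\gamma}{2}(e^{r(T-t)}a)^2$.

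With $\phi^*$ substituted, the outer supremum decouples into an $a$-problem and a $\pi_I$-problem because $V_I$ is affine in $x$. The FOC in $\pi_I$ is linear, its quadratic coefficient being the effective weight $\gamma+(2\alpha-1)\beta_0$ (the $\gamma$ comes from $\alpha\Gamma(\underline{g}_I)+\hat{\alpha}\Gamma(\overline{g}_I)$, the $(2\alpha-1)\beta_0$ from the $\phi_0$-penalty weighted by $\alpha-\hat{\alpha}$), and one directly reads off (\ref{pii}). For reinsurance, the Hamiltonian in $a$ is pointwise in $z$ and the $z$-independent FOC in the unconstrained variable $a_0:=e^{r(T-t)}a$ is
\[
e^{r(T-t)}(1+\eta)=e^{r(T-t)}(1+\gamma a_0)\bigl[\alpha e^{\beta\psi}+\hat{\alpha}e^{-\beta\psi}\bigr],
\]
i.e.\ equation (\ref{a}). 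Because the Hamiltonian is strictly concave in $a$ (as $\alpha\ge 1/2$ keeps $\alpha e^{\beta\psi}+\hat{\alpha}e^{-\beta\psi}$ convex and increasing in $\psi$), imposing $a(t,z)\in[0,z]$ forces $\hat{a}(t,z;\eta)=a_0e^{-r(T-t)}\wedge z$, the excess-of-loss form.

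Finally, substituting $u^*_I$ and $\phi^*$ back and matching coefficients of $y$ and the constant term in the HJB equation produces the Riccati system (\ref{ABHH}) for $A,\underline{H},\overline{H}$ (the auxiliary pair comes from the second line of (\ref{bondi})) and the integral formula for $B$; conditions (3)-(4) of Theorem~\ref{THM4.1} hold because $u^*,\underline{\phi}^*,\overline{\phi}^*$ depend only on $(t,y)$ and $\underline{\phi}^*,\overline{\phi}^*$ are constructed from $u^*_I$. The main obstacle is the $\alpha$-coupled Riccati triple: unlike the pure maxmin case ($\alpha=1$), the system mixes $A,\underline{H},\overline{H}$ through the combinations $\alpha\underline{H}+\hat{\alpha}\overline{H}$ and the effective weight $\gamma+(2\alpha-1)\beta_0$, so neither $\underline{H}$ nor $\overline{H}$ decouples; global existence of this triple will require a separate a priori bound (handled elsewhere in the paper). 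A secondary technical point is justifying the pointwise cap at $z$ and verifying the admissibility conditions for $u^*_I$, both of which follow from the explicit linear-quadratic form of $\pi_I^*$ and the Feller positivity condition $2\kappa\delta\ge\sigma^2$ imposed on $Y$.
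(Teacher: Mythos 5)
Your proposal is correct and follows essentially the same route as the paper's Appendix B: the same affine ansatz $V_I=e^{r(T-t)}x+A(t)y+B(t)$ with like-shaped $\underline{g}_I,\overline{g}_I$, the same first-order conditions for $(\phi_0,\phi_Y,\phi_Z)$ and then for $(\pi_I,a)$, the same concavity argument giving $\hat a=a_0 e^{-r(T-t)}\wedge z$, and coefficient-matching to the Riccati triple and to $B$. Your explicit carré-du-champ identity is a cleaner way to state why the $\gamma$-correction drops the drift (the paper does this implicitly when it writes $V_{xx}-\alpha\gamma\underline g_x^2-\hat\alpha\gamma\overline g_x^2$, etc.); the one item you defer -- global existence for the coupled Riccati system -- the paper handles inside this very proof by recasting (\ref{ABHH}) as a matrix Riccati equation and invoking Lemma 3 of \cite{Yan2020} under an explicit smallness condition on $T$.
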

\begin{proof}
	See Appendix \ref{A.2}.
\end{proof}
Notice that retention level $\hat{a}$ still depends on the reinsurance premium $\eta$, which is the feedback {equilibrium reinsurance} strategy to a given reinsurance premium $\eta$. From the above theorem, we see that under the $\alpha$-maxmin mean-variance criterion, the optimal per-loss reinsurance is the excess-of-loss reinsurance. The policy limit of the reinsurance business is $a_0(t;\eta)e^{-r(T-t)}$, where $a_0$ is derived from Eq.~(\ref{a}). The equilibrium investment strategy is a deterministic function and is composed of two parts: one part to manage equity risk and another part to manage volatility risk.

\vskip 20pt
\subsection{{$\alpha$-}robust equilibrium strategies of the AAR} 
After the AAI adopts the equilibrium strategies, the AAR solves Problem  (\ref{objecti}) and derives the optimal strategy $u_R^*$.  We give the equilibrium reinsurance premium $\eta^*$ of the {$\alpha$-}robust Stackelberg game in the following theorem.
\begin{theorem}\label{reinsurer}
For the AAR,
\begin{enumerate}
	\item  The equilibrium reinsurance premium $\eta^*$ of the {$\alpha$-}robust Stackelberg game is determined by the following equation
\begin{equation}\label{eta}
\begin{split}
0&=\int_{a_0(t;\eta^*)e^{-r(T-t)}}^\infty\left\{\left[ze^{r(T-t)}-a_0(t;\eta^*)-(1+\eta^*) \frac{\partial a_0}{\partial \eta}\right]\right.\\
&+\alpha_R\frac{\partial a_0}{\partial \eta}\left[1+\gamma_R\left(ze^{r(T-t)}-a_0(t;\eta^*)\right)\right]e^{\beta_r\left[ze^{r(T-t)}-a_0(t;\eta^*)+\frac{\gamma_R}{2}(ze^{r(T-t)}-a_0(t;\eta^*))^2\right]}\\
&\left.+\hat{\alpha}_R\frac{\partial a_0}{\partial \eta}\left[1+\gamma_R\left(ze^{r(T-t)}-a_0(t;\eta^*)\right)\right]e^{-\beta_r\left[ze^{r(T-t)}-a_0(t;\eta^*)+\frac{\gamma_R}{2}(ze^{r(T-t)}-a_0(t;\eta^*))^2\right]}\right\}\nu(\mathrm{d}z),
\end{split}
\end{equation}
where
\begin{equation}\label{a0}
0=(1+\eta^*)- \left(1+\gamma a_0\right)\left[\alpha e^{\beta\left(a_0+\frac{\gamma}{2}a_0^2\right)}+\hat{\alpha}e^{-\beta\left(a_0+\frac{\gamma}{2}a_0^2\right)}\right]
\end{equation}
and
\begin{equation*}
\begin{split}
1=\frac{\partial a_0}{\partial \eta}&\left\{\gamma \left[\alpha e^{\beta\left(a_0+\frac{\gamma}{2}a_0^2\right)}+\hat{\alpha}e^{-\beta\left(a_0+\frac{\gamma}{2}a_0^2\right)}\right]\right.\\
&+\left.\beta\left(1+\gamma a_0\right)^2\left[\alpha e^{\beta\left(a_0+\frac{\gamma}{2}a_0^2\right)}-\hat{\alpha}e^{-\beta\left(a_0+\frac{\gamma}{2}a_0^2\right)}\right]\right\}.
\end{split}
\end{equation*}
\item The equilibrium investment strategy $\pi_R^*(t)$ is
\begin{equation}\label{pir}
\begin{split}
\pi_R^*(t)=\frac{\xi -(2\alpha_R-1)\beta_{rY}\sigma A_R(t)-\gamma_R\sigma\rho_0\left(\alpha_R\underline{H}_R(t)+\hat{\alpha}_R\overline{H}_R(t)\right)}{[\gamma_R+(2\alpha_R-1)\beta_{r0}]e^{r(T-t)}}.
\end{split}
\end{equation}

\item The corresponding equilibrium value function is given by
\begin{equation*}
V_R(t,x)=e^{r(T-t)}x+A_R(t)y+B_R(t),
\end{equation*}
where {{$A'_R(t)$, $\underline{H}'_R(t)$ and $\overline{H}'_R(t)$ are solutions to the following Riccati differential equations
\begin{equation}\label{ARBHH}
\begin{split}
A_R'(t)=&\kappa A_R(t)+\frac{1}{2}(2\alpha_R-1)\sigma^2
(\beta_{r0}\rho_0^2+\beta_{rY}\rho^2)
A_R(t)^2+\frac{\gamma_R}{2}\sigma^2\left[\alpha_R\underline{H}_R(t)^2\right.\left.+\hat{\alpha}_R\overline{H}_R(t)^2\right]\\
&-\frac{\left[\xi -(2\alpha_R-1)\beta_{r0}\rho_0\sigma A_R(t)-\gamma_R\sigma\rho_0\left(\alpha_R\underline{H}_R(t)+\hat{\alpha}_R\overline{H}_R(t)\right)\right]^2}{2[\gamma_R+(2\alpha_R-1)\beta_{r0}]},\\
\underline{H}_R'(t)=&\kappa \underline{H}_R(t)-\xi\frac{\xi -(2\alpha_R-1)\beta_{r0}\rho_0\sigma A_R(t)-\gamma_R\sigma\rho_0\left(\alpha_R\underline{H}_R(t)+\hat{\alpha}_R\overline{H}_R(t)\right)}{\gamma_R+(2\alpha_R-1)\beta_{r0}}\\
&+\beta_{r0}\left[\frac{\xi -(2\alpha_R-1)\beta_{r0}\rho_0\sigma A_R(t)-\gamma_R\sigma\rho_0\left(\alpha_R\underline{H}_R(t)+\hat{\alpha}_R\overline{H}_R(t)\right)}{\gamma_R+(2\alpha_R-1)\beta_{r0}}\right]^2\\
&+\sigma\beta_{r0}\frac{\xi -(2\alpha_R-1)\beta_{r0}\rho_{r0}\sigma A_R(t)-\gamma_R\sigma\rho_0\left(\alpha_R\underline{H}_R(t)+\hat{\alpha}_R\overline{H}_R(t)\right)}{\gamma_R+(2\alpha_R-1)\beta_{r0}}\left(A_R(t)+\underline{H}_R(t)\right)\\
&+\sigma^2(\beta_{r0}\rho_0^2+\beta_{rY}\rho^2)A_R(t)\underline{H}_R(t),\\
\overline{H}_R'(t)=&\kappa \overline{H}_R(t)-\xi\frac{\xi -(2\alpha_R-1)\beta_{r0}\rho_0\sigma A_R(t)-\gamma_R\sigma\rho_0\left(\alpha_R\underline{H}_R(t)+\hat{\alpha}_R\overline{H}_R(t)\right)}{\gamma_R+(2\alpha_R-1)\beta_{r0}}\\
&-\beta_{r0}\left[\frac{\xi -(2\alpha_R-1)\beta_{r0}\rho_0\sigma A_R(t)-\gamma_R\sigma\rho_0\left(\alpha_R\underline{H}_R(t)+\hat{\alpha}_R\overline{H}_R(t)\right)}{\gamma_R+(2\alpha_R-1)\beta_{r0}}\right]^2\\
&-\sigma\beta_{r0}\frac{\xi -(2\alpha_R-1)\beta_{r0}\rho_0\sigma A_R(t)-\gamma_R\sigma\rho_0\left(\alpha_R\underline{H}_R(t)+\hat{\alpha}_R\overline{H}_R(t)\right)}{\gamma_R+(2\alpha_R-1)\beta_{r0}}\left(A_R(t)+\overline{H}_R(t)\right)\\
&-\sigma^2(\beta_{r0}\rho_0^2+\beta_{rY}\rho^2)A_R(t)\overline{H}_R(t),
\end{split}
\end{equation}
and
\begin{equation}\nonumber
\begin{split}
B_R(t)=\int_t^T&\left\{\kappa\lambda A_R(s)+(1+\eta^*(s))e^{r(T-s)}\int_0^\infty (z-a^*(t,z))\nu(\mathrm{d}z)\right.\\
&+\frac{\alpha_R}{\beta_r}\int_{0}^\infty\left\{1-e^{\beta_r\left[a_0^*\wedge ze^{r(T-s)}+\frac{\gamma_R}{2}(a_0^*\wedge ze^{r(T-s)})^2\right]}\right\}\nu(\mathrm{d}z)\\
&\left.-\frac{\hat{\alpha}_R}{\beta_r}\int_{0}^\infty\left\{1-e^{-\beta\left[a_0^*\wedge ze^{r(T-s)}+\frac{\gamma_R}{2}(a_0^*\wedge ze^{r(T-s)})^2\right]}\right\}\nu(\mathrm{d}z)\right\}\dif s.
\end{split}
\end{equation}}}
\item The associated probability distortion functions of extremely ambiguity-averse measure and the extremely ambiguity-seeking
measure are given respectively by
\begin{equation}\nonumber
\begin{cases}
\underline{\phi}_{r0}(t,y)=\beta_{r0}\sqrt{y}\left[\frac{\xi -(2\alpha_R-1)\beta_{r0}\rho_0\sigma A_R(t)-\gamma_R\sigma\rho_0\left(\alpha_R\underline{H}_R(t)+\hat{\alpha}_R\overline{H}_R(t)\right)}{\gamma_R+(2\alpha_R-1)\beta_{r0}}+\sigma\rho_0 A_R(t)\right],\\
\underline{\phi}_{rY}(t,y)=\sigma\beta_{rY}\rho A_R(t)\sqrt{y},\\
\underline{\phi}_{rZ}(t,x,z)=1-e^{\beta_r\left[a_0^*\wedge ze^{r(T-s)}+\frac{\gamma_R}{2}(a_0^*\wedge ze^{r(T-s)})^2\right]}{{,}}
\end{cases}
\end{equation}
and
\begin{equation}\nonumber
\begin{cases}
\overline{\phi}_{r0}(t,y)=-\beta_{r0}\sqrt{y}\left[\frac{\xi -(2\alpha_R-1)\beta_{r0}\rho_0\sigma A_R(t)-\gamma_R\sigma\rho_0\left(\alpha_R\underline{H}_R(t)+\hat{\alpha}_R\overline{H}_R(t)\right)}{\gamma_R+(2\alpha_R-1)\beta_{r0}}+\sigma\rho_0 A_R(t)\right],\\
\overline{\phi}_{rY}(t,y)=-\sigma\beta_{rY}\rho A_R(t)\sqrt{y},\\
\overline{\phi}_{rZ}(t,x,z)=1-e^{-\beta_r\left[a_0^*\wedge ze^{r(T-s)}+\frac{\gamma_R}{2}(a_0^*\wedge ze^{r(T-s)})^2\right]}.
\end{cases}
\end{equation}
\end{enumerate}
\end{theorem}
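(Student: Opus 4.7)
The plan is to apply the verification theorem for the AAR (Theorem \ref{THM4.2}) with the same linear--in--$x$, affine--in--$y$ ansatz that succeeded for the AAI, after substituting the insurer's feedback response $\hat{a}(t,z;\eta)=a_0(t;\eta)e^{-r(T-t)}\wedge z$ from Theorem \ref{insurer} into the reinsurer's dynamics \eqref{XRQ}. Concretely, I would posit
\begin{equation*}
V_R(t,x,y)=e^{r(T-t)}x+A_R(t)y+B_R(t),\quad
\underline{g}_R(t,x,y)=e^{r(T-t)}x+\underline{H}_R(t)y+\underline{K}_R(t),
\end{equation*}
together with the analogous ansatz for $\overline{g}_R$, with terminal conditions $A_R(T)=\underline{H}_R(T)=\overline{H}_R(T)=0$ and $B_R(T)=\underline{K}_R(T)=\overline{K}_R(T)=0$ forced by \eqref{bondi}. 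Because $\mathcal{AR}^{u_R,\phi}$ is linear in $x$ while the cross term $\mathcal{AR}^{u_R,\phi}g_R^2-2g_R\mathcal{AR}^{u_R,\phi}g_R$ reduces to a quadratic form in $(\pi_R,\sigma,\phi)$ that depends only on $(t,y)$ through the coefficients, the $x$--dependence in the extended HJB equation cancels, and the remaining equation will separate into a $y$--coefficient piece (driving $A_R,\underline{H}_R,\overline{H}_R$) and a $y^0$ piece (driving $B_R$).

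Next I would carry out the \emph{inner} $\inf_\phi/\sup_\phi$. The penalty $h_{\mathbf{b_r}}$ is quadratic in $(\phi_0,\phi_Y)$ and relative-entropy-like in $\phi_Z$, so pointwise first-order conditions in the three components give explicit minimizers/maximizers. The Gaussian components yield $\underline{\phi}_{r0},\underline{\phi}_{rY}$ of the form quoted in the theorem (and their sign-flipped seeking counterparts $\overline{\phi}_{r0},\overline{\phi}_{rY}$), depending on $\pi_R$ and on $A_R$ via $\sigma\rho_0 A_R(t)$, while the jump component yields $\underline{\phi}_{rZ}(t,z)=1-\exp\bigl\{\beta_r[\,ze^{r(T-t)}-a_0(t;\eta)+\tfrac{\gamma_R}{2}(ze^{r(T-t)}-a_0(t;\eta))^2\,]\bigr\}$ on the set $\{z>a_0(t;\eta)e^{-r(T-t)}\}$ (and $0$ elsewhere, since the reinsurer is only exposed on the excess part). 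Substituting the optimal $\phi$'s back linearises the quadratic and produces an $\alpha$-weighted sum of two concave problems in $(\pi_R,\eta)$.

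The \emph{outer} optimisation is then done in two moves. The sup over $\pi_R$ is a plain quadratic whose first-order condition, using the sensitivities of $V_R,\underline{g}_R,\overline{g}_R$ in $y$, yields exactly \eqref{pir}. The sup over $\eta$ is the delicate step: the insurer's retention $a_0$ is defined implicitly by \eqref{a0}, so $a_0$ depends on $\eta$ and all first-order conditions in $\eta$ must be taken via the chain rule, producing a factor $\partial a_0/\partial\eta$ obtained by differentiating \eqref{a0} implicitly (this gives the auxiliary equation for $\partial a_0/\partial\eta$ in the statement). Collecting the direct effect of $\eta$ on the reinsurance premium income $\int(1+\eta)(z-a_0 e^{-r(T-t)})\nu(\mathrm{d}z)$ together with the indirect effect through $a_0$ and the jump-penalty exponentials $e^{\pm\beta_r[\cdot]}$ recovers equation \eqref{eta}. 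I expect this implicit-function step, coupled with the need to track the two $\alpha_R,\hat{\alpha}_R$ branches simultaneously, to be the principal technical obstacle.

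Finally, with $\eta^*,\pi_R^*,\underline{\phi}^*,\overline{\phi}^*$ in hand I would substitute back into the extended HJB equation and collect powers of $y$: the coefficient of $y$ splits into three scalar ODEs, one from the HJB equation for $A_R$ (containing the quadratic in $A_R$ coming from the volatility penalty and the squared investment-optimiser term) and two from the $g$--equations in \eqref{bondi} for $\underline{H}_R,\overline{H}_R$, yielding the system \eqref{ARBHH}. The $y^0$ coefficient gives the integral formula for $B_R$. Local existence of the Riccati system follows from Lipschitz continuity of the right-hand sides; global existence on $[0,T]$ can be argued as for the AAI's system (which the paper treats separately) by an a priori bound exploiting the $\alpha\in[0.5,1]$ assumption and the sign of $2\alpha_R-1\ge 0$ that makes the quadratic in $A_R$ dissipative. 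A standard integrability verification under Assumption \ref{ass1} then shows the candidate $V_R$ satisfies all hypotheses of Theorem \ref{THM4.2}, so it is the equilibrium value function and $(\eta^*,\pi_R^*)$ is the equilibrium strategy.
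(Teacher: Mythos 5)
Your proposal mirrors exactly the approach the paper takes for the AAI case in Appendix B (affine ansatz in $x$ and $y$, pointwise inner $\inf/\sup$ over $\phi$, outer FOC for the control, then collect powers of $y$ to get Riccati ODEs plus the matrix-Riccati existence argument), which is precisely the proof the paper invokes for Theorem~\ref{reinsurer} by saying ``the proof is similar to Theorem~\ref{insurer}.'' The one genuinely new ingredient you correctly isolate -- differentiating the insurer's response \eqref{a0} implicitly in $\eta$ and combining the direct effect on premium income with the indirect effect through $\partial a_0/\partial\eta$ -- is indeed how \eqref{eta} and the auxiliary equation for $\partial a_0/\partial\eta$ arise, so the proposal matches the paper's argument.
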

\begin{proof}
	The proof is similar to Theorem \ref{insurer} and we omit it here.
\end{proof}
Theorem \ref{reinsurer} shows that the equilibrium reinsurance premium is derived by Eq.~(\ref{eta}). In the Stackelberg game, the AAR sets the reinsurance price $\eta^*$ and the AAI purchases excess-of-loss reinsurance with policy limit $a_0(t;\eta^*)e^{-r(T-t)}$. We also observe that the equilibrium investment strategies of the AAI and AAR have the same form. To ensure that the existence of the equilibrium reinsurance premium $\eta^*$, we have the following proposition.
\begin{proposition}\label{exists}
Eq.~(\ref{eta}) exists a solution, i.e., the equilibrium reinsurance premium $\eta^*$ exists.
\end{proposition}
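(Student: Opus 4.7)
The strategy is to apply the Intermediate Value Theorem to the map $F(\eta)$ defined as the right-hand side of Eq.~(\ref{eta}), viewed as a continuous scalar function of $\eta$ on $(\theta,\infty)$ for fixed $t \in [0,T]$.

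First, I would establish that $a_0(\eta)$, implicitly defined by Eq.~(\ref{a0}), is a well-defined $C^1$ function on $(\theta,\infty)$ with strictly positive derivative. Writing Eq.~(\ref{a0}) as $\Psi(a_0) = 1+\eta$ with
\[
\Psi(a_0) := (1+\gamma a_0)\bigl[\alpha e^{\beta(a_0+\gamma a_0^2/2)} + \hat\alpha e^{-\beta(a_0+\gamma a_0^2/2)}\bigr],
\]
one checks that
\[
\Psi'(a_0) = \gamma\bigl[\alpha e^{\beta(\cdot)} + \hat\alpha e^{-\beta(\cdot)}\bigr] + \beta(1+\gamma a_0)^2\bigl[\alpha e^{\beta(\cdot)} - \hat\alpha e^{-\beta(\cdot)}\bigr] > 0
\]
on $[0,\infty)$, using $\alpha\in[0.5,1]$ so that $\alpha\ge\hat\alpha$. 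Combined with $\Psi(0)=1$ and $\Psi(a_0)\to\infty$, the Implicit Function Theorem yields a unique $C^1$ function $a_0(\eta)$ with $\partial a_0/\partial\eta = 1/\Psi'(a_0) > 0$, which is exactly the auxiliary relation appearing in Theorem~\ref{reinsurer}. In particular $a_0(\eta)$ is bounded near $\eta=\theta^+$ and diverges to $+\infty$ as $\eta\to\infty$.

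Next, I would argue that $F(\eta)$ is continuous on $(\theta,\infty)$. By the previous step the pair $(a_0,\partial a_0/\partial\eta)$ depends continuously on $\eta$, and the integrand in Eq.~(\ref{eta}) is dominated $\nu$-a.e.\ by expressions of the form $Ce^{C' z^2}$ uniformly for $\eta$ in compact sets of $(\theta,\infty)$, which are $\nu$-integrable by Assumption~\ref{ass1}. The Dominated Convergence Theorem then transfers continuity from the integrand to $F$.

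The last step is to identify two values $\eta_- < \eta_+$ in $(\theta,\infty)$ with $F(\eta_-) > 0$ and $F(\eta_+) < 0$. For $\eta$ close to $\theta^+$, $a_0(\eta)$ is small and the leading positive term $ze^{r(T-t)}-a_0$ dominates the bracketed integrand on the substantial domain $\{z \ge a_0 e^{-r(T-t)}\}$, while the bounded exponential corrections scaled by the small factor $\partial a_0/\partial\eta$ cannot reverse the sign, yielding $F(\eta_-) > 0$. For $\eta$ large, substitute $(1+\eta) = \Psi(a_0)$ and $\partial a_0/\partial\eta = 1/\Psi'(a_0)$ to rewrite the integrand purely in terms of $a_0$; combining $a_0\to\infty$ with the tail bound $\int_1^\infty e^{cz^2}\nu(\dif z)<\infty$ from Assumption~\ref{ass1}, one shows that the negative contribution of $-(1+\eta)\partial a_0/\partial\eta = -\Psi(a_0)/\Psi'(a_0)$ outweighs the residual positive terms on the shrinking integration domain, giving $F(\eta_+) < 0$. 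The Intermediate Value Theorem then yields $\eta^* \in (\eta_-,\eta_+)$ with $F(\eta^*) = 0$.

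The main obstacle is the asymptotic analysis as $\eta \to \infty$: three nonlinear exponential terms compete inside the integral, the integration domain itself depends on $a_0(\eta)$, and pinning down the sign of $F$ requires careful asymptotic expansion of $\Psi(a_0)/\Psi'(a_0) \sim 1/[\beta(1+\gamma a_0)]$ relative to the tail mass $\nu\bigl((a_0 e^{-r(T-t)},\infty)\bigr)$. A cleaner alternative would be to recognize Eq.~(\ref{eta}) as the first-order condition of the AAR's reduced objective in $\eta$ along the AAI's feedback response, and to argue existence via attainment of an interior maximum using the economic observation that the reduced value is dominated at both endpoints ($\eta=\theta$ and $\eta\to\infty$) by any interior value where the reinsurance business is strictly profitable.
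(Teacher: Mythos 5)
Your overall strategy is the same as the paper's: treat Eq.~(\ref{eta}) as a continuous scalar equation, evaluate at the two ends of the parameter range, and invoke the Intermediate Value Theorem. (The paper reparametrizes in $a_0$ rather than in $\eta$, which is cosmetic given the strictly increasing bijection $\eta\mapsto a_0(\eta)$ that you correctly establish.) The implicit function theorem step and the continuity of $F$ via dominated convergence are fine, and your reading of the large-$\eta$ end is essentially what the paper does: substituting $(1+\eta)\,\partial a_0/\partial\eta=\Psi(a_0)/\Psi'(a_0)\sim 1/[\beta(1+\gamma a_0)]\to 0$ and letting the domain $\{z>a_0e^{-r(T-t)}\}$ shrink, with $\nu(0,\infty)<\infty$ (Assumption~\ref{ass1}, case 2, which the paper's proof explicitly invokes at this point).

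The gap is in your lower endpoint. You assert that near $\eta=\theta^+$ the ``leading positive term $ze^{r(T-t)}-a_0$ dominates'' because the corrections carry the ``small factor $\partial a_0/\partial\eta$.'' But $\partial a_0/\partial\eta=1/\Psi'(a_0)$, and for small $a_0$ this is approximately $1/[\gamma+\beta(2\alpha-1)]$, which is $O(1)$, not small (in the paper's baseline parameters it is about $1.8$). Moreover the integrand is not pointwise positive: for $z$ near $a_0e^{-r(T-t)}$ one has $\xi:=ze^{r(T-t)}-a_0\approx 0$ while the bracketed correction is approximately $\partial a_0/\partial\eta\cdot[-(1+\eta)+1]=-\eta\,\partial a_0/\partial\eta<0$, so the integrand is locally negative. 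Positivity of $F$ at the lower end is therefore not a pointwise domination statement but a genuine competition between the small-$z$ and large-$z$ contributions, and it only holds under an additional restriction on $\nu$. Indeed this is exactly the part the paper handles by bounding $\alpha_Re^{u}+\hat\alpha_Re^{-u}\ge 1$ for $u\ge 0$, integrating to get $M(0)\ge\int ze^{r(T-t)}\nu(\dif z)+\frac{\gamma_R}{\beta(2\alpha-1)+\gamma}\bigl[e^{r(T-t)}\int z\nu(\dif z)-\nu(0,\infty)\bigr]$, and then appealing to a normalization (``adjusting the unit of the claim $z$'') to make the bracket nonnegative. Your argument, as written, would yield $F(\eta_-)>0$ unconditionally, which is not true; this is the step that needs to be replaced by that integral estimate and the accompanying hypothesis on $\nu$.
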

\begin{proof}
See Appendix \ref{existproof}.	
\end{proof}

{Thus}, we obtain the equilibrium reinsurance premium $\eta^*$ of the {$\alpha$-}robust Stackelberg game in Theorem \ref{reinsurer}. Based on part 1 of Theorem \ref{insurer}, the equilibrium retention level of the {$\alpha$-}robust Stackelberg game is $a^*(t,z)=\hat{a}(t,z;\eta^*)=a_0(t;\eta^*)e^{r(T-t)}\wedge z$, where $a_0(t;\eta^*)$ is determined by
\begin{equation}\label{ae}
0=(1+\eta^*)- \left(1+\gamma a_0\right)\left[\alpha e^{\beta\left(a_0+\frac{\gamma}{2}a_0^2\right)}+\hat{\alpha}e^{-\beta\left(a_0+\frac{\gamma}{2}a_0^2\right)}\right].
\end{equation}
{Denote $a^*_0=a_0(\cdot;\eta^*)$.} Then, the equilibrium per-loss reinsurance  of the $\alpha$-maxmin Stackelberg game is obtained.

Last, in order to give an example of the equilibrium reinsurance contract, we choose a specific L\'{e}vy measure. Suppose that the aggregate insurance claims follow a compound Poisson process,
that is, the dynamics of the insurance surplus process without reinsurance and investment are governed by
\begin{equation}\nonumber
U(t)=U(0)+ct-\sum_{i=1}^{N(t)}Z_i,
\end{equation}
where $\{N(t),t\geq0\}$ is a homogeneous Poisson process with intensity $\lambda_0>0$, and $\{Z_i\}_{i\in\mathbb{N}}$ is a sequence of positive, independent and identically distributed random variables. $Z_i$ is assumed to follow the Rayleigh distribution supported on $(0,+\infty)$ with parameter $\lambda$. As such, the associated L\'{e}vy measure $\nu$ is given by
\begin{equation}\nonumber
\nu(\mathrm{d}z)=\lambda_0 \frac{2z}{\lambda^2}e^{-\frac{z^2}{\lambda^2}},
\end{equation}
which clearly satisfies Assumption \ref{ass1}.

Then, Eq.~(\ref{eta}) is equivalent to
\begin{equation}\label{equ111}
0=I_0(a_0)+I_-(a_0)+I_+(a_0),
\end{equation}
where {{
\begin{equation}\nonumber
\begin{split}
I_0(a_0)=&\sqrt{\pi}\lambda e^{r(T-t)}\Phi\left(-\frac{\sqrt{2}a_0}{\lambda e^{r(T-t)}}\right)-(1+\eta)\frac{\partial a_0}{\partial \eta}e^{-\frac{a_0^2e^{-2r(T-t)}}{\lambda^2}},\\
I_\pm(a_0)=&\frac{1\pm(2\alpha_R-1)}{2} \frac{\partial a_0}{\partial \eta} e^{\pm\frac{1}{2}\beta_{r} u a_0+\frac{1}{2}\sigma_\pm^2\beta_r^2u^2}\\
&\left[\frac{2\sigma_\pm^2e^{2r(T-t)}}{\lambda^2}\left(1\mp\sigma_\pm^2\beta_r\gamma_R(1-a_0)\right)e^{-\frac{1}{2\sigma_\pm^2}(a_0\pm\sigma_\pm^2\beta_r u)^2}\right.\\
&+\left.\frac{2\sqrt{2\pi}\sigma_\pm^3e^{2r(T-t)}}{\lambda^2}\left(\gamma_R\mp\beta_r(1\mp\sigma_\pm^2\beta_r\gamma_R)u^2\right)\Phi\left(\mp\sigma_\pm^2\beta_r u-\frac{a_0}{\sigma_\pm^2}\right)\right]
\end{split}
\end{equation}
and
\begin{equation}\nonumber
\begin{cases}
\frac{1}{\sigma_\pm^2}=\frac{2e^{2r(T-t)}}{\lambda^2}\pm\beta_r\gamma_R,\\
u=1-\gamma_R a_0.
\end{cases}
\end{equation}}}
Eq.~(\ref{equ111}) can be solved numerically to obtain the equilibrium retention level and reinsurance price.
\vskip 25pt
\setcounter{equation}{0}
\section{\bf Sensitivity Analysis}
In this section, we present some numerical examples to show the relationship between the AAI's equilibrium per-loss reinsurance and the AAR's equilibrium reinsurance premium, and the effects of different parameters on the equilibrium strategies of the AAI and the AAR. Unless otherwise stated, the baseline parameters are given by $r=0.05$, $\xi=1$, $\kappa=3$, $\delta=0.09$, $\sigma=0.5$, $\rho_0=0.5$, $\alpha=0.8$, $\gamma=0.5$, $\beta_0=4$, $\beta_Y=4$, $\beta=0.1$, $\beta_{r0}=4$, $\beta_{rY}=4$, $\beta_r=0.1$, $T=10$.
Based on the parameter settings above, the equilibrium retention level of the AAI  is $a^*_0(5)=1.108$, and the equilibrium reinsurance premium of the AAR is $\eta^*=0.7017$.

\subsection{The AAI's feedback retention level $a_0$ for a given  $\eta$}
\begin{figure}[htbp]
	\centering
	\begin{minipage}{0.5\textwidth}
		\centering
		\includegraphics[totalheight=6cm]{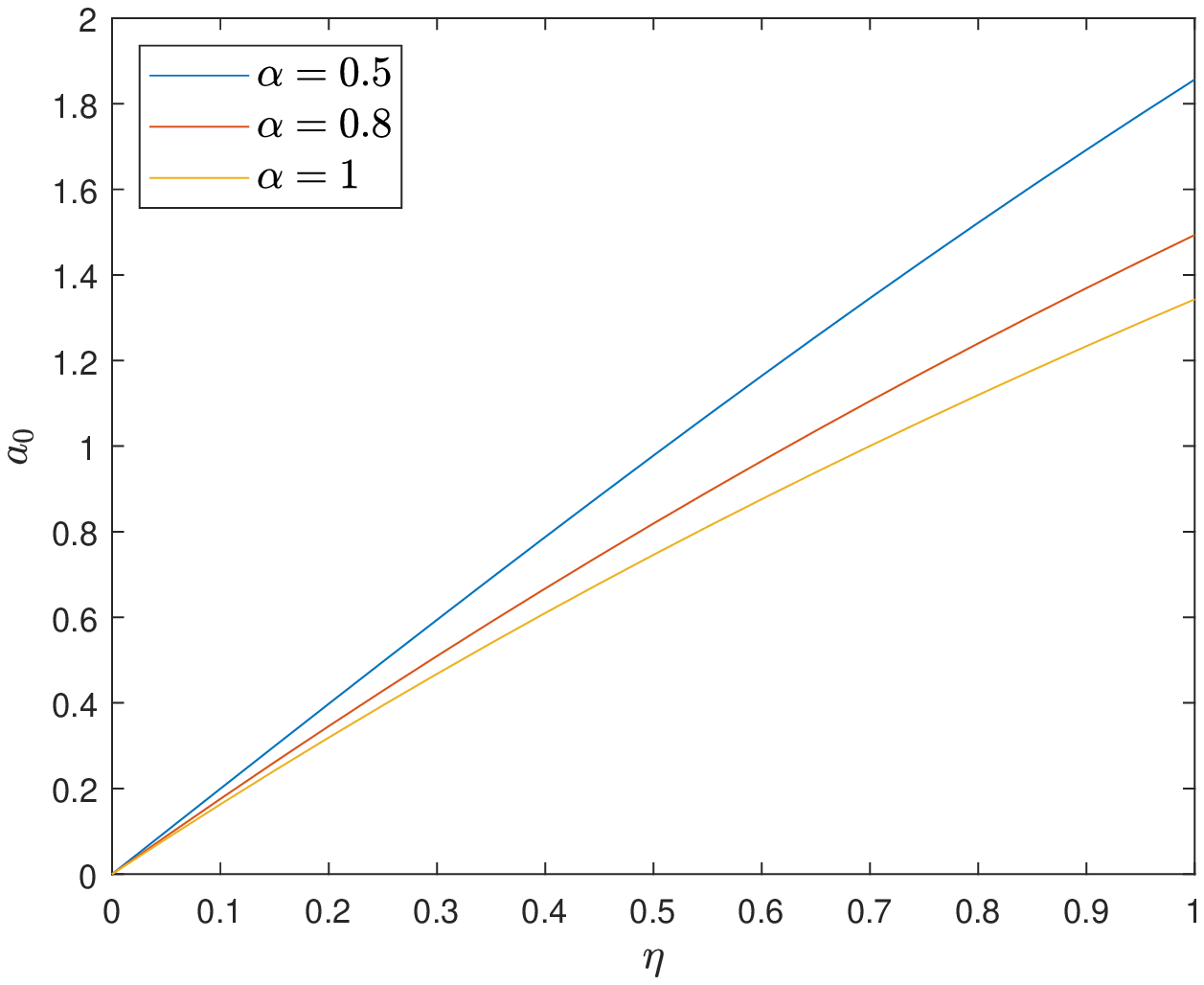}
		\caption{{Feedback retention levels $a_0$.}}
		\label{fig:aetaalpha}
	\end{minipage}\hfill
	\begin{minipage}{0.5\textwidth}
		\centering
		\includegraphics[totalheight=6cm]{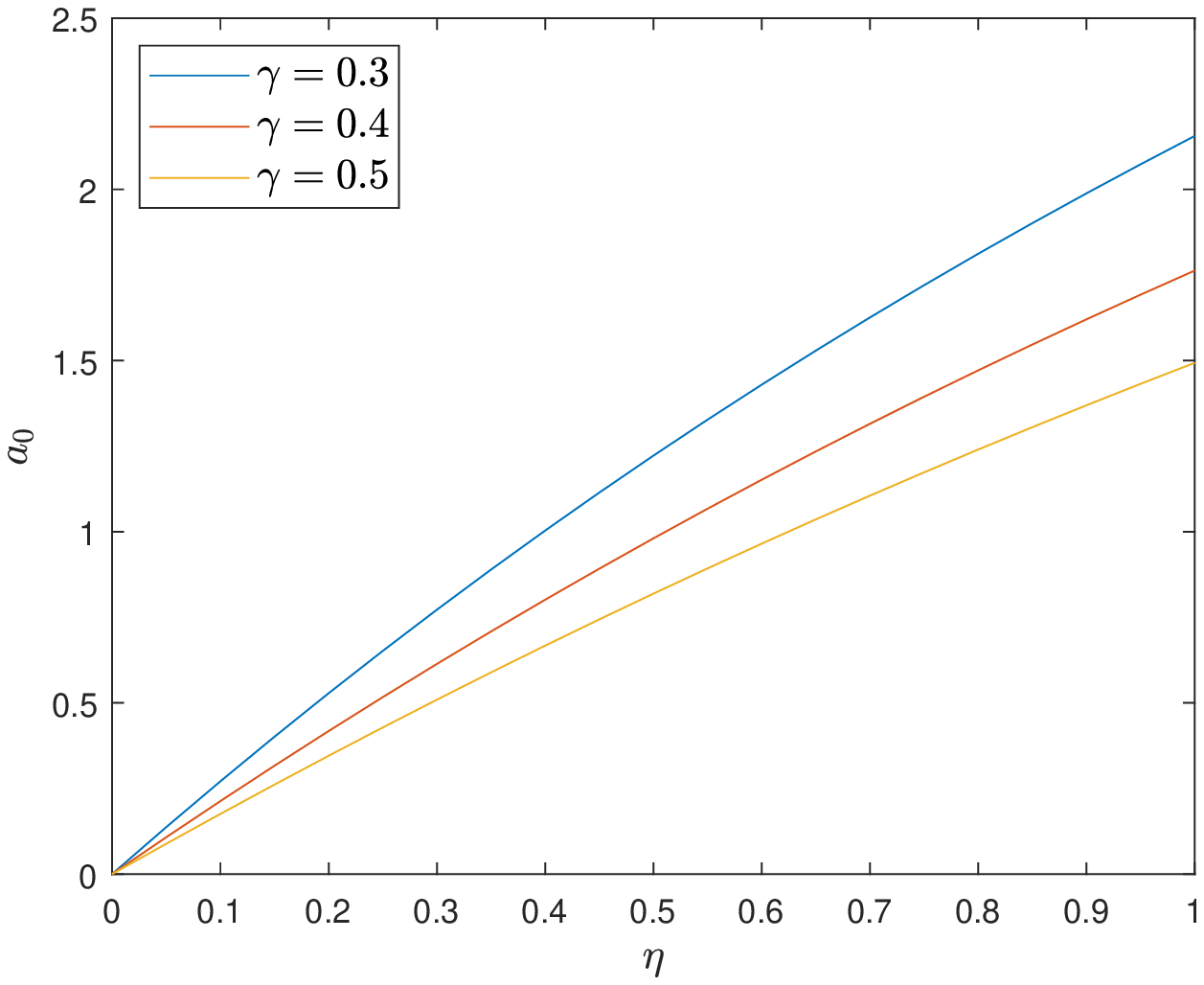}
		\caption{{Feedback retention levels $a_0$.}}
		\label{fig:aetagamma}
	\end{minipage}\hfill
\end{figure}

For a given reinsurance premium $\eta$, the AAI's retention level $a_0$ is given by Eq.~(\ref{a}). Fig.~\ref{fig:aetaalpha} shows the effects of $\eta$ and $\alpha$ on the AAI's feedback retention level. With a larger $\eta$, the reinsurance business is more expensive. Then the AAI will undertake more insurance risk by itself. Figs.~\ref{fig:aetaalpha}  and \ref{fig:aetagamma} show that the AAI's feedback retention level increases with $\eta$, i.e., the AAI will divide less insurance risk to the AAR when the reinsurance premium becomes larger. Besides,  when the AAI is more ambiguity aversion, the AAI will undertake less insurance risk by itself and decreases the retention level, which is illustrated in Fig.~\ref{fig:aetaalpha}. Fig.~\ref{fig:aetagamma} also depicts the relationship between risk aversion coefficient $\gamma$ and the retention level. We see that the risk aversion attitude has a similar effect to the ambiguity aversion attitude. If the AAI is more risk aversion, it will take fewer risks and divide more insurance risk to the AAR, which is well shown in Fig.~\ref{fig:aetagamma}.

%


%



\subsection{The equilibrium retention level and reinsurance premium}
In this subsection, we study the effects of the economic parameters on the equilibrium retention level $a^*_0$ and reinsurance premium $\eta^*$. The effects of the ambiguity aversions $\alpha$ and $\alpha_R$ on the equilibrium retention level $a^*_0$ and the equilibrium reinsurance premium $\eta^*$ are shown in Figs.~\ref{fig:aealpha} and \ref{fig:etaealpha}.

\begin{figure}[h]
  \centering
  \includegraphics[totalheight=5cm]{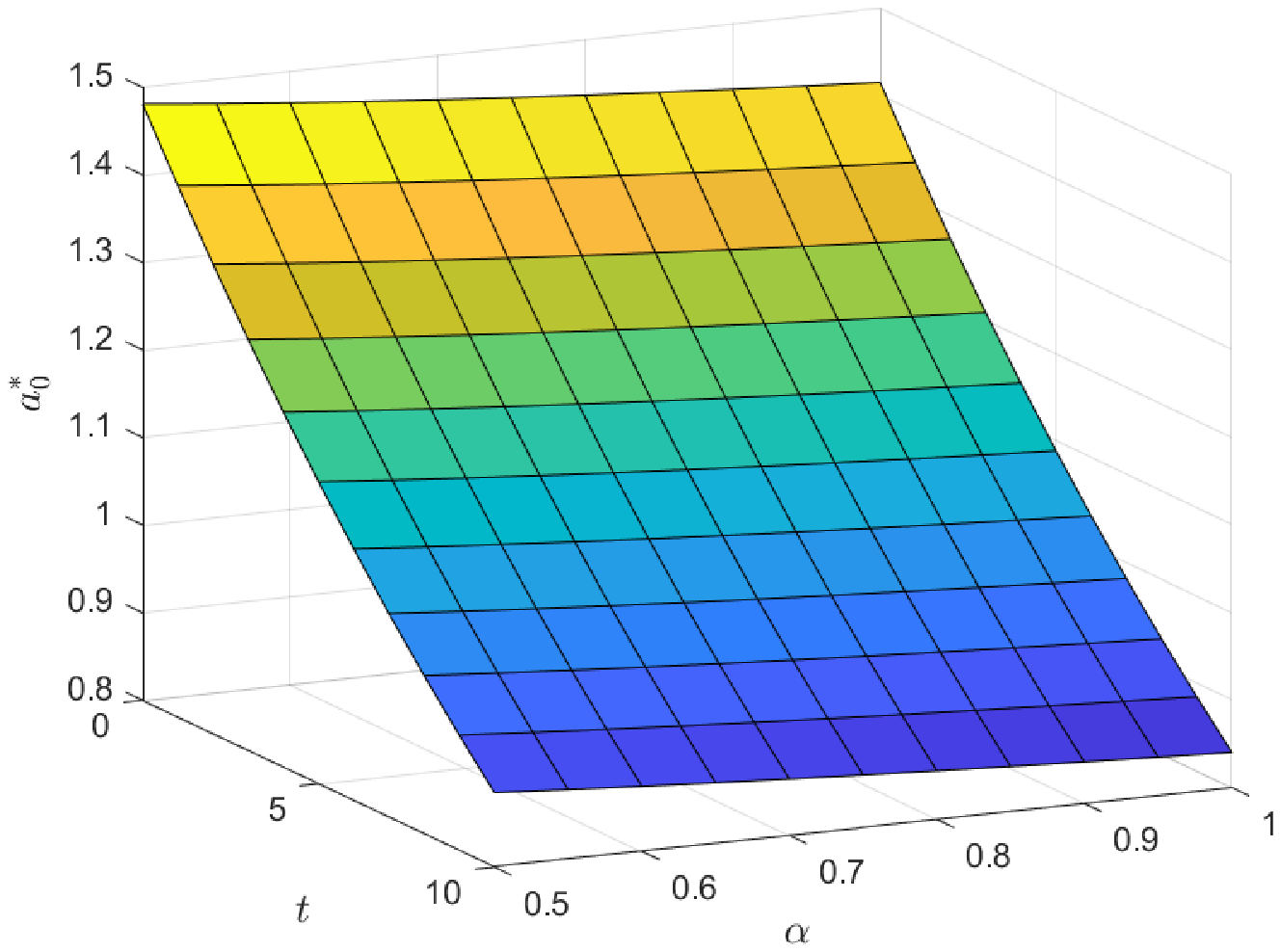}
  \includegraphics[totalheight=5cm]{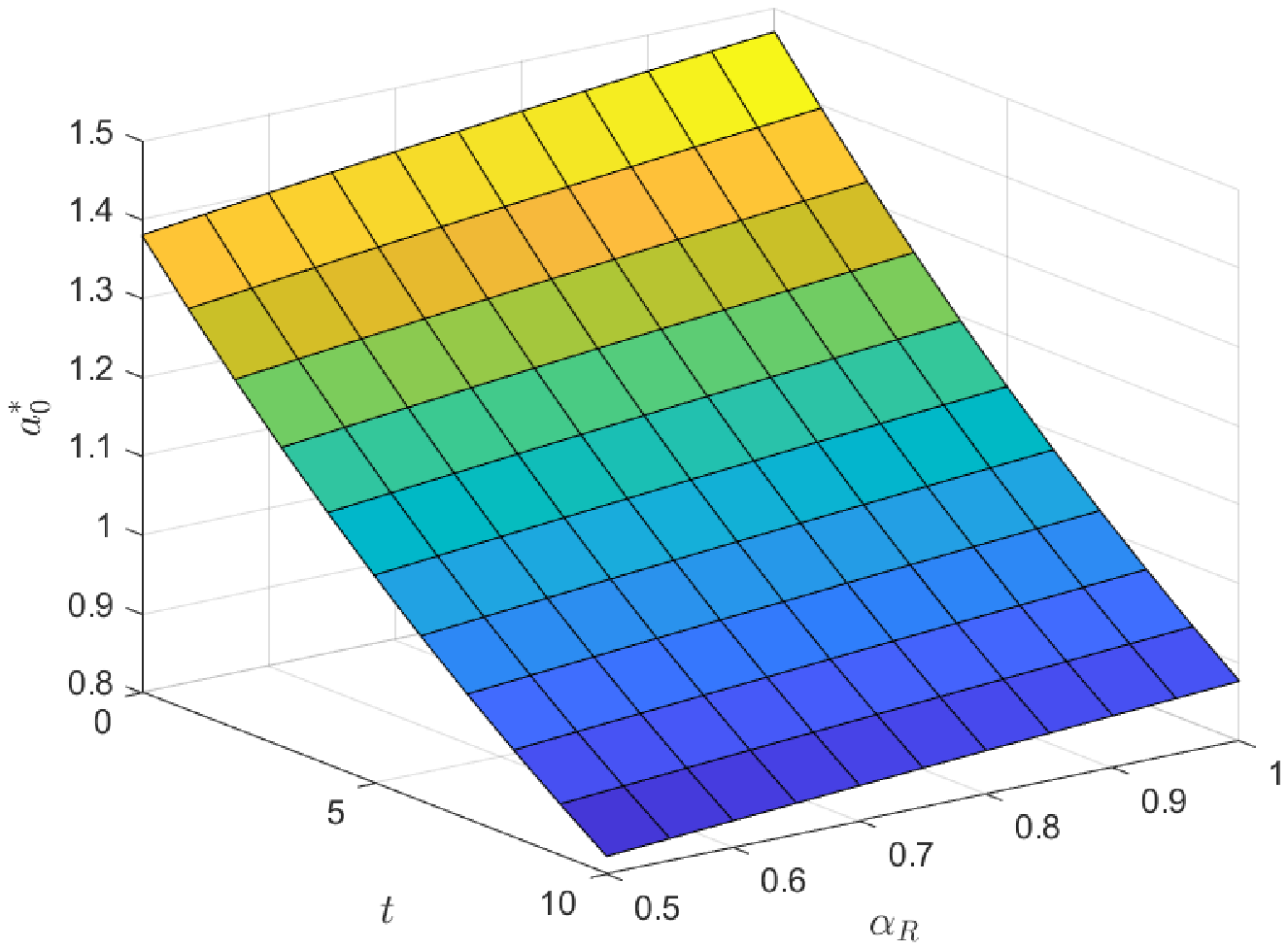}
\caption{{Effects of $\alpha$ and $\alpha_R$ on the equilibrium retention level $a^*_0$.}}
  \label{fig:aealpha}
\end{figure}

Fig.~\ref{fig:aealpha} shows that the AAI's equilibrium retention level decreases with time, which means that the AAI will undertake less insurance risk as time goes by. Meanwhile, we observe that the AAI's equilibrium retention level decreases with $\alpha$, which is also stated in  Fig.~\ref{fig:aetaalpha}, {{while the decrease is smaller than that in Fig.~\ref{fig:aetaalpha}}}. Fig.~\ref{fig:aealpha} also reveals that the AAI and AAR's ambiguity aversion attitudes have the opposite effects on the equilibrium retention level. If the AAR is more ambiguity aversion, the AAR will expect to undertake less insurance risk. Then more insurance risk is taken by the AAI. Thus, the AAI's equilibrium retention level increases with $\alpha_R$.


\begin{figure}[h]
  \centering
  \includegraphics[totalheight=5cm]{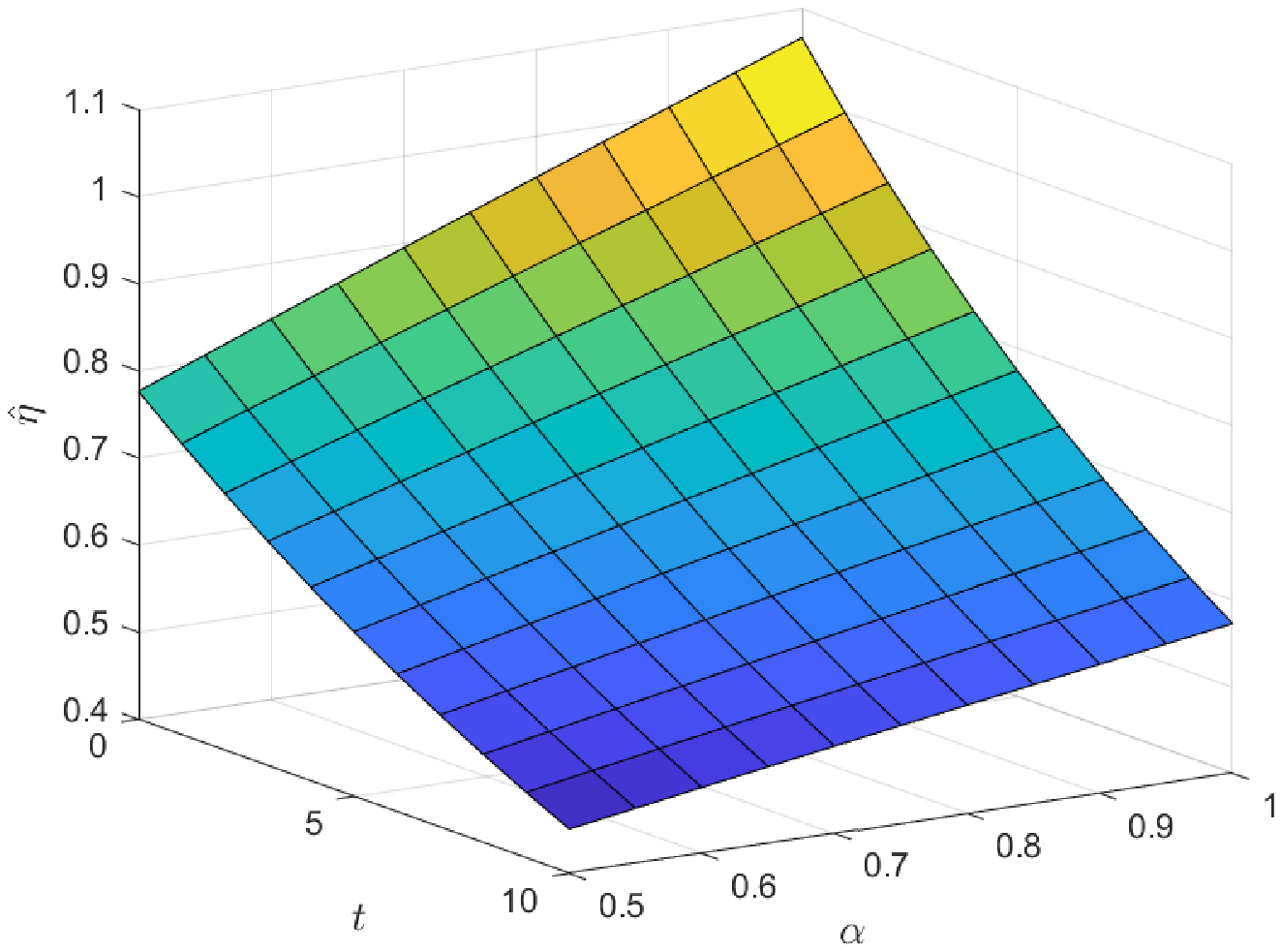}
  \includegraphics[totalheight=5cm]{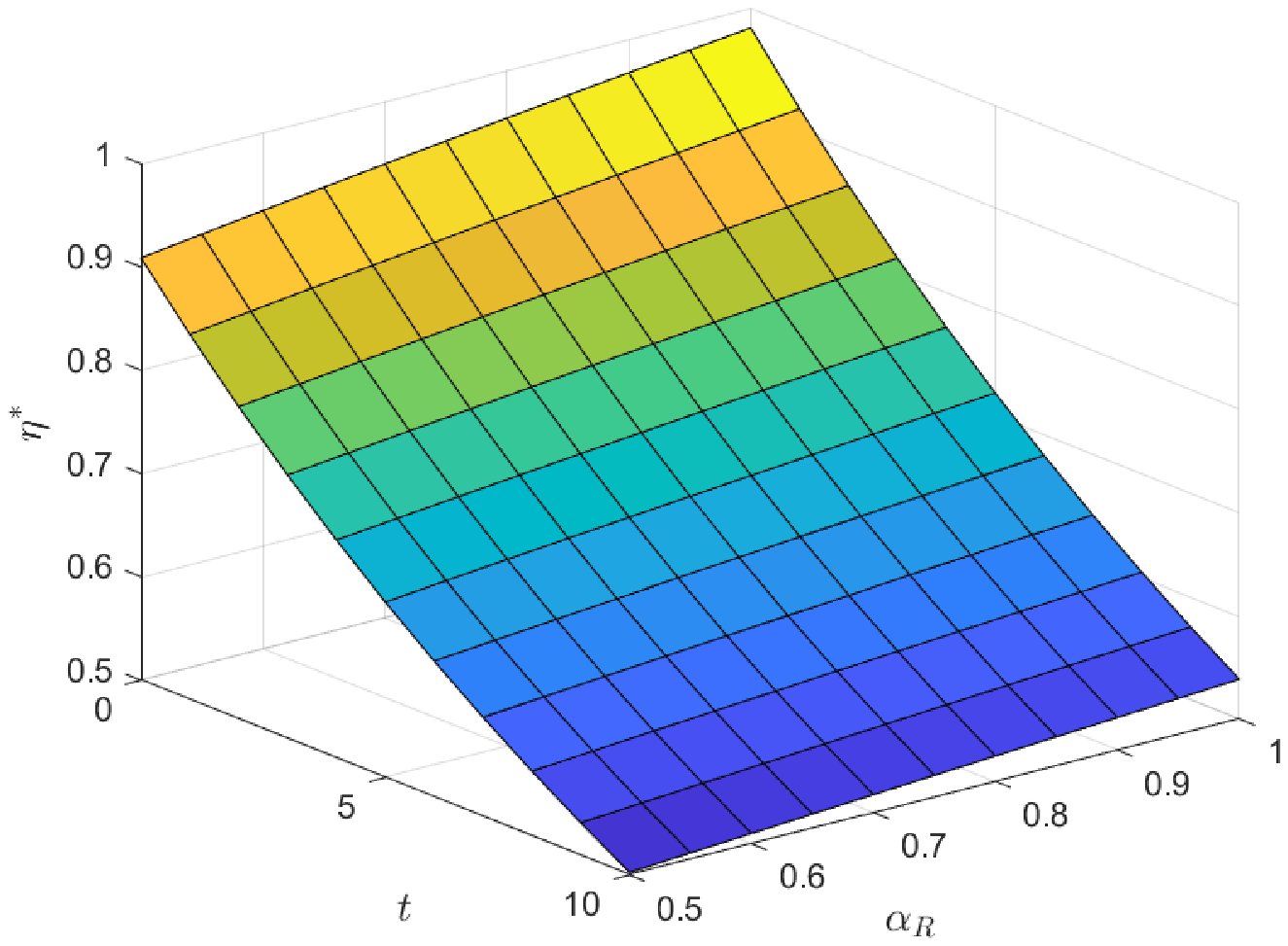}
\caption{{Effects of $\alpha$ and $\alpha_R$ on the equilibrium reinsurance premium $\eta^*$.}}
  \label{fig:etaealpha}
\end{figure}

The effects of $\alpha$ and $\alpha_R$ on the equilibrium reinsurance premium are indicated in Fig.~\ref{fig:etaealpha}. When the AAI is more ambiguity aversion, the AAI tends to undertake less insurance risk {{and expects the AAR to share more insurance risk, which means that the AAI is willing to afford a higher premium for the same retention level. Then, the AAI would like to purchase more reinsurance, which makes the reinsurance market become a seller's market}}, and the AAR will {{increase}} the reinsurance premium. Fig.~\ref{fig:etaealpha} well illustrates the {{positive}} relationship between $\alpha$ and $\eta^*$. On the other hand, when the AAR is more ambiguity aversion, the AAR expects to undertake fewer risks. Then the AAR will increase the reinsurance premium, which in turn decreases the insurance risk divided tothe AAR. In Fig.~\ref{fig:etaealpha}, we see that the equilibrium reinsurance premium {{increases}} with the AAR's ambiguity aversion coefficient $\alpha_R$.



\begin{figure}[h]
  \centering
  \includegraphics[totalheight=5cm]{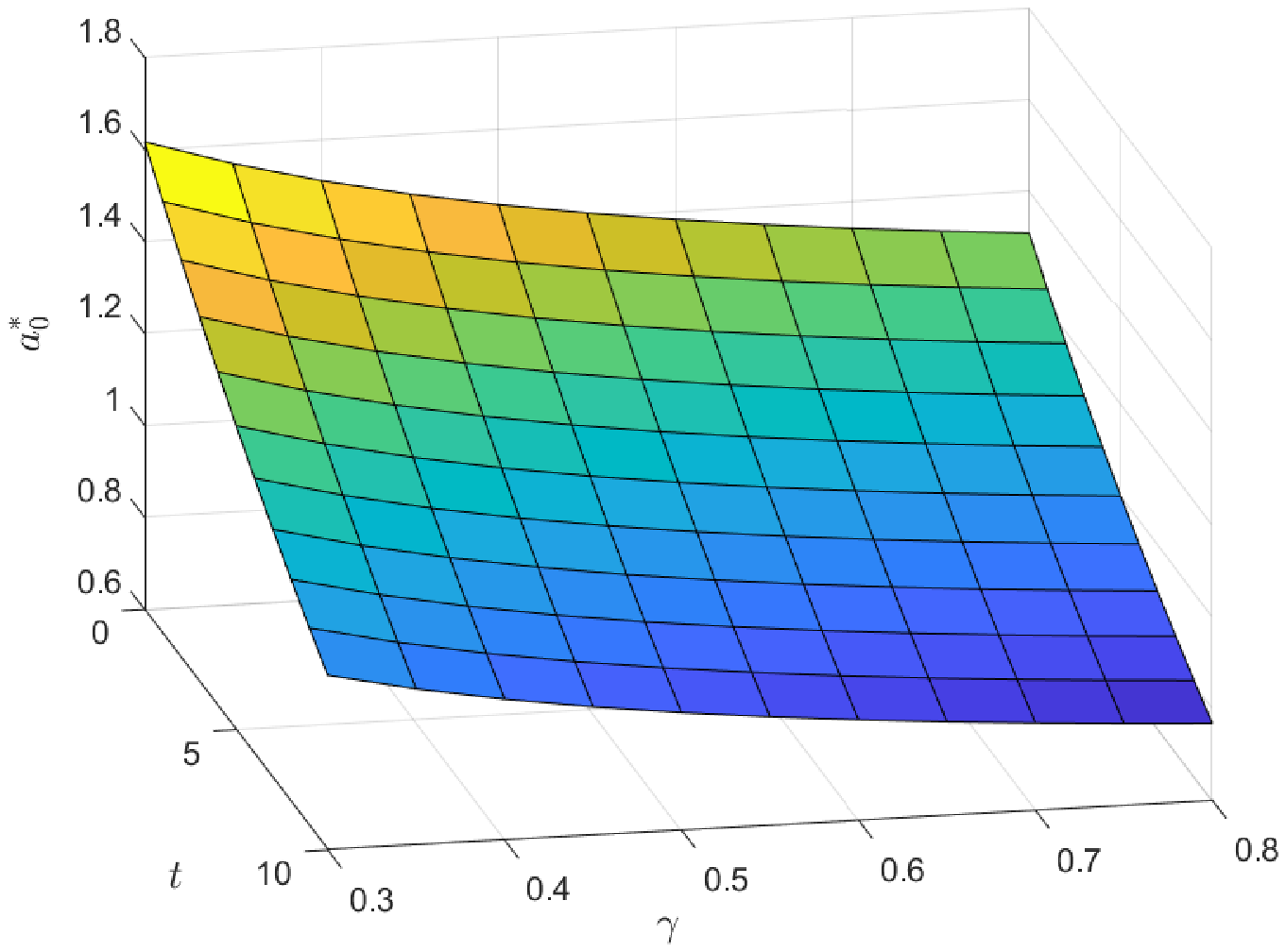}
  \includegraphics[totalheight=5cm]{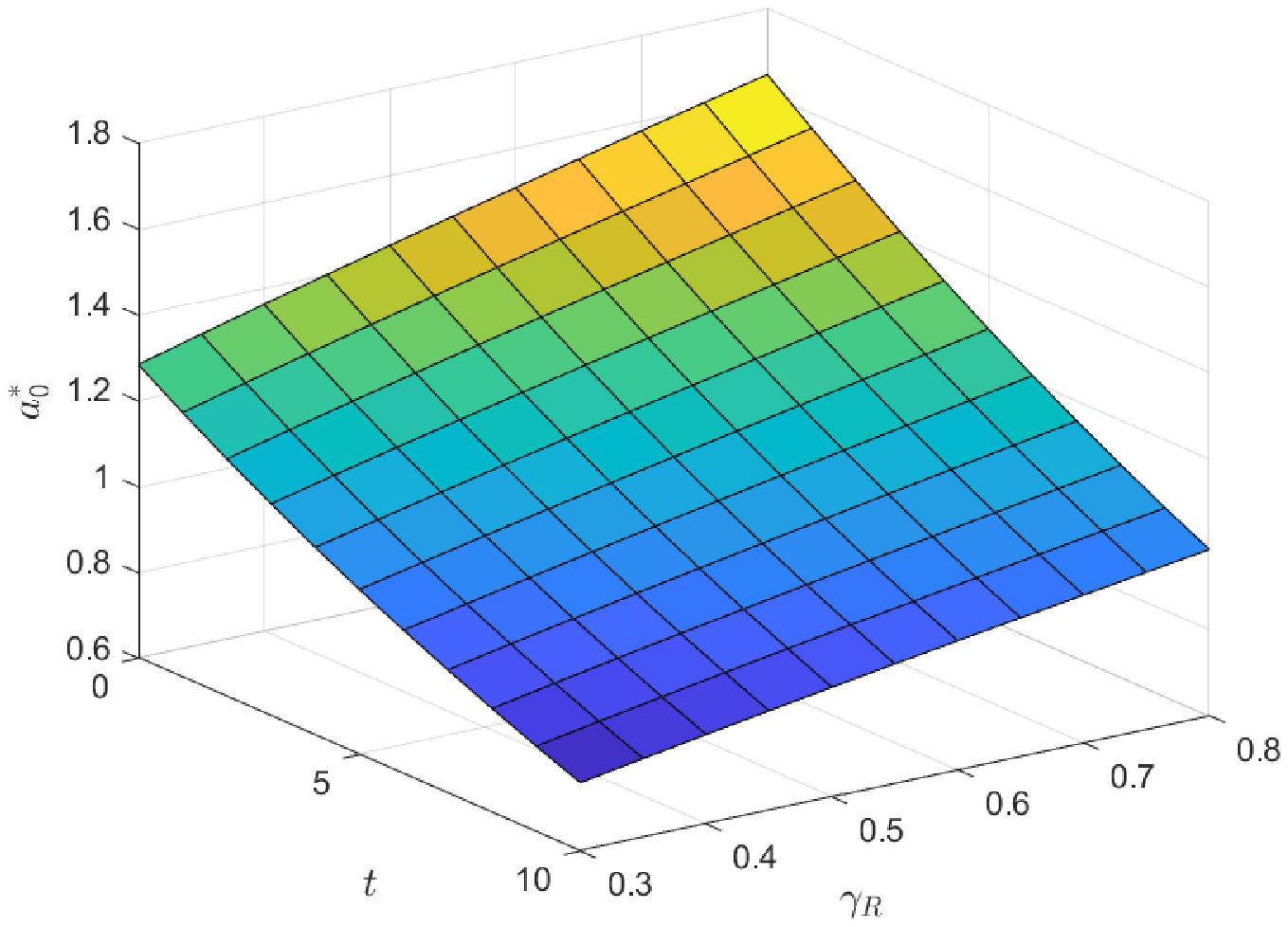}
\caption{{Effects of $\gamma$ and $\gamma_R$ on the equilibrium retention level $a^*_0$.}}
  \label{fig:aegamma}
\end{figure}
\begin{figure}[h]
  \centering
  \includegraphics[totalheight=5cm]{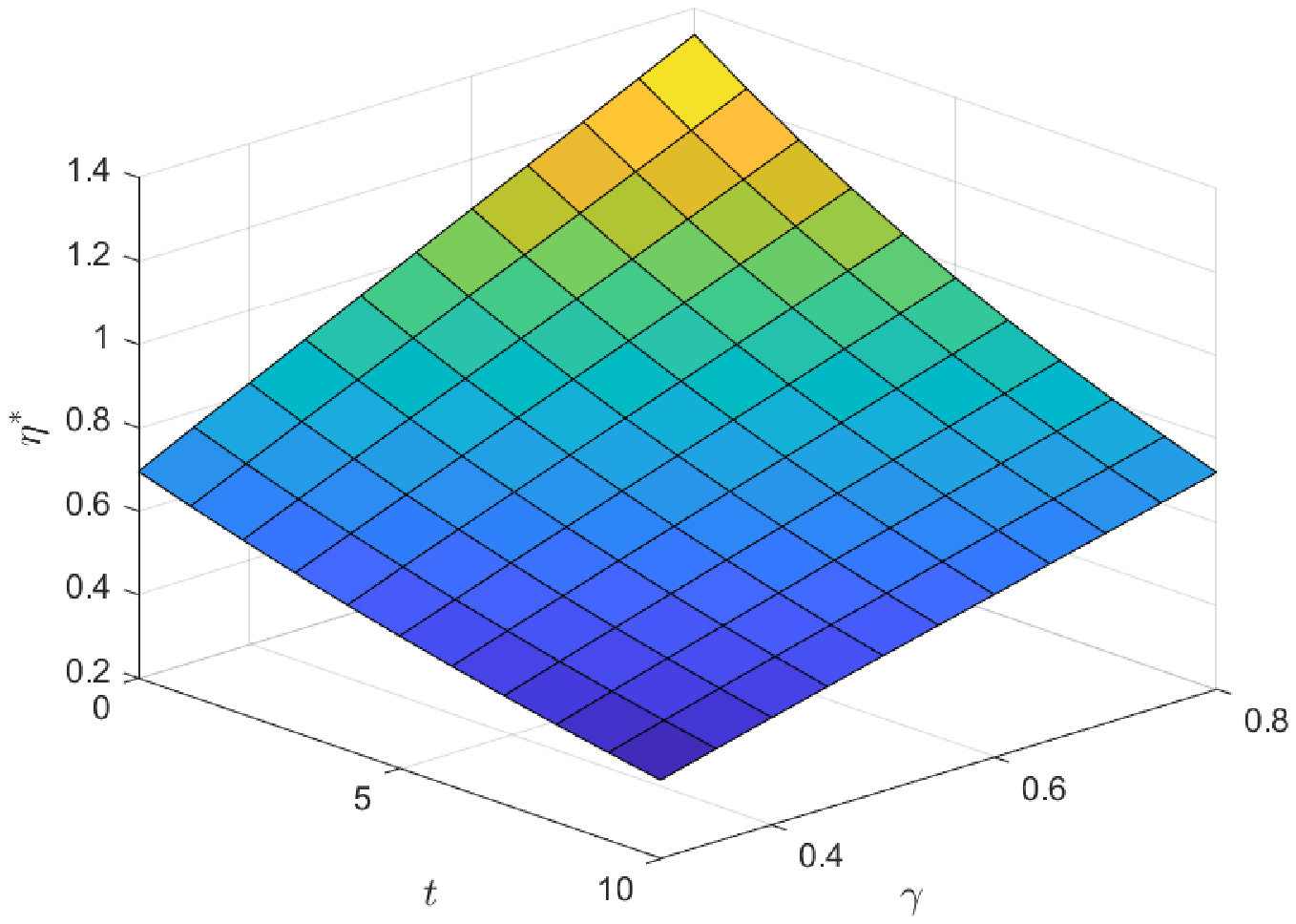}
  \includegraphics[totalheight=5cm]{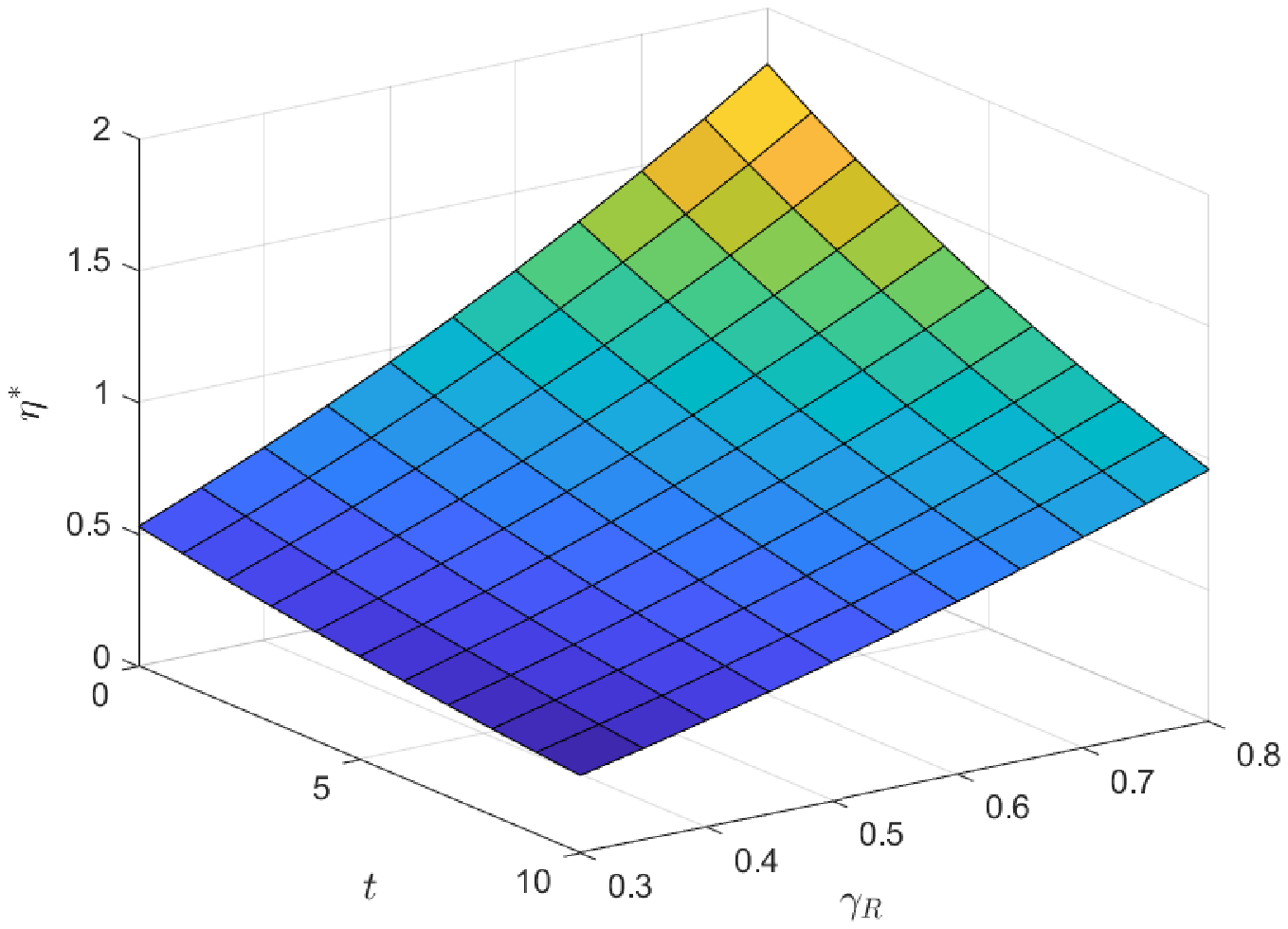}
\caption{{Effects of $\gamma$ and $\gamma_R$ on the equilibrium reinsurance premium $\eta^*$.}}
  \label{fig:etagamma}
\end{figure}
We are also interested in the effects of the risk aversion coefficients $\gamma$ and $\gamma_R$ on the AAI and AAR's behaviors. Fig.~\ref{fig:aegamma} shows the effect of $\gamma$ and $\gamma_R$ on the equilibrium retention level $a^*_0$. Observing Fig.~\ref{fig:aegamma}, we see that the equilibrium retention level decreases with $\gamma$, which is consistent with Fig.~\ref{fig:aetagamma}, {{while the decrease is less than that in Fig.~\ref{fig:aetagamma}}}.  Meanwhile, when the AAR is more risk averse, the equilibrium reinsurance premium becomes expensive, which leads to a higher equilibrium retention level. In Fig.~\ref{fig:etagamma}, the relationship between the equilibrium reinsurance premium and the risk aversion coefficients is depicted. If the AAI is more risk aversion, the AAI {{will expect the AAR to share more insurance risk and would like to purchase more reinsurance, which makes}} the AAR increase the reinsurance premium. 
On the other hand, if the AAR is more risk aversion, the reinsurance contract will become expensive to decrease the insurance risk divided from the AAI.
\begin{figure}[h]
	\centering
	\includegraphics[totalheight=5cm]{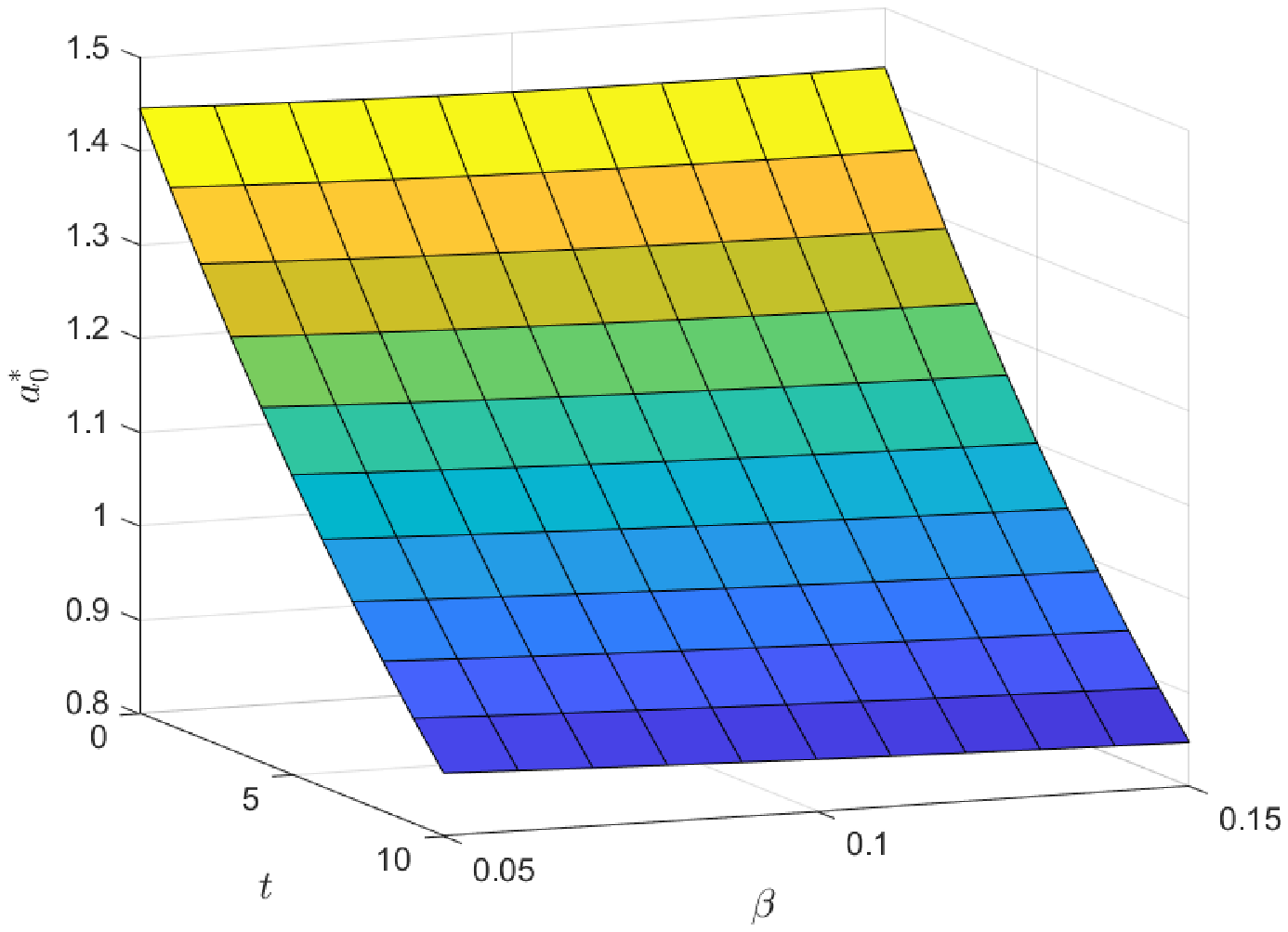}
	\includegraphics[totalheight=5cm]{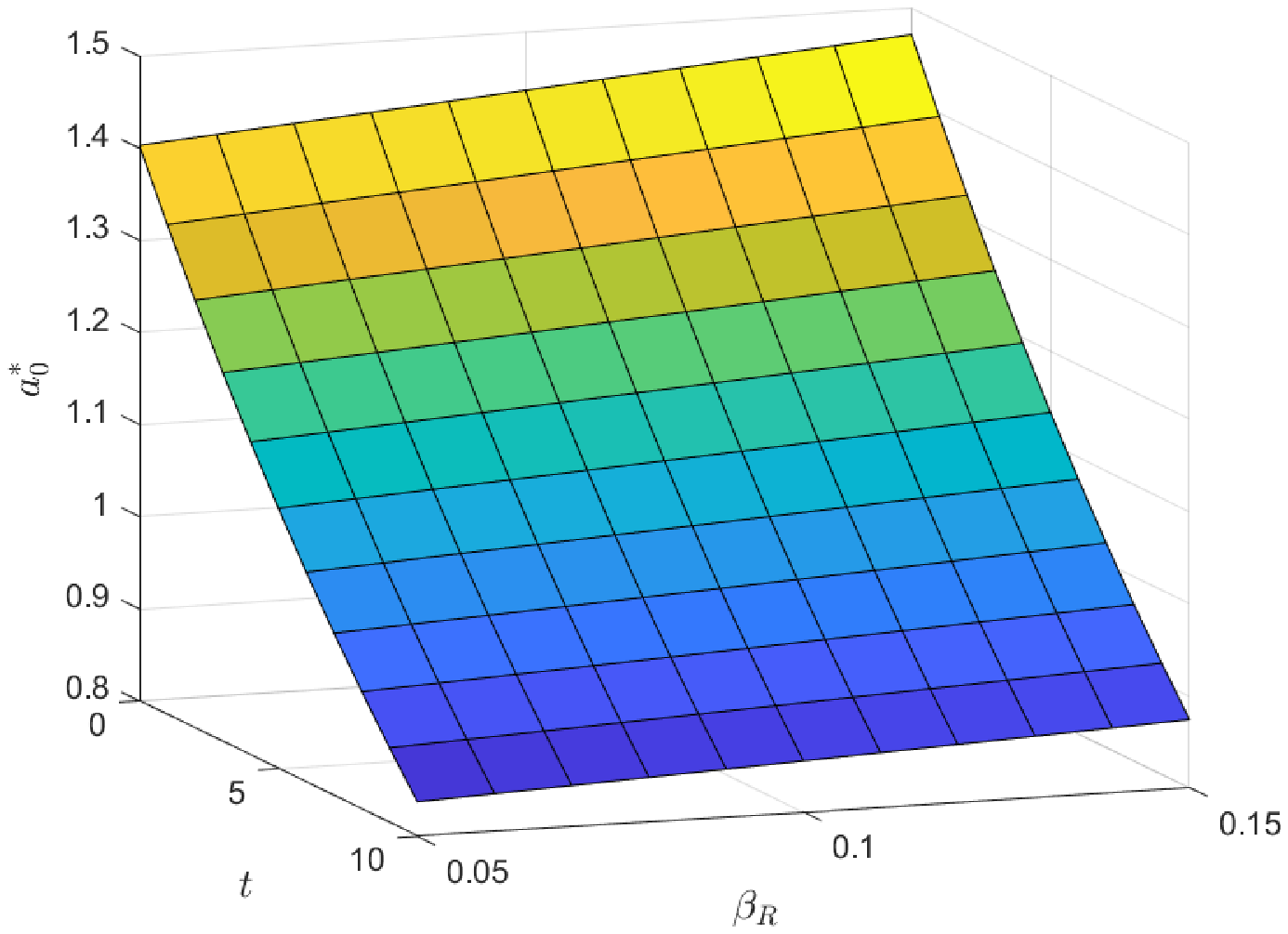}
	\caption{{Effects of $\beta$ and $\beta_R$ on the equilibrium reinsurance premium $a^*_0$.}}
	\label{fig:aebeta}
\end{figure}

\begin{figure}[h]
	\centering
  \includegraphics[totalheight=5cm]{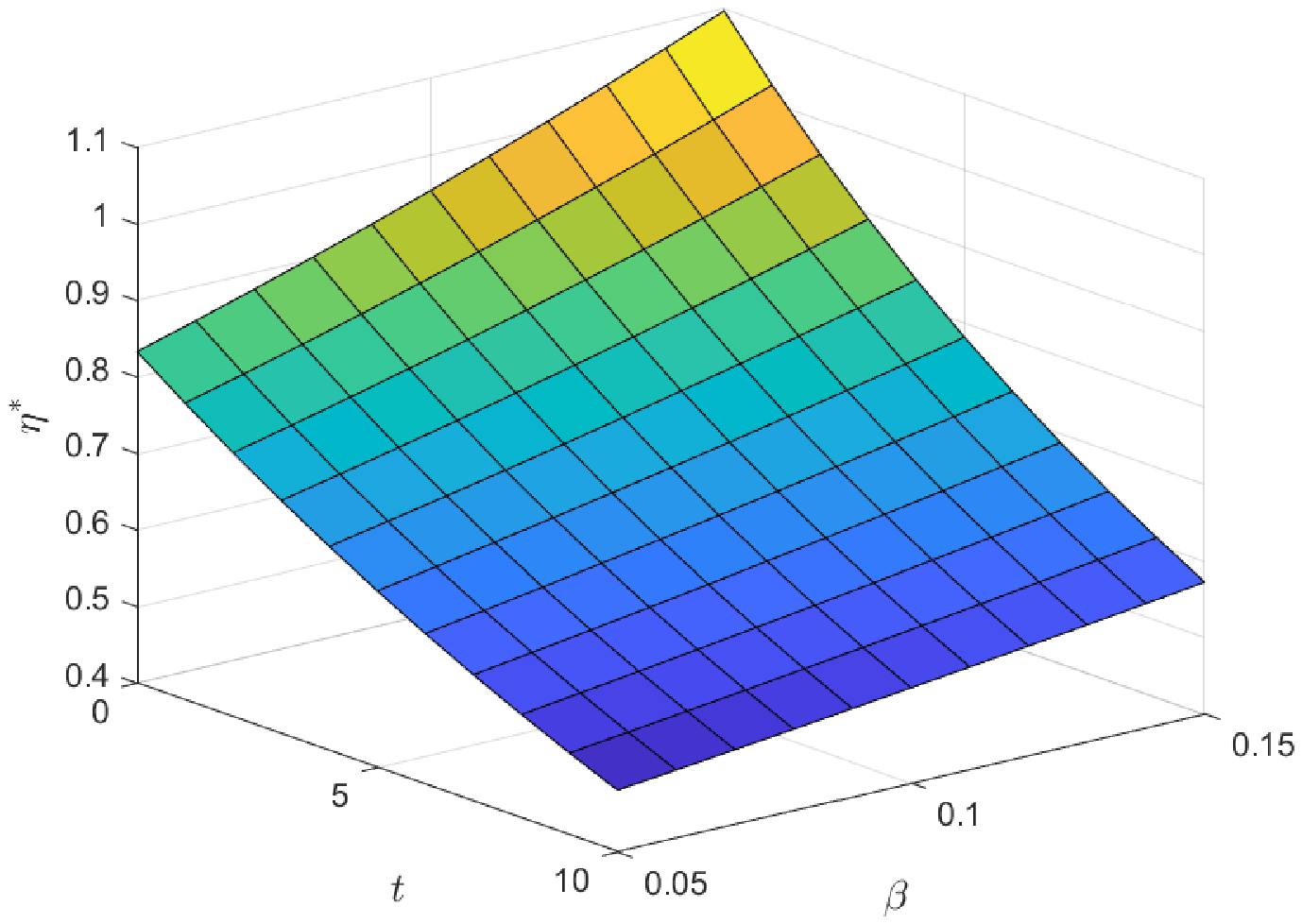}
\includegraphics[totalheight=5cm]{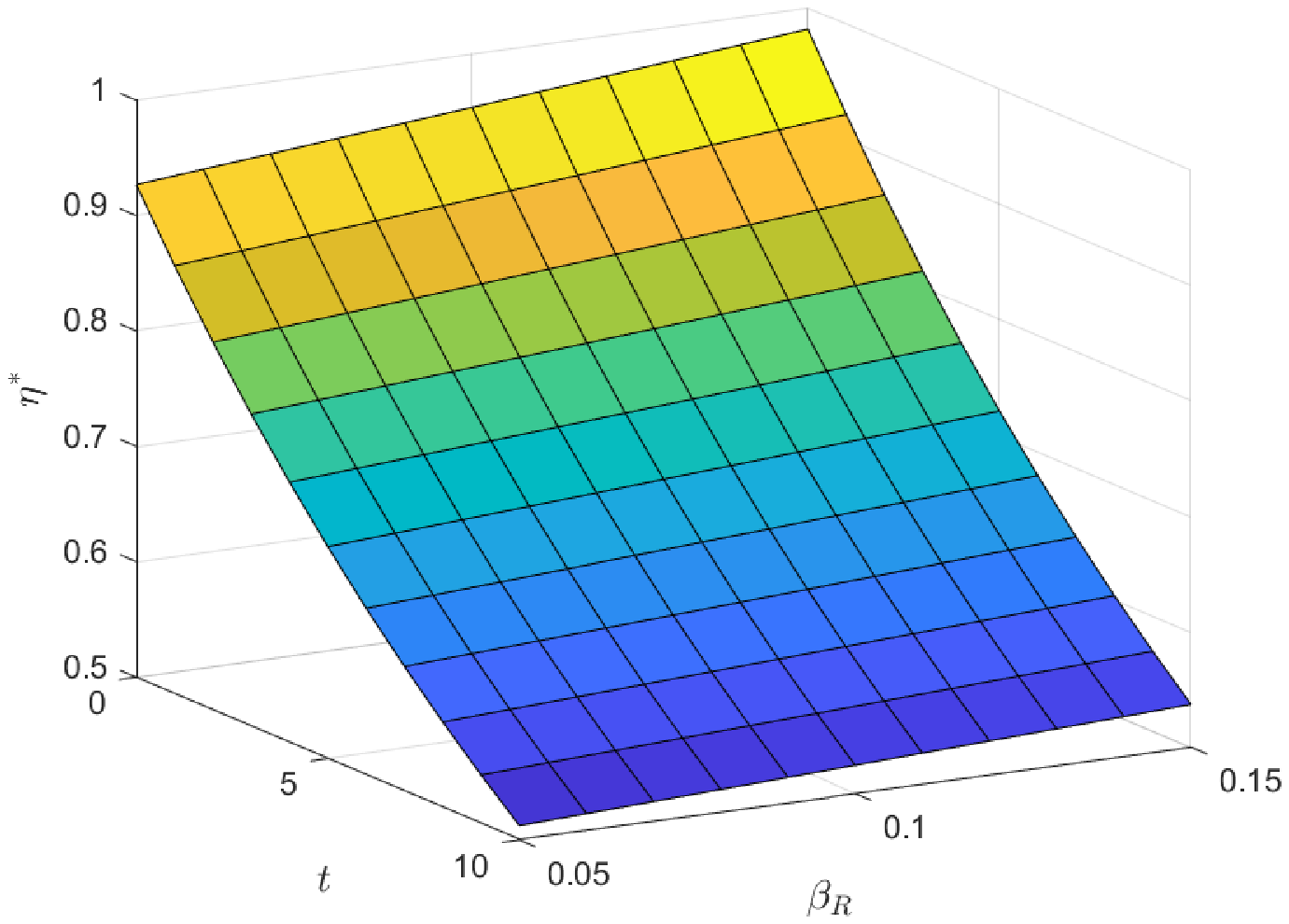}
\caption{{Effects of $\beta$ and $\beta_R$ on the equilibrium reinsurance premium $\eta^*$.}}
\label{fig:etabeta}
\end{figure}
{{Moreover, $\beta$ and $\beta_R$ represent the level of ambiguity towards the insurance market for the AAI and the AAR, respectively. The effects of the level of ambiguity towards the insurance market on the equilibrium retention level $a^*_0$ and the equilibrium reinsurance premium $\eta^*$ are shown in Figs.~\ref{fig:aebeta} and \ref{fig:etabeta}. Fig.~\ref{fig:aebeta} shows that 
the equilibrium retention level decreases with $\beta$ because the AAI with a higher level of ambiguity tends to undertake less insurance risk. 
Meanwhile, if the level of ambiguity for the AAR increases, the equilibrium reinsurance premium becomes expensive, which leads to a higher equilibrium retention level. In Fig.~\ref{fig:etabeta}, the relationship between the equilibrium reinsurance premium and the level of ambiguity towards the insurance market is depicted. If the AAI faces a higher level of ambiguity towards the insurance market, the AAI will expect the AAR to share more insurance risk and would like to purchase more reinsurance, which makes the AAR increase the reinsurance premium. On the other hand, if the level of ambiguity for the AAR increases, the reinsurance contract will become expensive to decrease the insurance risk divided from the AAI.}}

Comparing Figs.~\ref{fig:aealpha}-\ref{fig:etabeta}, we see that the risk aversion coefficient $\gamma$ ($\gamma_R$) {{and the level of ambiguity towards the insurance market $\beta$ ($\beta_R$)}} have the similar influences as the ambiguity aversion coefficient $\alpha$ ($\alpha_R$). On the one hand, when the AAI is more averse to risk or ambiguity, the equilibrium retention level decreases while the reinsurance premium increases. Then more insurance risk is undertaken by the AAR. On the other hand, when the AAR is more averse {{to}} risk or ambiguity, the equilibrium retention level and reinsurance premium increase, which leads to less insurance risk divided to the AAR.

%
\subsection{The equilibrium investment strategy}
We compare the equilibrium investment strategy $\pi^*$ and the equilibrium investment strategy $\tilde{\pi}$ without stochastic volatility. {Based on Eqs.~(\ref{pii}) and (\ref{pir}), the equilibrium investment strategies of the AAI and the AAR 
	have the same form, as such, we only present the results of $\pi^*_I$ as an example in this subsection.} Based on Eq.~(\ref{pii}), the equilibrium investment strategy is
\begin{equation}\nonumber
	\begin{split}
		\pi^*_I(t)=\frac{\xi -(2\alpha-1)\beta_0\rho_0\sigma A(t)-\gamma\sigma\rho_0\left(\alpha\underline{H}(t)+\hat{\alpha}\overline{H}(t)\right)}{[\gamma+(2\alpha-1)\beta_0]e^{r(T-t)}}.
	\end{split}
\end{equation}
$\pi^*_I(t)$ is composed of two parts: the first part to hedge equity risk and the second part to hedge volatility risk. In the case of a geometric Brownian motion with drift $\mu$ and volatility $\sigma_0$, the equilibrium investment strategy is given by
\begin{equation}\label{pii0}
	\begin{split}
		\tilde{\pi}_I(t)=\frac{(\mu-r)e^{-r(T-t)}}{\sigma_0^2[\gamma+(2\alpha-1)\beta_0]}.
	\end{split}
\end{equation}
We set $\xi=\frac{\mu-r}{\sigma_0^2}$ and have
\begin{equation}\label{pii01}
	\begin{split}
		\pi^*_I(t)=\tilde{\pi}_I(t)-\rho_0\sigma e^{-r(T-t)}\frac{\gamma\left(\alpha\underline{H}(t)+\hat{\alpha}\overline{H}(t)\right)+(2\alpha-1)\beta_0 A(t)}{\gamma+(2\alpha-1)\beta_0}.
	\end{split}
\end{equation}

\begin{figure}[htbp]
	\centering
	\begin{minipage}{0.5\textwidth}
		\centering
		\includegraphics[totalheight=5.5cm]{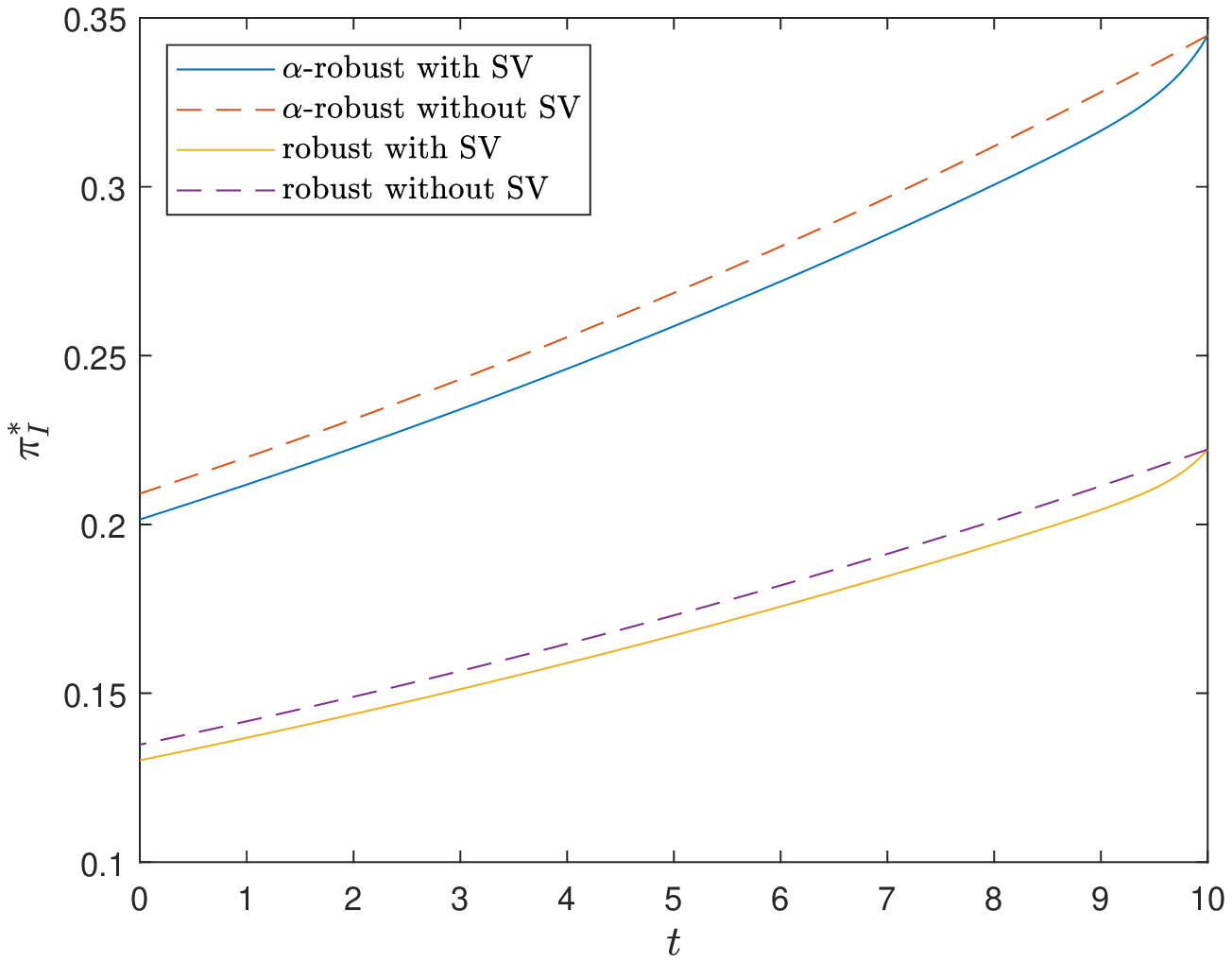}
		\caption{{Comparison of investment strategies for $\alpha$-robust and robust problems with SV and without SV.}}
		\label{fig:piicompare}
	\end{minipage}\hfill
	\begin{minipage}{0.5\textwidth}
		\centering
		\includegraphics[totalheight=5.5cm]{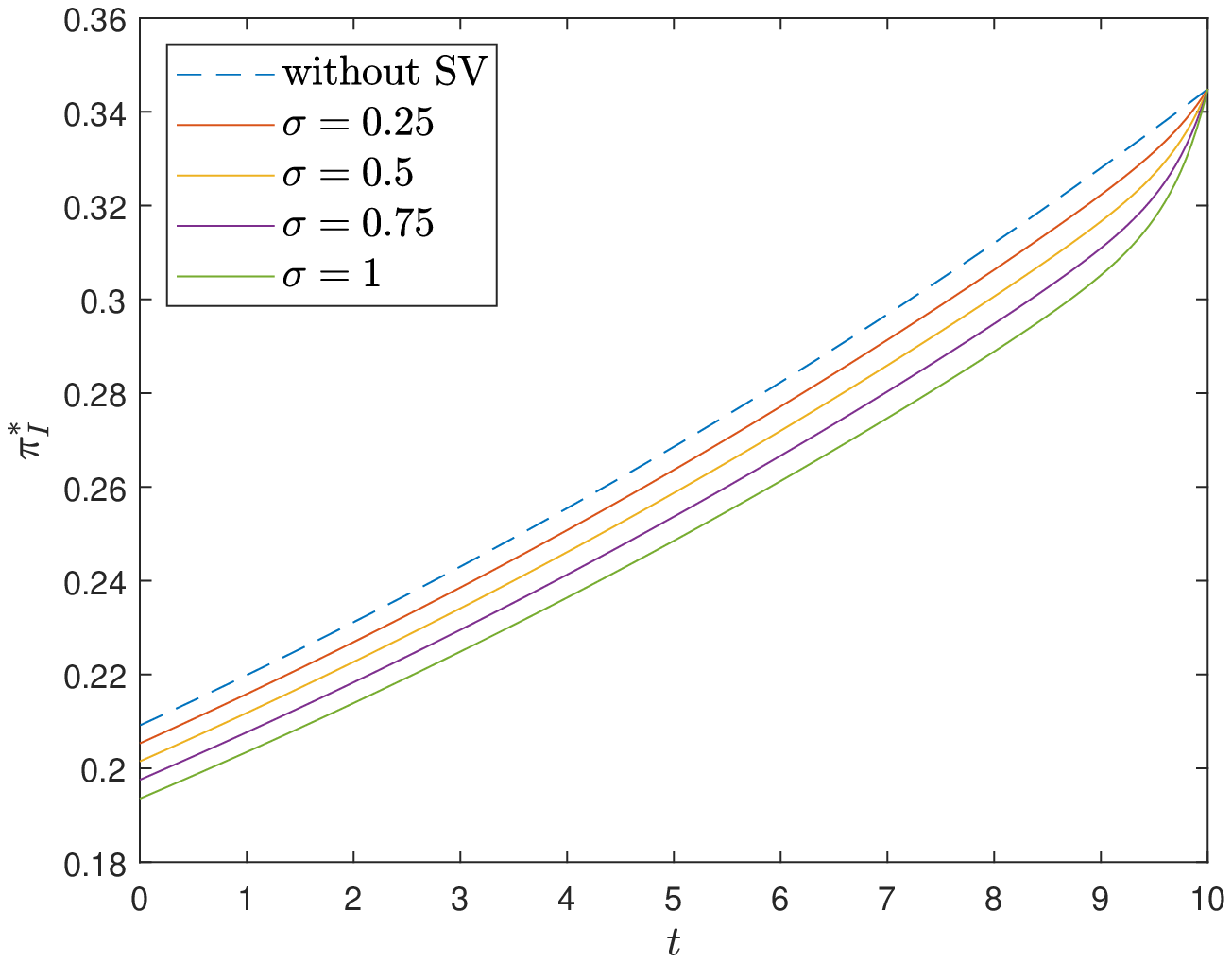}
		\caption{{Effect of $\sigma$ on the equilibrium investment strategy $\pi^*_I$.}}
		\label{fig:piisigma}
	\end{minipage}\hfill
\end{figure}

The comparisons of the equilibrium investment strategies {for $\alpha$-robust and robust problems} with stochastic volatility and without stochastic volatility are shown in Fig.~\ref{fig:piicompare}. Fig.~\ref{fig:piicompare} shows {{that}} the equilibrium investment strategy with stochastic volatility is smaller than that without stochastic volatility, which is because the AAI is risk aversion and ambiguity aversion. Meanwhile, the equilibrium investment strategy for the $\alpha$-robust problem is larger than that for the robust problem because the AAI with $\alpha$-maxmin preference is less ambiguity aversion than the AAI with maxmin robust preference.

Then we show the effect of the volatility $\sigma$ of the stochastic volatility $Y$ on the equilibrium investment strategy $\pi^*_I$ in Fig.~\ref{fig:piisigma}. The AAI lowers the investment in the stock when it is faced with a larger volatility risk. 
Fig.~\ref{fig:piisigma} shows that {the equilibrium investment strategy $\pi^*_I$ increases to the equilibrium investment strategy $\tilde{\pi}_I$ without SV when $\sigma$ declines to $0$, which is because the volatility $Y$ becomes deterministic if $\sigma$ declines to $0$.} 


\begin{figure}[htbp]
	\centering
	\begin{minipage}{0.5\textwidth}
		\centering
		\includegraphics[totalheight=5.5cm]{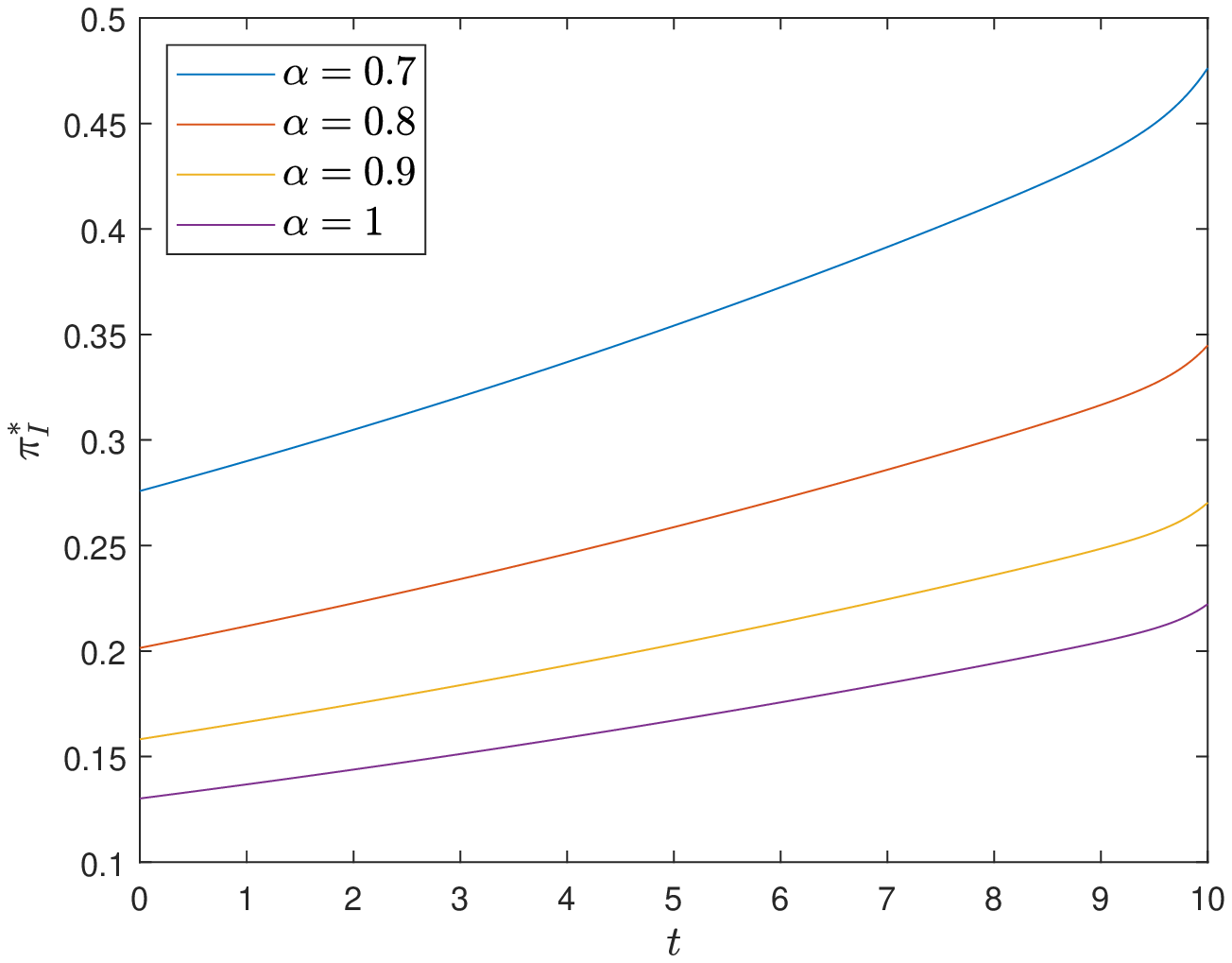}
		\caption{{Effect of $\alpha$ on the equilibrium investment strategy $\pi^*_I$.}}
		\label{fig:piialpha}
	\end{minipage}\hfill
	\begin{minipage}{0.5\textwidth}
		\centering
		\includegraphics[totalheight=5.5cm]{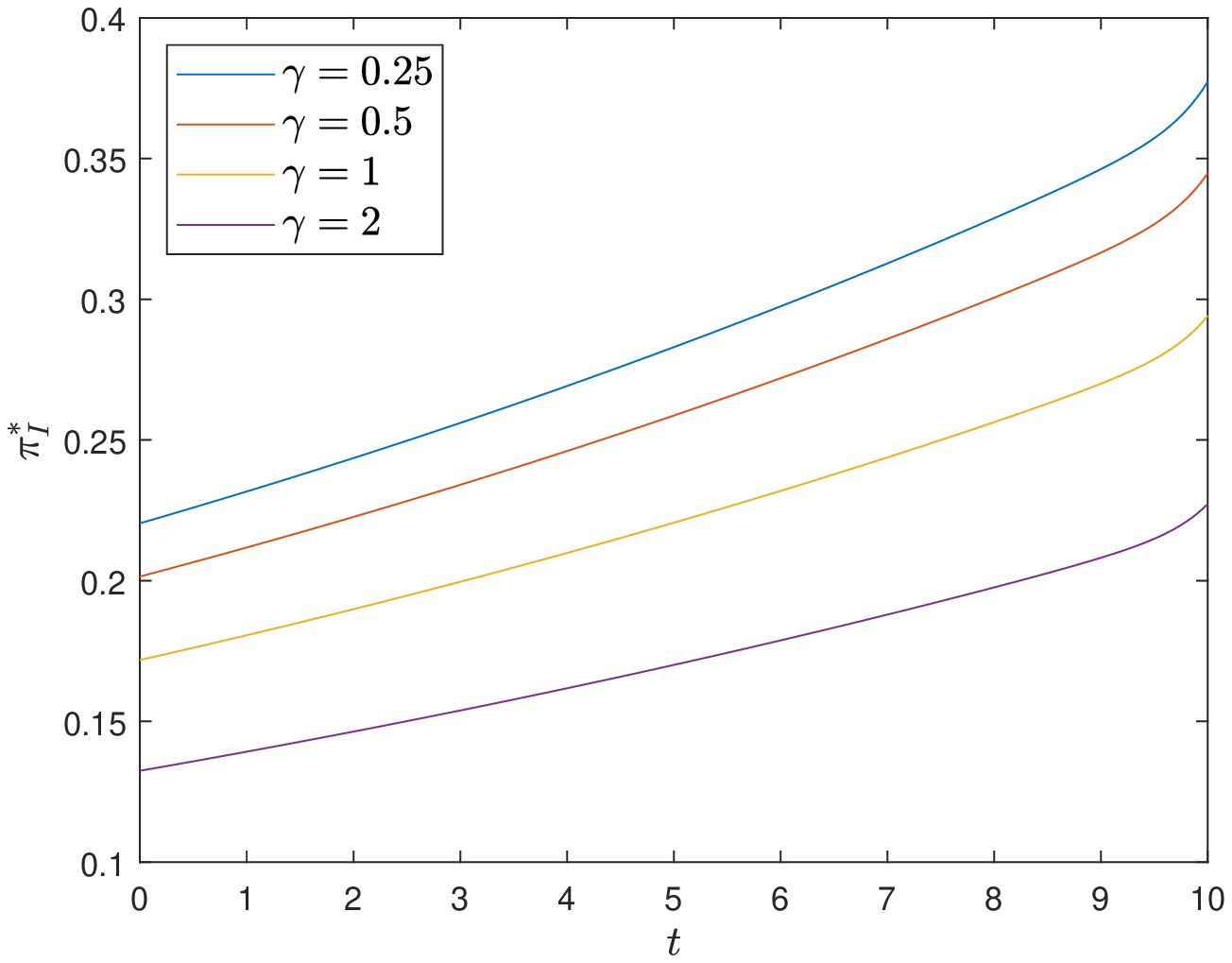}
		\caption{{Effect of $\gamma$ on the equilibrium investment strategy $\pi^*_I$.}}
		\label{fig:piigamma}
	\end{minipage}\hfill
	\begin{minipage}{0.5\textwidth}
		\centering
\includegraphics[totalheight=5.5cm]{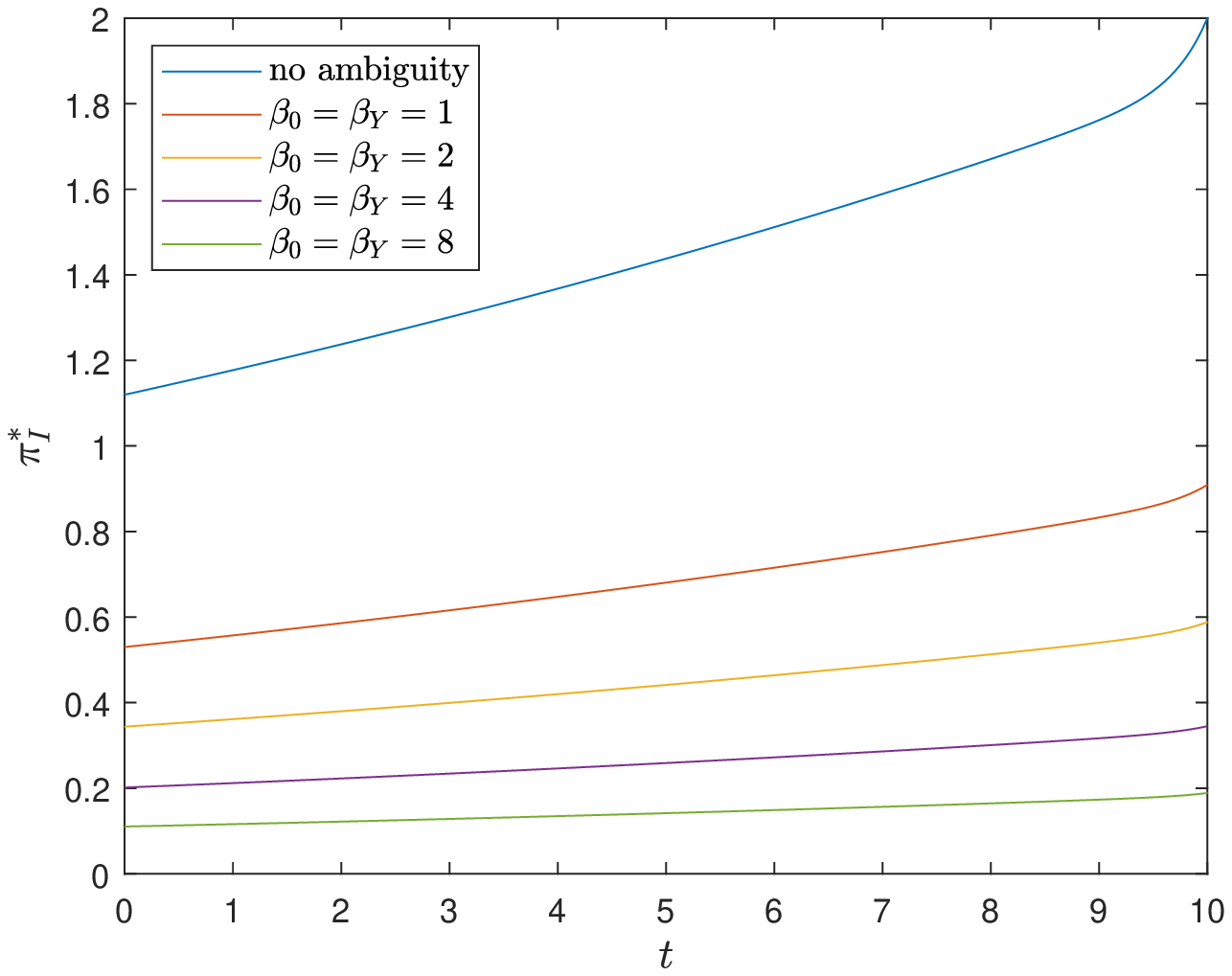}
\caption{{Effect of $\beta_0$ and $\beta_{Y}$ on the equilibrium investment strategy $\pi^*_I$.}}
\label{fig:piibeta0y}
\end{minipage}\hfill
		\begin{minipage}{0.5\textwidth}
	\centering
	\includegraphics[totalheight=5.5cm]{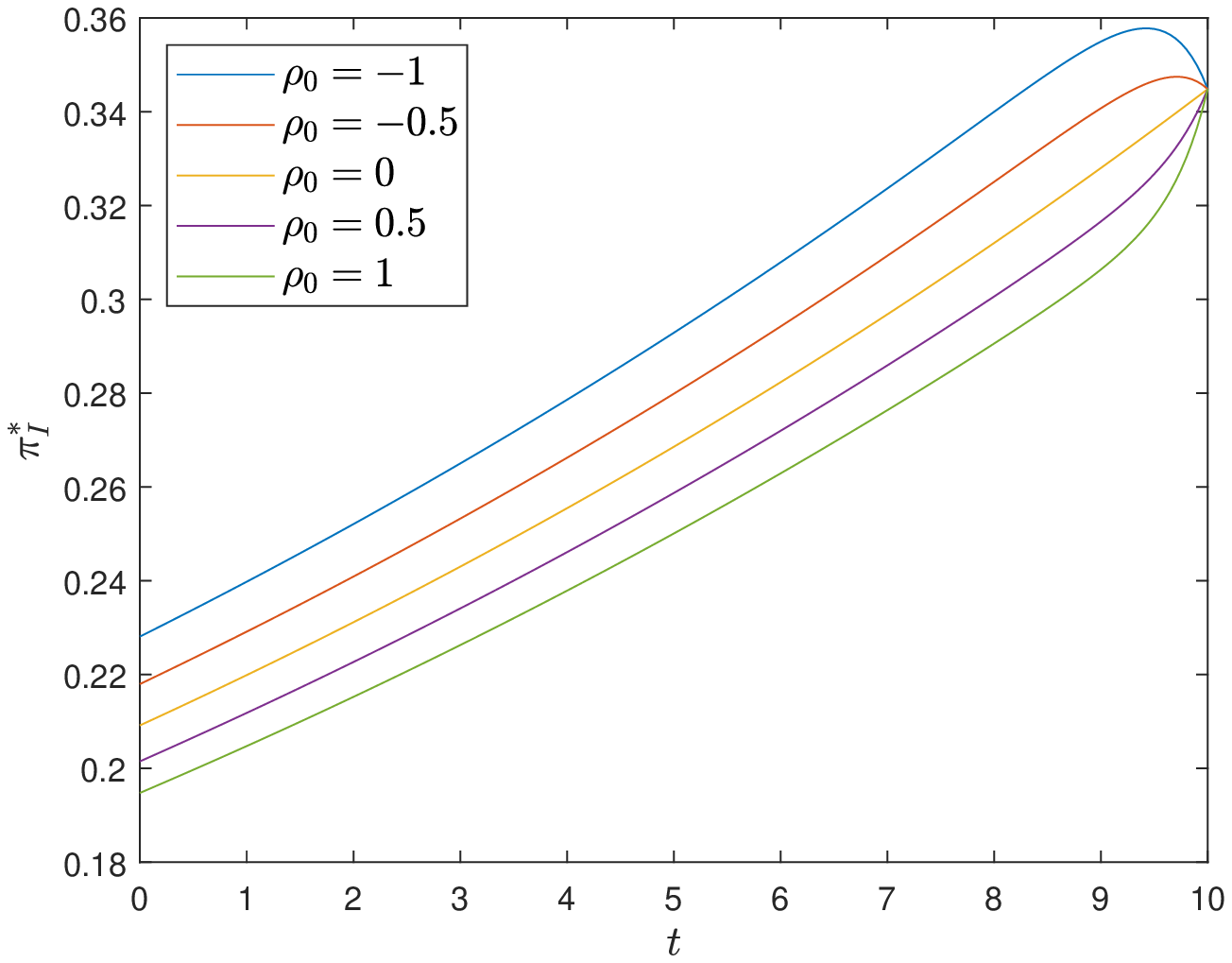}
	\caption{{Effect of $\rho_0$ on the equilibrium investment strategy $\pi_I^*$.}}
	\label{fig:piirho0}
\end{minipage}\hfill
\end{figure}
{The effects of the ambiguity aversion coefficient $\alpha$ and the risk aversion coefficient $\gamma$ on the equilibrium investment strategy $\pi^*_I$ are indicated in Figs.~\ref{fig:piialpha} and \ref{fig:piigamma}. When the AAI is more ambiguity aversion, the AAI tends to undertake less financial risk and the equilibrium investment strategy $\pi^*_I$ decreases with the ambiguity aversion coefficient $\alpha$. Meanwhile, when the AAI is more risk averse, the equilibrium investment strategy $\pi^*_I$ also decreases, which illustrates the negative relationship between $\gamma$ and $\pi^*_I$. Fig.~\ref{fig:piibeta0y} shows that the equilibrium investment strategy $\pi^*_I$ decreases with $(\beta, \beta_{Y})$, the level of ambiguity towards the financial market for the AAI, which means the AAI will be less interested in the financial market with a higher level of ambiguity. Figs.~\ref{fig:piialpha}-\ref{fig:piibeta0y} show that the level of ambiguity, ambiguity attitude, and risk attitude also have similar effects on the equilibrium investment strategy.}

{{Fig.~\ref{fig:piirho0} illustrates the effect of the correlation coefficient $\rho_0$ between the stock process and the stochastic volatility. Observing Fig.~\ref{fig:piirho0}, the equilibrium investment strategy $\pi_I^*$ decreases with $\rho_0$, which is because the financial market with larger $\rho_0$ contains more financial risk. Noting that, if $\rho_0$ becomes $0$, the equilibrium investment strategy $\pi_I^*$ degenerates into the equilibrium investment strategy $\tilde{\pi}_I$ without stochastic volatility. We observe from Fig.~\ref{fig:piirho0} that the AAI will invest more (less) than the case without stochastic volatility if the correlation coefficient $\rho_0$ is positive (negative).}}

\section{\bf Conclusion}
In this paper, we study the Stackelberg reinsurance game between the AAI and AAR under the $\alpha$-maxmin mean-variance criterion. The AAR acts as the leader, which can set the reinsurance premium. The AAI is the follower and purchases the per-loss reinsurance to diversify insurance risk. The AAI and AAR can also invest in a financial market with stochastic volatility. As the optimization problem is time-inconsistent, we study equilibrium strategies and present the extended HJB equations of the {$\alpha$-}robust mean-variance problem. The verification theorem is proved rigorously. The closed form of the AAI's feedback retention level $a_0$ to AAR's reinsurance premium $\eta$ is given. We find that the feedback form of the AAI with ambiguity aversion ($\alpha\in[0.5,1]$) does not depend on the L\'{e}vy measure $\nu(\mathrm{d}z)$, which means that the excess-of-loss reinsurance is always the optimal reinsurance strategy. The equilibrium investment strategies are obtained semi-explicitly, which are determined by a {system of} Riccati differential equations. The existence and uniqueness of the equilibrium retention level $a^*_0$ and reinsurance premium $\eta^*$ are also proved. Numerical results, show that {the level of ambiguity,} the ambiguity attitude and risk attitude of the AAI (AAR) have similar influences on the AAI and AAR's equilibrium strategies.

\vskip 15pt
{\bf Acknowledgements.}
The authors acknowledge the support from the National Natural Science Foundation of China (Grant  {No.12271290}, No.11901574 and  No.11871036). The authors also thank the members of the group of {Mathematical Finance and Actuarial Science}  at the Department of Mathematical Sciences, Tsinghua University for their feedbacks and useful conversations.

\renewcommand{\theequation}{\arabic{section}\arabic{equation}}
\numberwithin{equation}{section}
\setcounter{equation}{0}

\appendix
\section{Proof of Theorem \ref{THM4.1}}\label{A.1}
In order to prove Theorem \ref{THM4.1}, we first present the following lemma.
\begin{lemma}
For any deterministic strategy $u_I=\{(a(t),\pi_I(t)),  t\in[0,T]\}$, we have the following results:
\begin{enumerate}
\item The increments of the discounted wealth process $\{e^{-rt}X^u(t),t\in[0,T]\}$ are independent conditional on $\{Y(t),t\in[0,T]\}$ under any $\mathbb{Q}\in\mathcal{Q}$, i.e.,
\begin{equation}
\begin{split}
\mathbf{E}^{\phi}&[e^{r(T-w)}X^u(w)-e^{r(T-t)}X^u(t)|\mathcal{F}^X_t\vee\mathcal{F}^Y_t]\\
&=\mathbf{E}^{\phi}[e^{r(T-w)}X^u(w)-e^{r(T-t)}X^u(t)|Y(t)],~\forall w\in[t,T],~\phi\in\Theta;
\end{split}
\end{equation}

\item The function $\overline{J}^{u,\overline{\phi}^u}(t,x,y)-\underline{J}^{u,\underline{\phi}^u}(t,x,y)$ is independent of $x$.

\item For any $0\leq t\leq w\leq T$, the corresponding probability distortion functions attaining the infimum and the supremum in Eq.~(\ref{objecti}) are the same as the probability
distortion functions attaining the infimum in
\begin{equation}
\begin{split}
\inf_{\phi\in\Theta}&\left\{\mathbf{E}^{\phi}_{t,y}[e^{r(T-w)}X^u(w)-e^{r(T-t)}X^u(t)]+\mathbf{E}^\phi_{t,y}\left[\int_t^w h_{\mathbf{b}}(\Phi(s))\mathrm{d}s\right]\right.\\
&\left.-\frac{\gamma}{2}\rm{Var}^\phi_{t,y}[e^{r(T-w)}X^u(w)-e^{r(T-t)}X^u(t)]\right\}
\end{split}
\end{equation}
and the supremum in
\begin{equation}
\begin{split}
\sup_{\phi\in\Theta}&\left\{\mathbf{E}^{\phi}_{t,y}[e^{r(T-w)}X^u(w)-e^{r(T-t)}X^u(t)]-\mathbf{E}^\phi_{t,y}\left[\int_t^wh_{\mathbf{b}}(\Phi(s))\mathrm{d}s\right]\right.\\
&\left.-\frac{\gamma}{2}\rm{Var}^\phi_{t,y}[e^{r(T-w)}X^u(w)-e^{r(T-t)}X^u(t)]\right\}.
\end{split}
\end{equation}
\end{enumerate}
\end{lemma}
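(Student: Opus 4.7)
The plan is to prove (1), (2), (3) in this order, since (1) feeds (2) and both feed (3). For (1), I would apply It\^o's formula to $e^{-rs}X_I(s)$ via SDE~(\ref{XIQ}) under $\mathbb{Q}$; the $rX_I$-drift cancels against $-re^{-rs}X_I(s)$ from the chain rule, leaving
\begin{equation*}
e^{-rw}X_I(w)-e^{-rt}X_I(t)=\int_t^w\! e^{-rs}\mu^{\phi,u_I}(s,Y(s))\,\mathrm{d}s+\int_t^w\! e^{-rs}\pi_I(s)\sqrt{Y(s)}\,\mathrm{d}W^{\mathbb{Q}}(s)-\int_t^w\!\!\int_0^\infty\! e^{-rs}a(s,z)\,\tilde{N}^{\mathbb{Q}}(\mathrm{d}s,\mathrm{d}z),
\end{equation*}
where $\mu^{\phi,u_I}(s,Y(s))$ collects all deterministic-in-$(s,Y(s))$ terms produced by the deterministic strategy $u_I$ and the deterministic distortion $(\phi_0,\phi_Y,\phi_Z)$. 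Conditioning on $\mathcal{F}^Y:=\sigma(Y(s):s\in[0,T])$, the drift integral is $\mathcal{F}^Y$-measurable, while $W^{\mathbb{Q}}$ remains a Brownian motion and $\tilde N^{\mathbb{Q}}$ a compensated Poisson random measure with $\mathcal{F}^Y$-measurable compensator $[1-\phi_Z(s,z)]\nu(\mathrm{d}z)$. Hence stochastic integrals of $\mathcal{F}^Y$-deterministic integrands over disjoint sub-intervals are conditionally independent given $\mathcal{F}^Y$, and multiplying throughout by $e^{rT}$ proves (1).

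For (2), the integrated form of (\ref{XIQ}) with $X_I(t)=x$ yields $X_I(T)=e^{r(T-t)}x+R(t,T;u_I,\phi,Y,W^{\mathbb{Q}},\tilde N^{\mathbb{Q}})$, where $R$ does not depend on $x$ because $u_I$ is deterministic. Combined with the Markov property of $Y$ under $\mathbb{Q}$, this gives $\mathbf{E}^{\phi}_{t,x,y}[X_I(T)]=e^{r(T-t)}x+m^\phi(t,y)$ and $\mathrm{Var}^{\phi}_{t,x,y}[X_I(T)]=v^\phi(t,y)$, while $\mathbf{E}^{\phi}_{t,x,y}\!\left[\int_t^T h_{\mathbf{b}}(\phi(s))\,\mathrm{d}s\right]$ is itself $x$-free. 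Therefore $\overline{J}^{u,\phi}(t,x,y)=e^{r(T-t)}x+\overline\ell^{\phi}(t,y)$ and $\underline{J}^{u,\phi}(t,x,y)=e^{r(T-t)}x+\underline\ell^{\phi}(t,y)$; the extremisers $\overline{\phi}^u,\underline{\phi}^u$ depend only on $(t,y,u)$, and subtracting gives $\overline{J}^{u,\overline{\phi}^u}(t,x,y)-\underline{J}^{u,\underline{\phi}^u}(t,x,y)=\sup_\phi\overline\ell^{\phi}(t,y)-\inf_\phi\underline\ell^{\phi}(t,y)$, which is independent of $x$.

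For (3), set $Z_1=e^{r(T-w)}X^u(w)-e^{r(T-t)}X^u(t)$ and $Z_2=X^u(T)-e^{r(T-w)}X^u(w)$, so $X^u(T)=e^{r(T-t)}x+Z_1+Z_2$. By (1), $Z_1$ and $Z_2$ are conditionally independent given $\mathcal{F}^Y$; by the Markov property and determinism of $u$, $\mathbf{E}^{\phi}[Z_2\mid\mathcal{F}_w]=m_2(Y(w);\phi|_{[w,T]})$ and $\mathrm{Var}^{\phi}[Z_2\mid\mathcal{F}_w]=v_2(Y(w);\phi|_{[w,T]})$. Since $Z_1\in\mathcal{F}_w$, the law of total variance produces $\mathrm{Var}^{\phi}[Z_1+Z_2]=\mathrm{Var}^{\phi}[Z_1+m_2(Y(w))]+\mathbf{E}^{\phi}[v_2(Y(w))]$, so the full objective splits into a $[t,w]$-piece depending on $\phi|_{[t,w]}$ through $Z_1$ plus the $\mathcal{F}_w$-measurable shift $m_2(Y(w))$, and a $[w,T]$-piece depending only on $\phi|_{[w,T]}$ through $m_2,v_2$ and the penalty. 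Since the extremising $\phi(s,y)$ is characterised by a pointwise Hamiltonian condition that is insensitive to a deterministic $\mathcal{F}_w$-measurable shift, the optimiser on $[t,w]$ in the full problem coincides with the optimiser in the shifted problem, which is (3).

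The principal obstacle is (3): although the variance decomposes cleanly, the shift $m_2(Y(w))$ and the additive term $\mathbf{E}^{\phi}[v_2(Y(w))]$ coming from $[w,T]$ must be shown not to perturb the first-order optimality condition for $\phi|_{[t,w]}$. I would address this by a G\^ateaux perturbation of the Girsanov density $\Xi^\phi$ at $s\in[t,w]$ and invoke the conditional independence of $Z_1,Z_2$ given $\mathcal{F}^Y$ from (1) to show that the cross-derivative with respect to $\phi|_{[t,w]}$ of the $[w,T]$-piece vanishes by orthogonality of the $[t,w]$- and $[w,T]$-driving noises.
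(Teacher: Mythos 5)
Your proofs of parts (1) and (2) follow the paper's route essentially verbatim: integrate the SDE~(\ref{XIQ}) under $\mathbb{Q}$, observe that for a deterministic strategy and deterministic $\phi$ all the ingredients that enter the increment on $[t,w]$ other than the driving noises are functions of $(s,Y(s))$, deduce conditional independence given $\sigma(Y)$, and then read off the affine-in-$x$ structure of $\overline{J}^{u,\phi}$ and $\underline{J}^{u,\phi}$. Nothing to add there.

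For part (3), your decomposition is also the same as the paper's: split $X^u(T)-e^{r(T-t)}x$ into $Z_1+Z_2$ with $Z_1\in\mathcal{F}_w$, use the law of total variance, and invoke the Markov property of $Y$ to write $\mathbf{E}^{\phi}[Z_2\mid\mathcal{F}_w]=m_2(Y(w))$, $\mathrm{Var}^{\phi}[Z_2\mid\mathcal{F}_w]=v_2(Y(w))$. Where you and the paper diverge is in how the conditional-mean shift is handled and what then needs to be argued: the paper conditions on the volatility filtration $\mathcal{F}^Y_w$ so that the decomposition directly leaves $\mathrm{Var}^{\phi}_{t,y}[Z_1]$ (no shift), whereas your decomposition keeps the $\mathcal{F}_w$-measurable shift $m_2(Y(w))$ inside the variance, leaving you the extra burden of showing it does not move the optimiser.

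The difficulty you flag is real, but the resolution you propose does not close it. The $[w,T]$ block $\mathbf{E}^{\phi}_{t,y}\bigl[m_2(Y(w))-\tfrac{\gamma}{2}v_2(Y(w))\pm\ldots\bigr]$ depends on $\phi\vert_{[t,w]}$, not through any correlation between the driving noises on $[t,w]$ and $[w,T]$, but through the \emph{marginal law of $Y(w)$} under $\mathbb{Q}^{\phi}$: perturbing $(\phi_0,\phi_Y)$ on $[t,w]$ changes the drift of $Y$ on $[t,w]$ via Girsanov and hence tilts the distribution of $Y(w)$. The G\^{a}teaux derivative of that block with respect to $\phi\vert_{[t,w]}$ therefore does not vanish by orthogonality of the noises; it produces a genuine contribution equal to $\mathbf{E}^{\phi}\bigl[\partial_\phi\log\Xi^\phi\cdot\bigl(m_2(Y(w))-\tfrac{\gamma}{2}v_2(Y(w))\pm\ldots\bigr)\bigr]$, and the same is true of the term $\text{Cov}^{\phi}_{t,y}(Z_1,m_2(Y(w)))$ hidden in $\mathrm{Var}^{\phi}_{t,y}[Z_1+m_2(Y(w))]$. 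To close part (3) you would either have to argue, as the paper does, that after conditioning on $\mathcal{F}^Y_w$ the variance term collapses to $\mathrm{Var}^{\phi}_{t,y}[Z_1]$ and the leftover $[w,T]$ block enters only as an additive continuation value, or else supply a separate argument that the optimal $\phi$ at $(s,Y(s))$ satisfies a pointwise first-order condition independent of the terminal date. As written, the orthogonality claim is the gap.
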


\begin{proof}
	We prove the three results of the lemma separately.

(1) Based on Eq.~(\ref{XIQ}), $\forall w\in[t,T]$, we have
\begin{equation}\label{itoa}
\begin{split}
e^{-rw}&X^u(T)-e^{-rt}X^u(t)\\
=&\int_t^w\int_0^\infty\left[(\theta-\eta(s))z+\eta(s)a(s,z)+\phi(s,z)a(s,z)\right]\nu(\dif z)\dif s\\
&-\int_t^w\int_0^\infty a(s,z)\tilde{N}^{\mathbb{Q}}(\dif s,\dif z)\\
&+\int_t^w\pi_I(t)\left[\left(\xi Y(s)
-\sqrt{Y(s)}\phi_0(s)\right)\dif s+\sqrt{Y(s)}\dif W(s)\right].
\end{split}
\end{equation}
The conditional independence follows from the assumption that $u$ is deterministic and $\phi$ only depends on $Y$.

(2) Subtracting Eq.~(\ref{JIm}) from Eq.~(\ref{JIp}) with $\underline{\phi}$, $\overline{\phi}\in \Theta$, we have
\begin{eqnarray*}
&&\overline{J}^{u,\overline{\phi}^u}(t,x,y)-\underline{J}^{u,\underline{\phi}^u}(t,x,y)\nonumber\\
&=&\mathbf{E}^{\overline{\phi}^u}_{t,x,y}[X^u(T)]-\mathbf{E}^{\underline{\phi}^u}_{t,x,y}[X^u(T)]
-\frac{\gamma}{2}{\rm{Var}}^{\overline{\phi}^u}_{t,x,y}[X^u(T)]+\frac{\gamma}{2}{\rm{Var}}^{\underline{\phi}^u}_{t,x,y}[X^u(T)]\nonumber\\
&+&\mathbf{E}^{\overline{\phi}^u}_{t,x,y}\left[\int_t^T h_{\mathbf{b}}\left(\overline{\Phi}^u(s)\right)\dif s\right]-\mathbf{E}^{\underline{\phi}^u}_{t,x,y}\left[
\int_t^T h_{\mathbf{b}}\left(\underline{\Phi}^u(s)\right)\dif s\right].
\end{eqnarray*}
 {{Using}} part (1), we know that $X^u(T)-e^{r(T-t)}X^u(t)$ is independent of $X^u(t)$ conditional on $Y(t)$ under both $\mathbb{Q}^{\overline{\phi}^u}$ and $\mathbb{Q}^{\underline{\phi}^u}$. Thus,
\begin{eqnarray*}
&&\overline{J}^{u,\overline{\phi}^u}(t,x,y)-\underline{J}^{u,\underline{\phi}^u}(t,x,y)\\
&=&\mathbf{E}^{\overline{\phi}^u}_{t,x,y}[X^u(T)-e^{r(T-t)}X^u(t)]-\mathbf{E}^{\underline{\phi}^u}_{t,x,y}[X^u(T)
-e^{r(T-t)}X^u(t)]\\
&&-\frac{\gamma}{2}{\rm{Var}}^{\overline{\phi}^u}_{t,x,y}[X^u(T)-e^{r(T-t)}X^u(t)]
+\frac{\gamma}{2}{\rm{Var}}^{\underline{\phi}^u}_{t,x,y}[X^u(T)-e^{r(T-t)}X^u(t)]\\
&&+\mathbf{E}^{\overline{\phi}^u}_{t,x,y}\left[\int_t^T h_{\mathbf{b}}\left(\overline{\Phi}^u(s)\right)\dif s\right]-\mathbf{E}^{\underline{\phi}^u}_{t,x,y}\left[\int_t^T h_{\mathbf{b}}\left(\underline{\Phi}^u(s)\right)\dif s\right]\\
&=&\mathbf{E}^{\overline{\phi}^u}_{t,y}[X^u(T)-e^{r(T-t)}X^u(t)]
-\mathbf{E}^{\underline{\phi}^u}_{t,y}[X^u(T)-e^{r(T-t)}X^u(t)]\\
&&-\frac{\gamma}{2}{\rm{Var}}^{\overline{\phi}^u}_{t,y}[X^u(T)-e^{r(T-t)}X^u(t)]
+\frac{\gamma}{2}{\rm{Var}}^{\underline{\phi}^u}_{t,y}[X^u(T)-e^{r(T-t)}X^u(t)]\\
&&+\mathbf{E}^{\overline{\phi}^u}_{t,y}\left[\int_t^T h_{\mathbf{b}}\left(\overline{\Phi}^u(s)\right)\dif s\right]
-\mathbf{E}^{\underline{\phi}^u}_{t,y}\left[\int_t^T h_{\mathbf{b}}\left(\underline{\Phi}^u(s)\right)\dif s\right],
\end{eqnarray*}
which is independent of $x$.

(3) It follows from Eq.~(\ref{itoa}) that
\begin{equation*}
\begin{split}
&\mathbf{E}^{\phi}_{t,x,y}\left[X^u(T)\right]-\frac{\gamma}{2}{\rm{Var}}^{\phi}_{t,x,y}\left[X^u(T)\right]\pm\mathbf{E}^\phi_{t,x,y}\left[\int_t^T h_{\mathbf{b}}(\Phi(s))\dif s\right]\\
=&e^{r(T-t)}x+\mathbf{E}^{\phi}_{t,x,y}\left[X^u(T)-e^{r(T-t)}X^u(t)\right]\\
&-\frac{\gamma}{2}{\rm{Var}}^{\phi}_{t,x,y}\left[X^u(T)-e^{r(T-t)}X^u(t)\right]\pm\mathbf{E}^\phi_{t,x,y}\left[\int_t^T h_{\mathbf{b}}(\Phi(s))\dif s\right]\\
=&e^{r(T-t)}x+\mathbf{E}^{\phi}_{t,y}\left[X^u(T)-e^{r(T-t)}X^u(t)\right]\\
&-\frac{\gamma}{2}{\rm{Var}}^{\phi}_{t,y}\left[X^u(T)-e^{r(T-t)}X^u(t)\right]\pm\mathbf{E}^\phi_{t,y}\left[\int_t^T h_{\mathbf{b}}(\Phi(s))\dif s\right]\\
=&e^{r(T-t)}x+\mathbf{E}^{\phi}_{t,y}\left\{\mathbf{E}^\phi\left[X^u(T)-e^{r(T-t)}X^u(t)|\mathcal{F}^Y_w\right]\right\}\\
&-\frac{\gamma}{2}\mathbf{E}^{\phi}_{t,y}\left\{{\rm{Var}}^{\phi}\left[X^u(T)-e^{r(T-t)}X^u(t)|\mathcal{F}^Y_w\right]\right\}\\
&-\frac{\gamma}{2}{\rm{Var}}^{\phi}_{t,y}\left\{\mathbf{E}^\phi\left[X^u(T)-e^{r(T-t)}X^u(t)|\mathcal{F}^Y_w\right]\right\}\pm\mathbf{E}^\phi_{t,y}\left[\int_t^T h_{\mathbf{b}}(\Phi(s))\dif s\right]\\
=&\mathbf{E}^{\phi}_{t,y}\left[e^{r(T-w)}X^u(w)-e^{r(T-t)}X^u(t)\right]+\mathbf{E}^{\phi}_{t,y}\left\{\mathbf{E}^\phi\left[X^u(T)-e^{r(T-w)}X^u(w)|\mathcal{F}^Y_w\right]\right\}\\
&+e^{r(T-t)}x-\frac{\gamma}{2}\mathbf{E}^{\phi}_{t,y}\left\{{\rm{Var}}^{\phi}\left[X^u(T)-e^{r(T-w)}X^u(w)|\mathcal{F}^Y_w\right]\right\}\\
&-\frac{\gamma}{2}{\rm{Var}}^{\phi}_{t,y}\left[e^{r(T-w)}X^u(w)-e^{r(T-t)}X^u(t)\right]\pm\mathbf{E}^\phi_{t,y}\left[\int_t^T h_{\mathbf{b}}(\Phi(s))\dif s\right]\\
\end{split}
\end{equation*}
\begin{equation*}
\begin{split}
=&e^{r(T-t)}x+\mathbf{E}^{\phi}_{t,y}\left[e^{r(T-w)}X^u(w)-e^{r(T-t)}X^u(t)\right]\\
&-\frac{\gamma}{2}{\rm{Var}}^{\phi}_{t,y}\left[e^{r(T-w)}X^u(w)-e^{r(T-t)}X^u(t)\right]\pm\mathbf{E}^\phi_{t,y}\left[\int_t^w h_{\mathbf{b}}(\Phi(s))\dif s\right]\\
&+\mathbf{E}^{\phi}_{t,y}\left\{\mathbf{E}^\phi\left[X^u(T)-e^{r(T-s)}X^u(w)|Y(w)\right]\pm\mathbf{E}^\phi\left[\int_w^T h_{\mathbf{b}}(\Phi(s))\dif s|Y(w)\right]\right.\\
&\left.-\frac{\gamma}{2}{\rm{Var}}^{\phi}\left[X^u(T)-e^{r(T-w)}X^u(w)|Y(w)\right]\right\}.
\end{split}
\end{equation*}
As such, the corresponding probability distortion functions $\underline{\phi}^u(t)$ and $\overline{\phi}^u(t)$ only depend on $t$, $Y(t)$ and $u(t)$. Then the lemma is proved.
\end{proof}

Next, we prove the verification theorem. The proof is divided into two steps.
\begin{proof}
First, we show that under the conditions of Theorem \ref{THM4.1},
\begin{equation}\label{ge}
\underline{g}(t,x,y)=\mathbf{E}^{\underline{\phi}}_{t,x,y}\left[X^{u^*}(T)\right],~\overline{g}(t,x,y)=\mathbf{E}^{\overline{\phi}}_{t,x,y}\left[X^{u^*}(T)\right],
\end{equation}
and
\begin{equation}\label{vj}
V(t,x,y)=J^{u^*}(t,x).
\end{equation}
Based on Eq.~(\ref{geni}) and  the second equality of Eq.~(\ref{bondi}), we have
\begin{equation}
\begin{split}
\mathbf{E}^{\underline{\phi}^*}_{t,x,y}[\underline{g}(T,X^{u^*}(T))]&=\underline{g}(t,x,y)+\mathbf{E}^{\underline{\phi}^*}_{t,x,y}\left[\int_t^T\mathcal{A}^{u^*,\underline{\phi}^*}\underline{g}(s,X^{u^*}(s),Y(s))\dif s\right]\\
&=\underline{g}(t,x,y).
\end{split}
\end{equation}
Besides, it follows from the third equality of Eq.~(\ref{bondi}) that
\begin{equation}
\underline{g}(t,x,y)=\mathbf{E}^{\underline{\phi}^*}_{t,x,y}[\underline{g}(T,X^{u^*}(T))]=\mathbf{E}^{\underline{\phi}}_{t,x,y}\left[X^{u^*}(T)\right].
\end{equation}
The same arguments imply $\overline{g}(t,x,y)=\mathbf{E}^{\overline{\phi}}_{t,x,y}\left[X^{u^*}(T)\right]$.

Next, we show $V(t,x,y)=J^{u^*}_\alpha(t,x,y)$. Because the optimal values in Eq.~(\ref{veri}) are achieved at $(u^*,\underline{\phi}^*,\overline{\phi}^*)$, by the second equalities of Eq.~(\ref{bondi}), we rewrite Eq.~(\ref{veri}) as
\begin{equation}\label{hjb1}
\begin{split}
0=&\alpha\left\{\mathcal{A}^{u^*,\underline{\phi}^*}V(t,x,y)-\frac{\gamma}{2}\mathcal{A}^{u^*,\underline{\phi}^*}\underline{g}^2(t,x.y)+h_{\mathbf{b}}\left(\underline{\phi}^*(t)\right)\right\}\\
+&\hat{\alpha}\left\{\mathcal{A}^{u^*,\overline{\phi}^*}V(t,x,y)-\frac{\gamma}{2}\mathcal{A}^{u^*,\overline{\phi}^*}\overline{g}^2(t,x,y)+h_{\mathbf{b}}\left(\overline{\phi}^*(t)\right)\right\}.
\end{split}
\end{equation}
By the first equality of Eq.~(\ref{bondi}), we have
\begin{equation}
\begin{split}
\mathbf{E}^{\underline{\phi}^*}_{t,x,y}[X^{u^*}(T)]&=\mathbf{E}^{\underline{\phi}^*}_{t,x,y}[V(T,X^{u^*}(T),Y(T))]\\
&=V(t,x,y)+\mathbf{E}^{\underline{\phi}^*}_{t,x,y}\left[\int_t^T\mathcal{A}^{u^*,\underline{\phi}^*}V(s,X^{u^*}(s),Y(s))\dif s\right],\\
\mathbf{E}^{\overline{\phi}^*}_{t,x,y}[X^{u^*}(T)]&=\mathbf{E}^{\overline{\phi}^*}_{t,x,y}[V(T,X^{u^*}(T),Y(T))]\\
&=V(t,x,y)+\mathbf{E}^{\overline{\phi}^*}_{t,x,y}\left[\int_t^T\mathcal{A}^{u^*,\overline{\phi}^*}V(s,X^{u^*}(s),Y(s))\dif s\right].\\
\end{split}
\end{equation}
A linear combination of the above two equations yields
\begin{equation}
\begin{split}
\alpha\mathbf{E}^{\underline{\phi}^*}_{t,x,y}[X^{u^*}(T)]&+\hat{\alpha}\mathbf{E}^{\overline{\phi}^*}_{t,x,y}[X^{u^*}(T)]\\
=&V(t,x,y)+\alpha\mathbf{E}^{\underline{\phi}^*}_{t,x,y}\left[\int_t^T\mathcal{A}^{u^*,\underline{\phi}^*}V(s,X^{u^*}(s),Y(s))\dif s\right]\\
&+\hat{\alpha}\mathbf{E}^{\overline{\phi}^*}_{t,x,y}\left[\int_t^T\mathcal{A}^{u^*,\overline{\phi}^*}V(s,X^{u^*}(s),Y(s))\dif s\right].
\end{split}
\end{equation}
Substituting Eq.~(\ref{hjb1}) into the last equation, we have
\begin{equation}\label{valuever}
\begin{split}
V(t,x,y)=\alpha&\left\{\mathbf{E}^{\underline{\phi}^*}_{t,x,y}[X^{u^*}(T)]+\mathbf{E}^{\underline{\phi}^*}_{t,x,y}\left[\int_t^Th_{\mathbf{b}}\left(\underline{\Phi}^*(t)\right)\dif s\right]\right.\\
&\left.-\frac{\gamma}{2}\mathbf{E}^{\underline{\phi}^*}_{t,x,y}\left[\int_t^T\mathcal{A}^{u^*,\underline{\phi}^*}\underline{g}(s,X^{u^*}(s),Y(s))\dif s\right]\right\}\\
+\hat{\alpha}&\left\{\mathbf{E}^{\overline{\phi}^*}_{t,x,y}[X^{u^*}(T)]-\mathbf{E}^{\overline{\phi}^*}_{t,x,y}\left[\int_t^Th_{\mathbf{b}}\left(\overline{\Phi}^*(t)\right)\dif s\right]\right.\\
&\left.-\frac{\gamma}{2}\mathbf{E}^{\overline{\phi}^*}_{t,x,y}\left[\int_t^T\mathcal{A}^{u^*,\overline{\phi}^*}\overline{g}(s,X^{u^*}(s),Y(s))\dif s\right]\right\}.
\end{split}
\end{equation}
Then we obtain the relationship among  $\mathbf{E}^{\underline{\phi}^*}_{t,x,y}\left[\int_t^T\mathcal{A}^{u^*,\underline{\phi}^*}\underline{g}(s,X^{u^*}(s),Y(s))\dif s\right]$,\\

$\mathbf{E}^{\overline{\phi}^*}_{t,x,y}\left[\int_t^T\mathcal{A}^{u^*,\overline{\phi}^*}\overline{g}(s,X^{u^*}(s),Y(s))\dif s\right]$ and the variance of $X^{u^*}(T)$,
\begin{equation}\label{varcal}
\begin{split}
\alpha\mathbf{E}^{\underline{\phi}^*}_{t,x,y}&[X^{u^*}(T)^2]+\hat{\alpha}\mathbf{E}^{\overline{\phi}^*}_{t,x,y}[X^{u^*}(T)^2]\\
=&\alpha\mathbf{E}^{\underline{\phi}^*}_{t,x,y}[\underline{g}^2(T,X^{u^*}(T),Y(T))]+\hat{\alpha}\mathbf{E}^{\overline{\phi}^*}_{t,x,y}[\overline{g}^2(T,X^{u^*}(T),Y(T))]\\
=&\alpha\underline{g}^2(t,x,y)+\alpha\mathbf{E}^{\underline{\phi}^*}_{t,x,y}\left[\int_t^T\mathcal{A}^{u^*,\underline{\phi}^*}\underline{g}(s,X^{u^*}(s),Y(s))\dif s\right]+\hat{\alpha}\overline{g}^2(t,x,y)\\
&+\hat{\alpha}\mathbf{E}^{\overline{\phi}^*}_{t,x,y}\left[\int_t^T\mathcal{A}^{u^*,\overline{\phi}^*}\overline{g}(s,X^{u^*}(s),Y(s))\dif s\right]\\
=&\alpha\left(\mathbf{E}^{\underline{\phi}^*}_{t,x,y}[X^{u^*}(T)]\right)^2+\alpha\mathbf{E}^{\underline{\phi}^*}_{t,x,y}\left[\int_t^T\mathcal{A}^{u^*,\underline{\phi}^*}\underline{g}(s,X^{u^*}(s),Y(s))\dif s\right]\\
&+\hat{\alpha}\left(\mathbf{E}^{\overline{\phi}^*}_{t,x,y}[X^{u^*}(T)]\right)^2+\hat{\alpha}\mathbf{E}^{\overline{\phi}^*}_{t,x,y}\left[\int_t^T\mathcal{A}^{u^*,\overline{\phi}^*}\overline{g}(s,X^{u^*}(s),Y(s))\dif s\right],
\end{split}
\end{equation}
which is equivalent to
\begin{equation}\label{var}
\begin{split}
&\alpha {\rm{Var}}^{\underline{\phi}^*}_{t,x,y}[X^{u^*}(T)]+\hat{\alpha}{\rm{Var}}^{\overline{\phi}^*}_{t,x,y}[X^{u^*}(T)]\\
=&\alpha\mathbf{E}^{\underline{\phi}^*}_{t,x,y}\!\left[\int_t^T\!\!\mathcal{A}^{u^*\!,\underline{\phi}^*}\underline{g}(s,X^{u^*}\!(s),Y(s))\dif s\right]\!\!+\!\hat{\alpha}\mathbf{E}^{\overline{\phi}^*}_{t,x,y}\!\left[\int_t^T\!\!\mathcal{A}^{u^*\!,\overline{\phi}^*}\overline{g}(s,X^{u^*}\!(s),Y(s))\dif s\right]\!\!.
\end{split}
\end{equation}
Finally, substituting Eq.~(\ref{var}) into Eq.~(\ref{valuever}) yields
\begin{equation}
\begin{split}
&V(t,x,y)\\
=&\alpha\left\{\mathbf{E}^{\underline{\phi}^*}_{t,x,y}[X^{u^*}(T)]-\frac{\gamma}{2}{\rm{Var}}^{\underline{\phi}^*}_{t,x,y}[X^{u^*}(T)]+\mathbf{E}^{\underline{\phi}^*}_{t,x,y}\left[\int_t^Th_{\mathbf{b}}\left(\underline{\Phi}^*(t)\right)\dif s\right]\right\}\\
&+\hat{\alpha}\left\{\mathbf{E}^{\overline{\phi}^*}_{t,x,y}[X^{u^*}(T)]-\frac{\gamma}{2}{\rm{Var}}^{\overline{\phi}^*}_{t,x,y}[X^{u^*}(T)]+\mathbf{E}^{\overline{\phi}^*}_{t,x,y}\left[\int_t^Th_{\mathbf{b}}\left(\overline{\Phi}^*(t)\right)\dif s\right]\right\}\\
=&\alpha\left\{\mathbf{E}^{\underline{\phi}^{u^*}}_{t,x,y}[X^{u^*}(T)]-\frac{\gamma}{2}{\rm{Var}}^{\underline{\phi}^{u^*}}_{t,x,y}[X^{u^*}(T)]+\mathbf{E}^{\underline{\phi}^*}_{t,x,y}\left[\int_t^Th_{\mathbf{b}}\left(\underline{\Phi}^{u^*}(t)\right)\dif s\right]\right\}\\
&+\hat{\alpha}\left\{\mathbf{E}^{\overline{\phi}^{u*}}_{t,x,y}[X^{u^*}(T)]-\frac{\gamma}{2}{\rm{Var}}^{\overline{\phi}^{u*}}_{t,x,y}[X^{u^*}(T)]+\mathbf{E}^{\overline{\phi}^*}_{t,x,y}\left[\int_t^Th_{\mathbf{b}}\left(\overline{\Phi}^{u*}(t)\right)\dif s\right]\right\}\\
=&J^{u^*}_\alpha(t,x,y).
\end{split}
\end{equation}


Second, we show that $u^*$ is an equilibrium strategy. Consider the perturbed strategy 
\begin{equation}
u_\epsilon(s)=
\begin{cases}
\tilde{u}, &s\in[t,t+\epsilon),\\
u^*(s), &s\in[t+\epsilon,T],
\end{cases}
\end{equation}
where $\tilde{u}=(\tilde{a},\tilde{\pi})\in\Pi$, $t\in[0,T]$, and $\epsilon>0$.

In order to show that $u^*$ is an equilibrium strategy, we know from Definition \ref{equilibrium} that it suffices to show
\begin{equation}
J^{u_\epsilon}_\alpha(t,x,y)-J^{u^*}_\alpha(t,x,y)\leq o(\epsilon).
\end{equation}
In the following, we first  derive an expression of $J^{u_\epsilon}_\alpha(t,x,y)-J^{u^*}_\alpha(t,x,y)$ and then show that it is bounded above by $o(\epsilon)$. Based on Eq.~(\ref{JIp}), we have
\begin{equation*}
\begin{split}
&\underline{J}^{u_\epsilon,\underline{\phi}^{u_\epsilon}}(t,x,y)\\
=&\mathbf{E}^{\underline{\phi}^{u_\epsilon}}_{t,x,y}\left[
X^{u_\epsilon}(T)-\frac{\gamma}{2}X^{u_\epsilon}(T)^2\right]+\frac{\gamma}{2}\left(\mathbf{E}^{\underline{\phi}^{u_\epsilon}}_{t,x,y}\left[
X^{u_\epsilon}(T)\right]\right)^2
+\mathbf{E}^{\underline{\phi}^{u_\epsilon}}_{t,x,y}\left[\int_t^T h_{\mathbf{b}}\left(\underline{\Phi}^{u_\epsilon}(s)\right)\dif s\right]\\
=&\mathbf{E}^{\underline{\phi}^{\tilde{u}}}_{t,x,y}\left[\mathbf{E}^{\underline{\phi}^{*}}_{t+\epsilon,X^{\tilde{u}}(t+\epsilon),Y(t+\epsilon)}\left[
X^{u^*}(T)-\frac{\gamma}{2}X^{u^*}(T)^2\right]\right]\\
&+\frac{\gamma}{2}\left(\mathbf{E}^{\underline{\phi}^{\tilde{u}}}_{t,x,y}\left[\mathbf{E}^{\underline{\phi}^{*}}_{t+\epsilon,X^{\tilde{u}}(t+\epsilon),Y(t+\epsilon)}\left[
X^{u^*}(T)\right]\right]\right)^2\\
&+\mathbf{E}^{\underline{\phi}^{\tilde{u}}}_{t,x,y}\left[\int_t^{t+\epsilon} h_{\mathbf{b}}\left(\underline{\Phi}^{\tilde{u}}(s)\right)\dif s+\int_{t+\epsilon}^T h_{\mathbf{b}}\left(\underline{\Phi}^{*}(s)\right)\dif s\right]\\
=&\mathbf{E}^{\underline{\phi}^{\tilde{u}}}_{t,x,y}\left[\underline{J}^{u^*,\underline{\phi}^{*}}(t+\epsilon,X^{\tilde{u}}(t+\epsilon),Y(t+\epsilon))\right]-\frac{\gamma}{2}\mathbf{E}^{\underline{\phi}^{\tilde{u}}}_{t,x,y}\left[\left(\mathbf{E}^{\underline{\phi}^{*}}_{t+\epsilon,X^{\tilde{u}}(t+\epsilon),Y(t+\epsilon)}\left[
X^{u^*}(T)\right]\right)^2\right]\\
&+\frac{\gamma}{2}\left(\mathbf{E}^{\underline{\phi}^{\tilde{u}}}_{t,x,y}\left[\mathbf{E}^{\underline{\phi}^{*}}_{t+\epsilon,X^{\tilde{u}}(t+\epsilon),Y(t+\epsilon)}\left[
X^{u^*}(T)\right]\right]\right)^2+\mathbf{E}^{\underline{\phi}^{\tilde{u}}}_{t,x,y}\left[\int_t^{t+\epsilon} h_{\mathbf{b}}\left(\underline{\Phi}^{\tilde{u}}(s)\right)\dif s\right]\\
=&\mathbf{E}^{\underline{\phi}^{\tilde{u}}}_{t,x,y}\left[\underline{J}^{u^*,\underline{\phi}^{*}}(t+\epsilon,X^{\tilde{u}}(t+\epsilon),Y(t+\epsilon))\right]-\frac{\gamma}{2}\mathbf{E}^{\underline{\phi}^{\tilde{u}}}_{t,x,y}\left[\underline{g}^2(t+\epsilon,X^{\tilde{u}}(t+\epsilon),Y(t+\epsilon))\right]\\
&+\frac{\gamma}{2}\left(\mathbf{E}^{\underline{\phi}^{\tilde{u}}}_{t,x,y}\left[\underline{g}(t+\epsilon,X^{\tilde{u}}(t+\epsilon),Y(t+\epsilon))\right]\right)^2+\mathbf{E}^{\underline{\phi}^{\tilde{u}}}_{t,x,y}\left[\int_t^{t+\epsilon} h_{\mathbf{b}}\left(\underline{\Phi}^{\tilde{u}}(s)\right)\dif s\right],
\end{split}
\end{equation*}
where the last equality holds as Eq.~(\ref{ge}). And the same arguments lead to
\begin{equation*}
\begin{split}
&\overline{J}^{u_\epsilon,\overline{\phi}^{u_\epsilon}}(t,x,y)\\
=&\mathbf{E}^{\overline{\phi}^{\tilde{u}}}_{t,x,y}\left[\overline{J}^{u^*,\overline{\phi}^{*}}(t+\epsilon,X^{\tilde{u}}(t+\epsilon),Y(t+\epsilon))\right]-\frac{\gamma}{2}\mathbf{E}^{\overline{\phi}^{\tilde{u}}}_{t,x,y}\left[\overline{g}^2(t+\epsilon,X^{\tilde{u}}(t+\epsilon),Y(t+\epsilon))\right]\\
&+\frac{\gamma}{2}\left(\mathbf{E}^{\overline{\phi}^{\tilde{u}}}_{t,x,y}\left[\overline{g}(t+\epsilon,X^{\tilde{u}}(t+\epsilon),Y(t+\epsilon))\right]\right)^2+\mathbf{E}^{\overline{\phi}^{\tilde{u}}}_{t,x,y}\left[\int_t^{t+\epsilon} h_{\mathbf{b}}\left(\overline{\Phi}^{\tilde{u}}(s)\right)\dif s\right].
\end{split}
\end{equation*}
As such
\begin{equation}\label{jeps}
\begin{split}
&J^{u_\epsilon}_\alpha(t,x,y)\\
=&\alpha \underline{J}^{u_\epsilon,\underline{\phi}^{u_\epsilon}}(t,x,y)+\hat{\alpha}\overline{J}^{u_\epsilon,\overline{\phi}^{u_\epsilon}}(t,x,y)\\
=&\alpha \mathbf{E}^{\underline{\phi}^{\tilde{u}}}_{t,x,y}\left[\underline{J}^{u^*,\underline{\phi}^{*}}(t+\epsilon,X^{\tilde{u}}(t+\epsilon),Y(t+\epsilon))\right]+\hat{\alpha}\mathbf{E}^{\overline{\phi}^{\tilde{u}}}_{t,x,y}\left[\overline{J}^{u^*,\overline{\phi}^{*}}(t+\epsilon,X^{\tilde{u}}(t+\epsilon),Y(t+\epsilon))\right]\\
&-\frac{\alpha\gamma}{2}\mathbf{E}^{\underline{\phi}^{\tilde{u}}}_{t,x,y}\left[\underline{g}^2(t+\epsilon,X^{\tilde{u}}(t+\epsilon),Y(t+\epsilon))\right]+\frac{\alpha\gamma}{2}\left(\mathbf{E}^{\underline{\phi}^{\tilde{u}}}_{t,x,y}\left[\underline{g}(t+\epsilon,X^{\tilde{u}}(t+\epsilon),Y(t+\epsilon))\right]\right)^2\\
&-\frac{\hat{\alpha}\gamma}{2}\mathbf{E}^{\overline{\phi}^{\tilde{u}}}_{t,x,y}\left[\overline{g}^2(t+\epsilon,X^{\tilde{u}}(t+\epsilon),Y(t+\epsilon))\right]+\frac{\hat{\alpha}\gamma}{2}\left(\mathbf{E}^{\overline{\phi}^{\tilde{u}}}_{t,x,y}\left[\overline{g}(t+\epsilon,X^{\tilde{u}}(t+\epsilon),Y(t+\epsilon))\right]\right)^2\\
&+\alpha\mathbf{E}^{\underline{\phi}^{\tilde{u}}}_{t,x,y}\left[\int_t^{t+\epsilon} h_{\mathbf{b}}\left(\underline{\Phi}^{\tilde{u}}(s)\right)\dif s\right]-\hat{\alpha}\mathbf{E}^{\overline{\phi}^{\tilde{u}}}_{t,x,y}\left[\int_t^{t+\epsilon} h_{\mathbf{b}}\left(\overline{\Phi}^{\tilde{u}}(s)\right)\dif s\right].
\end{split}
\end{equation}
 {{Subtracting}} $J^{u^*}_\alpha(t,x,y)$ to both sides of Eq.~(\ref{jeps}), we have
\begin{equation*}
\begin{split}
J^{u_\epsilon}_\alpha(t,x,y)-J^{u^*}_\alpha(t,x,y)&=J^{u_\epsilon}_\alpha(t,x,y)-\alpha\underline{J}^{u^*,\underline{\phi}^{*}}(t,x,y)-\hat{\alpha}\overline{J}^{u^*,\overline{\phi}^{*}}(t,x,y)\\
&\triangleq H_\epsilon,
\end{split}
\end{equation*}
where
\begin{equation*}
\begin{split}
H_\epsilon\triangleq&\alpha\left\{\mathbf{E}^{\underline{\phi}^{\tilde{u}}}_{t,x,y}\left[\underline{J}^{u^*,\underline{\phi}^{*}}(t+\epsilon,X^{\tilde{u}}(t+\epsilon),Y(t+\epsilon))\right]-\underline{J}^{u^*,\underline{\phi}^*}(t,x,y)\right\}\\
&+\hat{\alpha}\left\{\mathbf{E}^{\overline{\phi}^{\tilde{u}}}_{t,x,y}\left[\overline{J}^{u^*,\overline{\phi}^{*}}(t+\epsilon,X^{\tilde{u}}(t+\epsilon),Y(t+\epsilon))\right]-\overline{J}^{u^*,\overline{\phi}^*}(t,x,y)\right\}\\
&-\frac{\alpha\gamma}{2}\left\{\mathbf{E}^{\underline{\phi}^{\tilde{u}}}_{t,x,y}\left[\underline{g}^2(t+\epsilon,X^{\tilde{u}}(t+\epsilon),Y(t+\epsilon))\right]-\underline{g}^2(t,x,y)\right\}\\
&+\frac{\alpha\gamma}{2}\left\{\left(\mathbf{E}^{\underline{\phi}^{\tilde{u}}}_{t,x,y}\left[\underline{g}(t+\epsilon,X^{\tilde{u}}(t+\epsilon),Y(t+\epsilon))\right]\right)^2-\underline{g}^2(t,x,y)\right\}\\
&-\frac{\hat{\alpha}\gamma}{2}\left\{\mathbf{E}^{\overline{\phi}^{\tilde{u}}}_{t,x,y}\left[\overline{g}^2(t+\epsilon,X^{\tilde{u}}(t+\epsilon),Y(t+\epsilon))\right]-\underline{g}^2(t,x,y)\right\}\\
&+\frac{\alpha\gamma}{2}\left\{\left(\mathbf{E}^{\underline{\phi}^{\tilde{u}}}_{t,x,y}\left[\overline{g}(t+\epsilon,X^{\tilde{u}}(t+\epsilon),Y(t+\epsilon))\right]\right)^2-\overline{g}^2(t,x,y)\right\}\\
&+\alpha\mathbf{E}^{\underline{\phi}^{\tilde{u}}}_{t,x,y}\left[\int_t^{t+\epsilon} h_{\mathbf{b}}\left(\underline{\Phi}^{\tilde{u}}(s)\right)\dif s\right]-\hat{\alpha}\mathbf{E}^{\overline{\phi}^{\tilde{u}}}_{t,x,y}\left[\int_t^{t+\epsilon} h_{\mathbf{b}}\left(\overline{\Phi}^{\tilde{u}}(s)\right)\dif s\right].
\end{split}
\end{equation*}
Next we show that $H_\epsilon\leq o(\epsilon)$. To simplify the notation, we define the operator
\begin{equation}\nonumber
\mathcal{A}^{u,\phi}_\epsilon\psi(t,x,y)\triangleq \mathbf{E}^{\phi}_{t,x,y}\left[\psi(t+\epsilon,X^{u}(t+\epsilon),Y(t+\epsilon))\right]-\psi(t,x,y),
\end{equation}
where $u\in\Pi$, $\phi\in\Theta$, $\epsilon>0$  is a small constant, and $\psi\in C^{1,2,2}([0,T]\times\mathbb{R}\times\mathbb{R})$. Then
\begin{equation}\label{opr}
\lim\limits_{\epsilon\downarrow 0}\frac{1}{\epsilon}\mathcal{A}^{\tilde{u},\underline{\phi}^{u_\epsilon}}_\epsilon\!\!\psi(t,x,y)\!=\!\mathcal{A}^{\tilde{u},\underline{\phi}^{\tilde{u}}}\psi(t,x,y),~\text{and}~\lim\limits_{\epsilon\downarrow 0}\frac{1}{\epsilon}\mathcal{A}^{\tilde{u},\overline{\phi}^{u_\epsilon}}_\epsilon\!\!\psi(t,x,y)\!=\!\mathcal{A}^{\tilde{u},\overline{\phi}^{\tilde{u}}}\psi(t,x,y).
\end{equation}
And we rewrite $H_\epsilon$ as
\begin{equation*}
\begin{split}
H_\epsilon=&\alpha\mathcal{A}^{\tilde{u},\underline{\phi}^{\tilde{u}}}_\epsilon \underline{J}^{u^*,\underline{\phi}^*}(t,x,y)-\frac{\alpha\gamma}{2}\mathcal{A}^{\tilde{u},\underline{\phi}^{\tilde{u}}}_\epsilon \underline{g}^2(t,x,y)\\
&+\hat{\alpha}\mathcal{A}^{\tilde{u},\overline{\phi}^{\tilde{u}}}_\epsilon \overline{J}^{u^*,\overline{\phi}^*}(t,x,y)-\frac{\hat{\alpha}\gamma}{2}\mathcal{A}^{\tilde{u},\overline{\phi}^{\tilde{u}}}_\epsilon \overline{g}^2(t,x,y)\\
&+\frac{\alpha\gamma}{2}\left\{\left(\mathbf{E}^{\underline{\phi}^{\tilde{u}}}_{t,x,y}\left[\underline{g}(t+\epsilon,X^{\tilde{u}}(t+\epsilon),Y(t+\epsilon))\right]\right)^2-\underline{g}^2(t,x,y)\right\}\\
&+\frac{\hat{\alpha}\gamma}{2}\left\{\left(\mathbf{E}^{\underline{\phi}^{\tilde{u}}}_{t,x,y}\left[\overline{g}(t+\epsilon,X^{\tilde{u}}(t+\epsilon),Y(t+\epsilon))\right]\right)^2-\overline{g}^2(t,x,y)\right\}\\
&+\alpha\mathbf{E}^{\underline{\phi}^{\tilde{u}}}_{t,x,y}\left[\int_t^{t+\epsilon} h_{\mathbf{b}}\left(\underline{\Phi}^{\tilde{u}}(s)\right)\dif s\right]-\hat{\alpha}\mathbf{E}^{\overline{\phi}^{\tilde{u}}}_{t,x,y}\left[\int_t^{t+\epsilon} h_{\mathbf{b}}\left(\overline{\Phi}^{\tilde{u}}(s)\right)\dif s\right].
\end{split}
\end{equation*}
Using Dynkin's formula, we have
\begin{equation*}
\begin{split}
&\mathbf{E}^{\underline{\phi}^{\tilde{u}}}_{t,x,y}\left[\underline{g}(t+\epsilon,X^{\tilde{u}}(t+\epsilon),Y(t+\epsilon))\right]\\
=&\underline{g}(t,x,y)+\mathbf{E}^{\underline{\phi}^{\tilde{u}}}_{t,x,y}\left[\int_t^{t+\epsilon}\mathcal{A}^{\tilde{u},\underline{\phi}^{\tilde{u}}}\underline{g}(s,X^{\tilde{u}}(s),Y(s))\dif s\right],
\end{split}
\end{equation*}
which means that
\begin{equation*}
\begin{split}
&\left[\mathbf{E}^{\underline{\phi}^{\tilde{u}}}_{t,x,y}\left[\underline{g}(t+\epsilon,X^{\tilde{u}}(t+\epsilon),Y(t+\epsilon))\right]\right]^2-\underline{g}^2(t,x,y)\\
=&2\underline{g}(t,x,y)\mathbf{E}^{\underline{\phi}^{\tilde{u}}}_{t,x,y}\left[\int_t^{t+\epsilon}\mathcal{A}^{\tilde{u},\underline{\phi}^{\tilde{u}}}\underline{g}(s,X^{\tilde{u}}(s),Y(s))\dif s\right]+o(\epsilon),
\end{split}
\end{equation*}
In the similar way, {{we have}}
\begin{equation*}
\begin{split}
&\left[\mathbf{E}^{\overline{\phi}^{\tilde{u}}}_{t,x,y}\left[\overline{g}(t+\epsilon,X^{\tilde{u}}(t+\epsilon),Y(t+\epsilon))\right]\right]^2-\overline{g}^2(t,x,y)\\
=&2\overline{g}(t,x,y)\mathbf{E}^{\overline{\phi}^{\tilde{u}}}_{t,x,y}\left[\int_t^{t+\epsilon}\mathcal{A}^{\tilde{u},\overline{\phi}^{\tilde{u}}}\overline{g}(s,X^{\tilde{u}}(s),Y(s))\dif s\right]+o(\epsilon),
\end{split}
\end{equation*}
As such
\begin{equation*}
\begin{split}
H_\epsilon-o(\epsilon)=&\alpha\mathcal{A}^{\tilde{u},\underline{\phi}^{\tilde{u}}}_\epsilon \underline{J}^{u^*,\underline{\phi}^*}(t,x,y)-\frac{\alpha\gamma}{2}\mathcal{A}^{\tilde{u},\underline{\phi}^{\tilde{u}}}_\epsilon \underline{g}^2(t,x,y)\\
&+\hat{\alpha}\mathcal{A}^{\tilde{u},\overline{\phi}^{\tilde{u}}}_\epsilon \overline{J}^{u^*,\overline{\phi}^*}(t,x,y)-\frac{\hat{\alpha}\gamma}{2}\mathcal{A}^{\tilde{u},\overline{\phi}^{\tilde{u}}}_\epsilon \overline{g}^2(t,x,y)\\
&+\alpha\gamma\underline{g}(t,x,y)\mathbf{E}^{\underline{\phi}^{\tilde{u}}}_{t,x,y}\left[\int_t^{t+\epsilon}\mathcal{A}^{\tilde{u},\underline{\phi}^{\tilde{u}}}\underline{g}(s,X^{\tilde{u}}(s),Y(s))\dif s\right]\\
&+\hat{\alpha}\gamma\underline{g}(t,x,y)\mathbf{E}^{\underline{\phi}^{\tilde{u}}}_{t,x,y}\left[\int_t^{t+\epsilon}\mathcal{A}^{\tilde{u},\underline{\phi}^{\tilde{u}}}\underline{g}(s,X^{\tilde{u}}(s),Y(s))\dif s\right]\\
&+\alpha\mathbf{E}^{\underline{\phi}^{\tilde{u}}}_{t,x,y}\left[\int_t^{t+\epsilon} h_{\mathbf{b}}\left(\underline{\Phi}^{\tilde{u}}(s)\right)\dif s\right]-\hat{\alpha}\mathbf{E}^{\overline{\phi}^{\tilde{u}}}_{t,x,y}\left[\int_t^{t+\epsilon} h_{\mathbf{b}}\left(\overline{\Phi}^{\tilde{u}}(s)\right)\dif s\right].
\end{split}
\end{equation*}
Eq.~(\ref{vj}) implies that
\begin{equation*}
\begin{split}
&\alpha\mathcal{A}^{\tilde{u},\underline{\phi}^{\tilde{u}}}V(t,x,y)+\hat{\alpha}\mathcal{A}^{\tilde{u},\overline{\phi}^{\tilde{u}}}V(t,x,y)\\
=&\alpha\mathcal{A}^{\tilde{u},\underline{\phi}^{\tilde{u}}}\left[\alpha\underline{J}^{u^*,\underline{\phi}^*}(t,x,y)+\hat{\alpha}\overline{J}^{u^*,\overline{\phi}^*}(t,x,y)\right]+\hat{\alpha}\mathcal{A}^{\tilde{u},\overline{\phi}^{\tilde{u}}}\left[\alpha\underline{J}^{u^*,\underline{\phi}^*}(t,x,y)+\hat{\alpha}\overline{J}^{u^*,\overline{\phi}^*}(t,x,y)\right]\\
=&\alpha\hat{\alpha}(\mathcal{A}^{\tilde{u},\overline{\phi}^{\tilde{u}}}\!\!\!\!-\!\mathcal{A}^{\tilde{u},\underline{\phi}^{\tilde{u}}})\!\!\left[\overline{J}^{u^*\!,\overline{\phi}^*}\!\!\!(t,x,y)\!-\!\underline{J}^{u^*\!,\underline{\phi}^*}\!\!(t,x,y)\right]\!\!+\!\alpha\mathcal{A}^{\tilde{u},\underline{\phi}^{\tilde{u}}}_\epsilon\!\! \underline{J}^{u^*\!,\underline{\phi}^*}\!(t,x,y)\!+\!\hat{\alpha}\mathcal{A}^{\tilde{u},\overline{\phi}^{\tilde{u}}}_\epsilon \overline{J}^{u^*\!,\overline{\phi}^*}\!\!(t,x,y).
\end{split}
\end{equation*}
As $\overline{J}^{u^*,\overline{\phi}^*}(t,x,y)-\underline{J}^{u^*,\underline{\phi}^*}(t,x,y)$ is independent of $x$, we have
\begin{equation*}
\alpha\mathcal{A}^{\tilde{u},\underline{\phi}^{\tilde{u}}}V(t,x,y)+\hat{\alpha}\mathcal{A}^{\tilde{u},\overline{\phi}^{\tilde{u}}}V(t,x,y)=\alpha\mathcal{A}^{\tilde{u},\underline{\phi}^{\tilde{u}}} \underline{J}^{u^*,\underline{\phi}^*}(t,x,y)+\hat{\alpha}\mathcal{A}^{\tilde{u},\overline{\phi}^{\tilde{u}}} \overline{J}^{u^*,\overline{\phi}^*}(t,x,y).
\end{equation*}
Based on the extended HJB equation (\ref{veri}), we obtain
\begin{equation*}
\begin{split}
0\geq\alpha&\left[\mathcal{A}^{\tilde{u},\underline{\phi}^{\tilde{u}}}_\epsilon \underline{J}^{u^*,\underline{\phi}^*}(t,x,y)-\frac{\gamma}{2}\mathcal{A}^{\tilde{u},\underline{\phi}^{\tilde{u}}}\underline{g}^2(t,x,y)\right.\left.+\gamma \underline{g}(t,x,y)\mathcal{A}^{\tilde{u},\underline{\phi}^{\tilde{u}}}\underline{g}(t,x,y)+h_{\mathbf{b}}\left(\underline{\phi}^{\tilde{u}}(t,y)\right)\right]\\
+\hat{\alpha}&\left[\mathcal{A}^{\tilde{u},\overline{\phi}^{\tilde{u}}}_\epsilon \overline{J}^{u^*,\overline{\phi}^*}(t,x,y)-\frac{\gamma}{2}\mathcal{A}^{\tilde{u},\overline{\phi}^{\tilde{u}}}\overline{g}^2(t,x,y)\right.\left.+\gamma \overline{g}(t,x,y)\mathcal{A}^{\tilde{u},\overline{\phi}^{\tilde{u}}}\overline{g}(t,x,y)-h_{\mathbf{b}}\left(\overline{\phi}^{\tilde{u}}(t,y)\right)\right].
\end{split}
\end{equation*}
Using Eq.~(\ref{opr}), we obtain that $H_\epsilon\leq o(\epsilon)$, which means that $u^*$ is an equilibrium strategy.
\end{proof}

\section{Proof of Theorem \ref{insurer}}\label{A.2}
\begin{proof}
We first derive $V$, $\underline{g}$, $\overline{g}$, $\underline{\phi}^*$ and  $\overline{\phi}^*$, which satisfy Conditions (1) and (2) in Theorem \ref{THM4.1}. Then we check that Conditions
(3) and (4) in Theorem \ref{THM4.1} also hold.

We omit the tedious calculations and rewrite the extended HJB equation(\ref{veri}) as
\begin{equation}\label{hjbs}
\begin{split}
0=&\sup\limits_{u\in\Pi}\left\{V_t+\left[rx+\xi\pi_I y+\int_0^\infty\left[(\theta-\eta)z+(1+\eta)a)\right]\nu(\dif z)\right]V_x\right.\\
&+\frac{1}{2}\pi_I^2 y(V_{xx}-\alpha\gamma\underline{g}_x^2-\hat{\alpha}\gamma\overline{g}^2_x)+\frac{1}{2}\sigma^2y(V_{yy}-\alpha\gamma\underline{g}_y^2-\hat{\alpha}\gamma\overline{g}^2_y)\\
&+\sigma\pi_I y(V_{xy}-\alpha\gamma\underline{g}_x\underline{g}_y-\hat{\alpha}\gamma\underline{g}_x\underline{g}_y)+\kappa(\delta-y)V_y\\
&\left.+\alpha\inf\limits_{\phi\in\Theta}\left\{L^{u,\phi}(V,\underline{g})+h_{\mathbf{b}}(\phi)\right\}+\hat{\alpha}\sup\limits_{\phi\in\Theta}\left\{L^{u,\phi}(V,\overline{g})-h_{\mathbf{b}}(\phi)\right\}\right\},
\end{split}
\end{equation}
where we denote
\begin{equation*}
\begin{split}
L^{u,\phi}(V,g)\triangleq& \int_0^\infty\left(V(t,x-a(t,z),y)-V(t,x,y)\right.\\
&\left.-\frac{\gamma}{2}(g(t,x-a(t,z),y)-g(t,x,y))^2\right)(1-\phi(t,z))\nu(\dif z)\\
&-\pi_I\sqrt{y}\phi_0V_x-\sigma\rho_0\sqrt{y}\phi_0V_y-\sigma\rho\sqrt{y}\phi_YV_y.
\end{split}
\end{equation*}
By the first-order condition on Eq.~(\ref{hjbs}) with respect to $\phi$, the infimum and the supremum of $\phi$ in Eq.~(\ref{hjbs}) are achieved respectively at
\begin{equation}\label{phinp}
\begin{cases}
\underline{\phi}_0(t,x,y)=\sqrt{y}\beta_0\pi_I(t)V_x(t,x,y)+\sigma\sqrt{y}\beta_0\rho_0 V_y(t,x,y),\\
\underline{\phi}_Y(t,x,y)=\sigma\sqrt{y}\beta_Y\rho V_y(t,x,y),\\
\underline{\phi}_{{Z}}(t,x,z)=1-e^{\beta\left[\frac{\gamma}{2}\left(\underline{g}(t,x-a(t,z),y)-\underline{g}(t,x,y)\right)^2-\left(V(t,x-a(t,z),y)-V(t,x,y)\right)\right]},
\end{cases}
\end{equation}
and
\begin{equation}\label{phipp}
\begin{cases}
\overline{\phi}_0(t,x,y)=-\sqrt{y}\beta_0\pi_I(t)V_x(t,x,y)-\sigma\sqrt{y}\beta_0\rho_0 V_y(t,x,y),\\
\overline{\phi}_Y(t,x,y)=-\sigma\sqrt{y}\beta_Y\rho V_y(t,x,y),\\
\overline{\phi}_{{Z}}(t,x,z)=1-e^{-\beta\left[\frac{\gamma}{2}\left(\overline{g}(t,x-a(t,z),y)-\overline{g}(t,x,y)\right)^2-\left(V(t,x-a(t,z),y)-V(t,x,y)\right)\right]}.
\end{cases}
\end{equation}
We guess that
\begin{equation*}
\begin{cases}
V(t,x,y)=xe^{r(T-t)}+A(t)y+B(t),\\
\underline{g}(t,x)=xe^{r(T-t)}+\underline{H}(t)y+\underline{G}(t),\\
\overline{g}(t,x)=xe^{r(T-t)}+\overline{H}(t)y+\overline{G}(t),
\end{cases}
\end{equation*}
where $A(t)$, $B(t)$, $\underline{H}(t)$, $\underline{G}(t)$, $\overline{H}(t)$ and  $\overline{G}(t)$ are deterministic  functions of $t$. By the first and the third equalities of Eq.~(\ref{bondi}), the boundary conditions are given by
\begin{equation*}
A(T)=B(T)=\underline{H}(T)=\underline{G}(T)=\overline{H}(T)=\overline{G}(T)=0.
\end{equation*}
Then the HJB equation becomes
\begin{equation}\label{hjbc}
\begin{split}
0=B'(t)+&A'(t)y+(\theta-\eta(t))e^{r(T-t)}\int_0^\infty z\nu(\mathrm{d}z)\\
+\sup_{\pi_I}&\left\{\xi y\pi_I(t)e^{r(T-t)}-\frac{1}{2}[\gamma+(2\alpha-1)\beta_0]y\pi_I(t)^2e^{2r(T-t)}\right.\\
&-(2\alpha-1)\beta_0\sigma\rho_0 y\pi_I(t)A(t)e^{r(T-t)}-\frac{1}{2}\sigma^2y\left[\alpha\underline{H}(t)^2+\hat{\alpha}\overline{H}(t)^2\right]\\
&\left.+\kappa(\lambda-y)A(t)-\frac{1}{2}(2\alpha-1)\sigma^2[\beta_0\rho_0^2+\beta_Y(1-\rho_0^2)] yA(t)^2\right\}\\
+\sup_{a\leq z}&\left\{(1+\eta(t))\int_0^\infty a(t,z)e^{r(T-t)}\nu(\mathrm{d}z)\right.\\
&+\frac{\alpha}{\beta}\int_{0}^\infty\left(1-e^{\beta\left[a(t,z)e^{r(T-t)}+\frac{\gamma}{2}a(t,z)^2e^{2r(T-t)}\right]}\right)\nu(\mathrm{d}z)\\
&-\left.\frac{\hat{\alpha}}{\beta}\int_{0}^\infty\left(1-e^{-\beta\left[a(t,z)e^{r(T-t)}+\frac{\gamma}{2}a(t,z)^2e^{2r(T-t)}\right]}\right)\nu(\mathrm{d}z)\right\}.
\end{split}
\end{equation}
Using the first-order condition of Eq.~(\ref{hjbc}) with respect to $a$ and $\pi_I$, the supremum in Eq.~(\ref{hjbc}) is achieved at $u^*=(p^*,\pi_I^*)$ as follows
\begin{equation}\label{piip}
\pi_I^*(t)=\frac{\xi -(2\alpha-1)\beta_0\rho_0\sigma A(t)-\gamma\sigma\rho_0\left(\alpha\underline{H}(t)+\hat{\alpha}\overline{H}(t)\right)}{[\gamma+(2\alpha-1)\beta_0]e^{r(T-t)}},
\end{equation}
and
\begin{equation*}
a^*(t,z)=\arg\max\limits_{a(t,z)\leq z}\left\{\int_0^\infty f\left(a(t,z) e^{r(T-t)}\right)\nu(\mathrm{d}z)\right\},
\end{equation*}
where $f(x)=(1+\eta)x-\frac{1}{\beta}\left[\alpha e^{\beta\left(x+\frac{\gamma}{2}x^2\right)}-\hat{\alpha}e^{-\beta\left(x+\frac{\gamma}{2}x^2\right)}\right]$. Because $f_1(x)=\alpha e^{x}-\hat{\alpha}e^{-x}~(x\geq 0,~\alpha>\hat{\alpha})$,  $f_2(x)=x+\frac{\gamma}{2}x^2~(\gamma> 0)$ are both strictly convex, $f(x)=(1+\eta)x-\frac{1}{\beta}f_1(f_2(x))$ is strictly concave. Besides, $f'(0)=\eta>0$, and $f'(+\infty)=-\infty$. As such, $f$ attains the maximum value at a unique point $a_0$ in $[0,+\infty)$, and increases in $[0,a_0]$. Thus, the equilibrium reinsurance strategy is
$$a^*(t,z)=a_0(t)e^{-r(T-t)}\wedge z,$$
where $a_0$ is determined by the equation
\begin{equation}\label{ap}
0=(1+\eta)- \left(1+\gamma a_0\right)\left[\alpha e^{\beta\left(a_0+\frac{\gamma}{2}a_0^2\right)}+\hat{\alpha}e^{-\beta\left(a_0+\frac{\gamma}{2}a_0^2\right)}\right],
\end{equation}
Substituting Eqs. (\ref{piip}) and (\ref{ap}) into Eq.~(\ref{hjbc}), we have
\begin{equation*}
\begin{split}
A'(t)=&\kappa A(t)-\frac{1}{2}(2\alpha-1)\sigma^2[\beta_0\rho_0^2+\beta_Y(1-\rho_0^2)]A(t)^2+\frac{1}{2}\sigma^2\left[\alpha\underline{H}(t)^2+\hat{\alpha}\overline{H}(t)^2\right]\\
&-\frac{1}{2[\gamma+(2\alpha-1)\beta_0]}\left[\xi -(2\alpha-1)\beta_0\rho_0\sigma A(t)-\gamma\sigma\rho_0\left(\alpha\underline{H}(t)+\hat{\alpha}\overline{H}(t)\right)\right]^2,
\end{split}
\end{equation*}
and
\begin{equation*}
\begin{split}
B(t)=\int_t^T&\left\{\kappa\lambda A(s)+(\theta-\eta(s))e^{r(T-s)}\int_0^\infty z\nu(\mathrm{d}z)\right.\\
&+(1+\eta(s))\int_0^\infty a_0\wedge ze^{r(T-s)}\nu(\mathrm{d}z)\\
&+\frac{\alpha}{\beta}\int_{0}^\infty\left(1-e^{\beta\left[a_0\wedge ze^{r(T-s)}+\frac{\gamma}{2}(a_0\wedge ze^{r(T-s)})^2\right]}\right)\nu(\mathrm{d}z)\\
&\left.-\frac{\hat{\alpha}}{\beta}\int_{0}^\infty\left(1-e^{-\beta\left[a_0\wedge ze^{r(T-s)}+\frac{\gamma}{2}(a_0\wedge ze^{r(T-s)})^2\right]}\right)\nu(\mathrm{d}z)\right\}\dif s.
\end{split}
\end{equation*}
Similarly, substituting Eqs. (\ref{piip}) and (\ref{ap}) into the second equality of Eq.~(\ref{bondi}) yields
\begin{equation*}
\begin{split}
\underline{H}'(t)=&\kappa \underline{H}(t)-\xi\frac{\xi -(2\alpha-1)\beta_0\rho_0\sigma A(t)-\gamma\sigma\rho_0\left(\alpha\underline{H}(t)+\hat{\alpha}\overline{H}(t)\right)}{\gamma+(2\alpha-1)\beta_0}\\
&+\beta_0\left[\frac{\xi -(2\alpha-1)\beta_0\rho_0\sigma A(t)-\gamma\sigma\rho_0\left(\alpha\underline{H}(t)+\hat{\alpha}\overline{H}(t)\right)}{\gamma+(2\alpha-1)\beta_0}\right]^2\\
&+\sigma\beta_0\frac{\xi -(2\alpha-1)\beta_0\rho_0\sigma A(t)-\gamma\sigma\rho_0\left(\alpha\underline{H}(t)+\hat{\alpha}\overline{H}(t)\right)}{\gamma+(2\alpha-1)\beta_0}\left(A(t)+\underline{H}(t)\right)\\
&+\sigma^2(\beta_0\rho_0^2+\beta_Y\rho^2)A(t)\underline{H}(t),
\end{split}
\end{equation*}
and
\begin{equation*}
\begin{split}
\overline{H}'(t)=&\kappa \overline{H}(t)-\xi\frac{\xi -(2\alpha-1)\beta_0\rho_0\sigma A(t)-\gamma\sigma\rho_0\left(\alpha\underline{H}(t)+\hat{\alpha}\overline{H}(t)\right)}{\gamma+(2\alpha-1)\beta_0}\\
&-\beta_0\left[\frac{\xi -(2\alpha-1)\beta_0\rho_0\sigma A(t)-\gamma\sigma\rho_0\left(\alpha\underline{H}(t)+\hat{\alpha}\overline{H}(t)\right)}{\gamma+(2\alpha-1)\beta_0}\right]^2\\
&-\sigma\beta_0\frac{\xi -(2\alpha-1)\beta_0\rho_0\sigma A(t)-\gamma\sigma\rho_0\left(\alpha\underline{H}(t)+\hat{\alpha}\overline{H}(t)\right)}{\gamma+(2\alpha-1)\beta_0}\left(A(t)+\overline{H}(t)\right)\\
&-\sigma^2(\beta_0\rho_0^2+\beta_Y\rho^2)A(t)\overline{H}(t).
\end{split}
\end{equation*}
Next, we show the existence of solutions to   Eq.~(\ref{ABHH}). Similar to \cite{Yan2020}, denote $P(t)=diag\left(A(t),\underline{H}(t),\overline{H}(t)\right)$, \\
$N_{12}=\begin{bmatrix}
	0 & 1 & 0\\
	1 & 0 & 0\\
	0 & 0 & 1\\
\end{bmatrix}$,
$N_{13}=\begin{bmatrix}
	0 & 0 & 1\\
	0 & 1 & 0\\
	1 & 0 & 0\\
\end{bmatrix}$,
$N_{23}=\begin{bmatrix}
	1 & 0 & 0\\
	0 & 0 & 1\\
	0 & 1 & 0\\
\end{bmatrix}$,
$N=\begin{bmatrix}
	0 & 0 & 1\\
	1 & 0 & 0\\
	0 & 1 & 0\\
\end{bmatrix}$, \\
$K=diag\left(\frac{2\alpha-1}{2}\sigma^2\!\left[\!\frac{\rho_0^2\gamma\beta_0}{\gamma+(2\alpha-1)\beta_0}\!\!+\!\!\rho^2\beta_{Y}\right]\!\!, \frac{\alpha\sigma^2\rho_0\gamma\beta_0}{\gamma+(2\alpha-1)\beta_0}\!\!\left[\frac{\alpha\rho_0\gamma}{\gamma+(2\alpha-1)\beta_0}\!-\!\!1\right]\!\!, \frac{\hat{\alpha}\sigma^2\rho_0\gamma\beta_0}{\gamma+(2\alpha-1)\beta_0}\!\!\left[1\!\!-\!\frac{\hat{\alpha}\rho_0\gamma}{\gamma+(2\alpha-1)\beta_0}\right]\right)$,\\
$K_1=diag\left(\frac{\alpha}{2}\sigma^2\gamma\left[1-\frac{\alpha\rho_0\gamma}{\gamma+(2\alpha-1)\beta_0}\right], \frac{(2\alpha-1)\sigma^2\rho_0\beta_0^2}{\gamma+(2\alpha-1)\beta_0}\left[\frac{(2\alpha-1)\rho_0\beta_0}{\gamma+(2\alpha-1)\beta_0}-1\right], 0\right)$,\\
$K_2=diag\left(\frac{\hat{\alpha}}{2}\sigma^2\gamma\left[1-\frac{\hat{\alpha}\rho_0\gamma}{\gamma+(2\alpha-1)\beta_0}\right], 0, \frac{(2\alpha-1)\sigma^2\rho_0\beta_0^2}{\gamma+(2\alpha-1)\beta_0}\left[1-\frac{(2\alpha-1)\rho_0\beta_0}{\gamma+(2\alpha-1)\beta_0}\right]\right)$,\\
$K_3=diag\left(0, \frac{\hat{\alpha}^2\sigma^2\rho_0^2\gamma^2\beta_0}{\left[\gamma+(2\alpha-1)\beta_0\right]^2}, -\frac{\alpha^2\sigma^2\rho_0^2\gamma^2\beta_0}{\left[\gamma+(2\alpha-1)\beta_0\right]^2}\right)$,\\
$K_{12}=diag\left(-\frac{\alpha(2\alpha-1)\sigma^2\rho_0^2\gamma\beta_0}{\gamma+(2\alpha-1)\beta_0}, \frac{\sigma^2\rho_0\gamma\beta_0}{\gamma+(2\alpha-1)\beta_0}\left[\frac{2\alpha(2\alpha-1)\rho_0\beta_0}{\gamma+(2\alpha-1)\beta_0}+\hat{\alpha}\right]\!+\!\sigma^2\left[(\rho_0^2-\rho_0)\beta_0+\rho^2\beta_{Y}\right], 0\right)$,\\
$K_{13}=diag\left(-\frac{\hat{\alpha}(2\alpha-1)\sigma^2\rho_0^2\gamma\beta_0}{\gamma+(2\alpha-1)\beta_0}, 0,  \frac{\sigma^2\rho_0\gamma\beta_0}{\gamma+(2\alpha-1)\beta_0}\left[\frac{2\hat{\alpha}(2\alpha-1)\rho_0\beta_0}{\gamma+(2\alpha-1)\beta_0}+\alpha\right]\!+\!\sigma^2\left[(\rho_0^2-\rho_0)\beta_0+\rho^2\beta_{Y}\right]\right)$,\\
$K_{23}=diag\left(0, \frac{\hat{\alpha}\sigma^2\rho_0\gamma\beta_0}{\gamma+(2\alpha-1)\beta_0}\left[\frac{2\alpha\rho_0\gamma}{\gamma+(2\alpha-1)\beta_0}-1\right], \frac{\alpha\sigma^2\rho_0\gamma\beta_0}{\gamma+(2\alpha-1)\beta_0}\left[1-\frac{2\hat{\alpha}\rho_0\gamma}{\gamma+(2\alpha-1)\beta_0}\right]\right)$,\\
$K_0=diag\left(-\frac{\alpha\hat{\alpha}\sigma^2\rho_0^2\gamma^2}{\gamma+(2\alpha-1)\beta_0}, \frac{\hat{\alpha}\sigma^2\rho_0\gamma\beta_0}{\gamma+(2\alpha-1)\beta_0}\left[\frac{2(2\alpha-1)\rho_0\beta_0}{\gamma+(2\alpha-1)\beta_0}-1\right], \frac{\alpha\sigma^2\rho_0\gamma\beta_0}{\gamma+(2\alpha-1)\beta_0}\left[1-\frac{2\alpha\rho_0\gamma}{\gamma+(2\alpha-1)\beta_0}\right]\right)$,\\
$D=diag\left(\kappa\!+\!\frac{(2\alpha-1)\sigma\rho_0\beta_0}{\gamma+(2\alpha-1)\beta_0}\xi, \kappa\!+\!\frac{\sigma\left[\alpha\rho_0\left(1-\frac{2\beta_0}{\gamma+(2\alpha-1)\beta_0}\right)\gamma+\beta_0\!\right]}{\gamma+(2\alpha-1)\beta_0}\xi, \kappa\!+\!\frac{\sigma\left[\hat{\alpha}\rho_0\left(1+\frac{2\beta_0}{\gamma+(2\alpha-1)\beta_0}\right)\gamma-\beta_0\!\right]}{\gamma+(2\alpha-1)\beta_0}\xi\right)$,\\
$D_1=diag\left(\frac{\alpha\sigma\rho_0\gamma}{\gamma+(2\alpha-1)\beta_0}\xi, \frac{\left[(2\alpha-1)\rho_0\left(1-\frac{2\beta_0}{\gamma+(2\alpha-1)\beta_0}\right)+1\right]\sigma\beta_0}{\gamma+(2\alpha-1)\beta_0}\xi, 0\right)$,\\
$D_2=diag\left(\frac{\hat{\alpha}\sigma\rho_0\gamma}{\gamma+(2\alpha-1)\beta_0}\xi, 0, \frac{\left[(2\alpha-1)\rho_0\left(1+\frac{2\beta_0}{\gamma+(2\alpha-1)\beta_0}\right)-1\right]\sigma\beta_0}{\gamma+(2\alpha-1)\beta_0}\xi\right)$,\\
$D_3=diag\left(0, \frac{\hat{\alpha}\sigma\rho_0\gamma}{\gamma+(2\alpha-1)\beta_0}\left[1-\frac{2\beta_0}{\gamma+(2\alpha-1)\beta_0}\right]\xi, \frac{\alpha\sigma\rho_0\gamma}{\gamma+(2\alpha-1)\beta_0}\left[1+\frac{2\beta_0}{\gamma+(2\alpha-1)\beta_0}\right]\xi\right)$,\\
$Q=diag\left(-\frac{1}{2[\gamma+(2\alpha-1)\beta_0]}\xi^2, -\frac{\gamma-2\hat{\alpha}\beta_0}{[\gamma+(2\alpha-1)\beta_0]^2}\xi^2, -\frac{\gamma+2\alpha\beta_0}{[\gamma+(2\alpha-1)\beta_0]^2}\xi^2\right)$.

Then the system (\ref{ABHH}) becomes
\begin{equation}\label{MR}
\begin{split}
	\frac{\mathrm{d}P(t)}{\mathrm{d}t}&=N_{12}PN_{12}K_1N_{12}PN_{12}+N_{13}PN_{13}K_2N_{13}PN_{13}+N_{23}PN_{23}K_3N_{23}PN_{23}\\
	&+PKP+NPKNPN+N_{12}PK_{12}N_{12}P+N_{13}PK_{13}N_{13}P+N_{23}PK_{23}N_{23}P\\
	&+PD+N_{12}PN_{12}D_1+N_{13}PN_{13}D_2+N_{23}PN_{23}D_3+Q,
\end{split}
\end{equation}
which is a matrix Riccati differential equation. Define
\begin{equation*}
\begin{split}
d=&||D||_{\sup}\!+||D_1||_{\sup}\!+||D_2||_{\sup}\!+||D_3||_{\sup},\\
k=&||K||_{\sup}\!+||K_0||_{\sup}\!+||K_1||_{\sup}\!+||K_2||_{\sup}\!+||K_3||_{\sup}\!+||K_{12}||_{\sup}\!+||K_{13}||_{\sup}\!+||K_{23}||_{\sup},\\
q=&||Q||_{\sup},
\end{split}
\end{equation*}
and $\Delta=d^2-4kq$, $\zeta_1=\frac{-d+\sqrt{\Delta}}{2k}$, $\zeta_2=\frac{-d-\sqrt{\Delta}}{2k}$. If $T$ satisfies the following condition:
\begin{itemize}
	\item If $\Delta=0$, then $T<\frac{2}{d}$;
	\item If $\Delta>0$, then $T<\frac{1}{\sqrt{\Delta}}\ln\left(\frac{\zeta_2}{\zeta_1}\right)$;
	\item If $\Delta<0$, then $T<\frac{1}{\sqrt{|\Delta|}}\left[\pi+2\arctan\left(\frac{Re(\zeta_1)}{Im(\zeta_1)}\right)\right]$.
\end{itemize}
Based on Lemma 3 in \cite{Yan2020},  the existence of the solution to the matrix Riccati differential equation \ref{MR} is ensured.

And the expressions of $\underline{\phi}^*$ and $\overline{\phi}^*$ given in Theorem \ref{insurer} are directly proved by substituting Eqs. (\ref{piip}) and (\ref{ap}) into Eqs. (\ref{phinp}) and (\ref{phipp}).

Furthermore, based on the conditions in Eq.~(\ref{bondi}), Conditions (3) and (4) of  Theorem \ref{THM4.1} can be easily verified.
\end{proof}
\section{Proof of Proposition \ref{exists}}\label{existproof}
\begin{proof}
Let
\begin{equation*}
\begin{split}
M(a_0)=&\int_{a_0e^{-r(T-t)}}^\infty\left\{\left[ze^{r(T-t)}-a_0-(1+\eta) \frac{\partial a_0}{\partial \eta}\right]\right.\\
&+\alpha_R\frac{\partial a_0}{\partial \eta}\left[1+\gamma_R\left(ze^{r(T-t)}-a_0\right)\right]e^{\beta_r\left[ze^{r(T-t)}-a_0+\frac{\gamma_R}{2}(ze^{r(T-t)}-a_0)^2\right]}\\
&\left.+\hat{\alpha}_R\frac{\partial a_0}{\partial \eta}\left[1+\gamma_R\left(ze^{r(T-t)}-a_0\right)\right]e^{-\beta_r\left[ze^{r(T-t)}-a_0+\frac{\gamma_R}{2}(ze^{r(T-t)}-a_0)^2\right]}\right\}\nu(\mathrm{d}z).
\end{split}
\end{equation*}
Based on Eq.~(\ref{a}), we have

\begin{equation*}
\begin{split}
1=\frac{\partial a_0}{\partial \eta}&\left\{\gamma \left[\alpha e^{\beta\left(a_0+\frac{\gamma}{2}a_0^2\right)}+\hat{\alpha}e^{-\beta\left(a_0+\frac{\gamma}{2}a_0^2\right)}\right]\right.\\
&+\left.\beta\left(1+\gamma a_0\right)^2\left[\alpha e^{\beta\left(a_0+\frac{\gamma}{2}a_0^2\right)}-\hat{\alpha}e^{-\beta\left(a_0+\frac{\gamma}{2}a_0^2\right)}\right]\right\}.
\end{split}
\end{equation*}
Then we know
\begin{equation*}
\begin{split}
M(0)=&\int_{0}^\infty\left\{\left[ze^{r(T-t)}-(1+\gamma_R) \frac{\partial a_0}{\partial \eta}\right]+\alpha_R\frac{\partial a_0}{\partial \eta}\left[1+\gamma_R ze^{r(T-t)}\right]e^{\beta_r\left[ze^{r(T-t)}+\frac{\gamma_R}{2}z^2e^{2r(T-t)}\right]}\right.\\
&\left.+\hat{\alpha}_R\frac{\partial a_0}{\partial \eta}\left[1+\gamma_R ze^{r(T-t)}\right]e^{-\beta_r\left[ze^{r(T-t)}+\frac{\gamma_R}{2}z^2e^{r(T-t)}\right]}\right\}\nu(\mathrm{d}z)\\
=&\int_{0}^\infty\left\{\left[ze^{r(T-t)}-\frac{1+\gamma_R}{\beta(2\alpha-1)+\gamma} \right]+\alpha
\frac{1+\gamma_R ze^{r(T-t)}}{\beta(2\alpha-1)+\gamma}e^{\beta_r\left[ze^{r(T-t)}+\frac{\gamma_R}{2}z^2e^{2r(T-t)}\right]}\right.\\
&\left.+\hat{\alpha}_R\frac{1+\gamma_R ze^{r(T-t)}}{\beta(2\alpha-1)+\gamma}e^{-\beta_r\left[ze^{r(T-t)}+\frac{\gamma_R}{2}z^2e^{r(T-t)}\right]}\right\}\nu(\mathrm{d}z)\\
\geq &\int_{0}^\infty ze^{r(T-t)}\nu(\mathrm{d}z)+\frac{\gamma_R}{\beta(2\alpha-1)+\gamma}\left[e^{r(T-t)}\int_{0}^\infty z\nu(\mathrm{d}z)-\nu(0,\infty)\right]\\
\geq &0,
\end{split}
\end{equation*}
{where the last inequatity  holds true by adjusting the unit of the claim $z$.}

{Moreover}, we have
\begin{equation*}
\begin{split}
&\lim\limits_{a_0\rightarrow +\infty}\frac{\left[1+\gamma_R\left(ze^{r(T-t)}-a_0\right)\right]e^{\beta_r\left[ze^{r(T-t)}-a_0+\frac{\gamma_R}{2}(ze^{r(T-t)}-a_0)^2\right]}}{1+\eta}=0,\\
&\lim\limits_{a_0\rightarrow +\infty}\frac{\left[1+\gamma_R\left(ze^{r(T-t)}-a_0\right)\right]e^{-\beta_r\left[ze^{r(T-t)}-a_0+\frac{\gamma_R}{2}(ze^{r(T-t)}-a_0)^2\right]}}{1+\eta}=0.
\end{split}
\end{equation*}
Based on Condition (2) in Assumption \ref{ass1}, we have $M(a_0)<0$ when $a_0$ is large enough.

As such, $M(0)>0$, $M(+\infty)<0$, and $M(\cdot)$ is continuous, which {proves} that Eq.~(\ref{eta}) has a solution.
\end{proof}
\vskip 10pt
\setcounter{equation}{0}
\vskip 20pt
\bibliographystyle{elsarticle-harv}
\bibliography{a}

\begin{thebibliography}{40}
\expandafter\ifx\csname natexlab\endcsname\relax\def\natexlab#1{#1}\fi
\providecommand{\url}[1]{\texttt{#1}}
\providecommand{\href}[2]{#2}
\providecommand{\path}[1]{#1}
\providecommand{\DOIprefix}{doi:}
\providecommand{\ArXivprefix}{arXiv:}
\providecommand{\URLprefix}{URL: }
\providecommand{\Pubmedprefix}{pmid:}
\providecommand{\doi}[1]{\href{http://dx.doi.org/#1}{\path{#1}}}
\providecommand{\Pubmed}[1]{\href{pmid:#1}{\path{#1}}}
\providecommand{\bibinfo}[2]{#2}
\ifx\xfnm\relax \def\xfnm[#1]{\unskip,\space#1}\fi
\bibitem[{Asmussen et~al.(2000)Asmussen, H{\o}jgaard and
  Taksar}]{asmussen2000optimal}
\bibinfo{author}{Asmussen, S.}, \bibinfo{author}{H{\o}jgaard, B.},
  \bibinfo{author}{Taksar, M.}, \bibinfo{year}{2000}.
\newblock \bibinfo{title}{{Optimal risk control and dividend distribution
  policies. Example of excess-of loss reinsurance for an insurance
  corporation}}.
\newblock \bibinfo{journal}{Finance and Stochastics} \bibinfo{volume}{4},
  \bibinfo{pages}{299--324}.
\bibitem[{Bai et~al.(2021)Bai, Zhou, Xiao and Gao}]{bai2021stackelberg}
\bibinfo{author}{Bai, Y.}, \bibinfo{author}{Zhou, Z.}, \bibinfo{author}{Xiao,
  H.}, \bibinfo{author}{Gao, R.}, \bibinfo{year}{2021}.
\newblock \bibinfo{title}{{A Stackelberg reinsurance--investment game with
  asymmetric information and delay}}.
\newblock \bibinfo{journal}{Optimization} \bibinfo{volume}{70},
  \bibinfo{pages}{2131--2168}.
\bibitem[{Bai et~al.(2022)Bai, Zhou, Xiao, Gao and Zhong}]{bai2022hybrid}
\bibinfo{author}{Bai, Y.}, \bibinfo{author}{Zhou, Z.}, \bibinfo{author}{Xiao,
  H.}, \bibinfo{author}{Gao, R.}, \bibinfo{author}{Zhong, F.},
  \bibinfo{year}{2022}.
\newblock \bibinfo{title}{A hybrid stochastic differential reinsurance and
  investment game with bounded memory}.
\newblock \bibinfo{journal}{European Journal of Operational Research}
  \bibinfo{volume}{296}, \bibinfo{pages}{717--737}.
\bibitem[{B{\"a}uerle and Leimcke(2021)}]{bauerle2021robust}
\bibinfo{author}{B{\"a}uerle, N.}, \bibinfo{author}{Leimcke, G.},
  \bibinfo{year}{2021}.
\newblock \bibinfo{title}{Robust optimal investment and reinsurance problems
  with learning}.
\newblock \bibinfo{journal}{Scandinavian Actuarial Journal}
  \bibinfo{volume}{2021}, \bibinfo{pages}{82--109}.
\bibitem[{Bj{\"o}rk et~al.(2017)Bj{\"o}rk, Khapko and Murgoci}]{bjork2017time}
\bibinfo{author}{Bj{\"o}rk, T.}, \bibinfo{author}{Khapko, M.},
  \bibinfo{author}{Murgoci, A.}, \bibinfo{year}{2017}.
\newblock \bibinfo{title}{On time-inconsistent stochastic control in continuous
  time}.
\newblock \bibinfo{journal}{Finance and Stochastics} \bibinfo{volume}{21},
  \bibinfo{pages}{331--360}.
\bibitem[{Borch(1969)}]{borch1969optimal}
\bibinfo{author}{Borch, K.}, \bibinfo{year}{1969}.
\newblock \bibinfo{title}{The optimal reinsurance treaty}.
\newblock \bibinfo{journal}{ASTIN Bulletin: The Journal of the IAA}
  \bibinfo{volume}{5}, \bibinfo{pages}{293--297}.
\bibitem[{Browne(1995)}]{browne1995optimal}
\bibinfo{author}{Browne, S.}, \bibinfo{year}{1995}.
\newblock \bibinfo{title}{Optimal investment policies for a firm with a random
  risk process: exponential utility and minimizing the probability of ruin}.
\newblock \bibinfo{journal}{Mathematics of Operations Research}
  \bibinfo{volume}{20}, \bibinfo{pages}{937--958}.
\bibitem[{Cai et~al.(2016)Cai, Lemieux and Liu}]{cai2016optimal}
\bibinfo{author}{Cai, J.}, \bibinfo{author}{Lemieux, C.}, \bibinfo{author}{Liu,
  F.}, \bibinfo{year}{2016}.
\newblock \bibinfo{title}{Optimal reinsurance from the perspectives of both an
  insurer and a reinsurer}.
\newblock \bibinfo{journal}{ASTIN Bulletin: The Journal of the IAA}
  \bibinfo{volume}{46}, \bibinfo{pages}{815--849}.
\bibitem[{Centeno and Sim{\~o}es(1991)}]{centeno1991combining}
\bibinfo{author}{Centeno, L.}, \bibinfo{author}{Sim{\~o}es, O.},
  \bibinfo{year}{1991}.
\newblock \bibinfo{title}{Combining quota-share and excess of loss treaties on
  the reinsurance of n independent risks}.
\newblock \bibinfo{journal}{ASTIN Bulletin: The Journal of the IAA}
  \bibinfo{volume}{21}, \bibinfo{pages}{41--55}.
\bibitem[{Chen and Shen(2018)}]{chen2018new}
\bibinfo{author}{Chen, L.}, \bibinfo{author}{Shen, Y.}, \bibinfo{year}{2018}.
\newblock \bibinfo{title}{{On a new paradigm of optimal reinsurance: a
  stochastic Stackelberg differential game between an insurer and a
  reinsurer}}.
\newblock \bibinfo{journal}{ASTIN Bulletin: The Journal of the IAA}
  \bibinfo{volume}{48}, \bibinfo{pages}{905--960}.
\bibitem[{Chen and Shen(2019)}]{CS2019}
\bibinfo{author}{Chen, L.}, \bibinfo{author}{Shen, Y.}, \bibinfo{year}{2019}.
\newblock \bibinfo{title}{{Stochastic Stackelberg differential reinsurance
  games under time-inconsistent mean--variance framework}}.
\newblock \bibinfo{journal}{Insurance: Mathematics and Economics}
  \bibinfo{volume}{88}, \bibinfo{pages}{120--137}.
\bibitem[{Ellsberg(1961)}]{ellsberg1961risk}
\bibinfo{author}{Ellsberg, D.}, \bibinfo{year}{1961}.
\newblock \bibinfo{title}{{Risk, ambiguity, and the Savage axioms}}.
\newblock \bibinfo{journal}{The Quarterly Journal of Economics} ,
  \bibinfo{pages}{643--669}.
\bibitem[{French et~al.(1987)French, Schwert and Stambaugh}]{FSS1987}
\bibinfo{author}{French, K.R.}, \bibinfo{author}{Schwert, G.W.},
  \bibinfo{author}{Stambaugh, R.F.}, \bibinfo{year}{1987}.
\newblock \bibinfo{title}{Expected stock returns and volatility}.
\newblock \bibinfo{journal}{Journal of Financial Economics}
  \bibinfo{volume}{19}, \bibinfo{pages}{3--29}.
\bibitem[{Ghirardato et~al.(2004)Ghirardato, Maccheroni and
  Marinacci}]{GMM2004}
\bibinfo{author}{Ghirardato, P.}, \bibinfo{author}{Maccheroni, F.},
  \bibinfo{author}{Marinacci, M.}, \bibinfo{year}{2004}.
\newblock \bibinfo{title}{Differentiating ambiguity and ambiguity attitude}.
\newblock \bibinfo{journal}{Journal of Economic Theory} \bibinfo{volume}{118},
  \bibinfo{pages}{133--173}.
\bibitem[{Gilboa and Schmeidler(2004)}]{GS1989}
\bibinfo{author}{Gilboa, I.}, \bibinfo{author}{Schmeidler, D.},
  \bibinfo{year}{2004}.
\newblock \bibinfo{title}{Maxmin expected utility with non-unique prior}, in:
  \bibinfo{booktitle}{Uncertainty in economic theory}.
  \bibinfo{publisher}{Routledge}, pp. \bibinfo{pages}{141--151}.
\bibitem[{Gu et~al.(2020)Gu, Viens and Shen}]{gu2020optimal}
\bibinfo{author}{Gu, A.}, \bibinfo{author}{Viens, F.G.}, \bibinfo{author}{Shen,
  Y.}, \bibinfo{year}{2020}.
\newblock \bibinfo{title}{Optimal excess-of-loss reinsurance contract with
  ambiguity aversion in the principal-agent model}.
\newblock \bibinfo{journal}{Scandinavian Actuarial Journal}
  \bibinfo{volume}{2020}, \bibinfo{pages}{342--375}.
\bibitem[{Guan and Liang(2014)}]{guan2014optimal}
\bibinfo{author}{Guan, G.}, \bibinfo{author}{Liang, Z.}, \bibinfo{year}{2014}.
\newblock \bibinfo{title}{Optimal reinsurance and investment strategies for
  insurer under interest rate and inflation risks}.
\newblock \bibinfo{journal}{Insurance: Mathematics and Economics}
  \bibinfo{volume}{55}, \bibinfo{pages}{105--115}.
\bibitem[{Heath and Tversky(1991)}]{HT1991}
\bibinfo{author}{Heath, C.}, \bibinfo{author}{Tversky, A.},
  \bibinfo{year}{1991}.
\newblock \bibinfo{title}{{Preference and belief: Ambiguity and competence in
  choice under uncertainty}}.
\newblock \bibinfo{journal}{Journal of Risk and Uncertainty}
  \bibinfo{volume}{4}, \bibinfo{pages}{5--28}.
\bibitem[{Hu and Wang(2022)}]{hu2022robust}
\bibinfo{author}{Hu, D.}, \bibinfo{author}{Wang, H.}, \bibinfo{year}{2022}.
\newblock \bibinfo{title}{Robust reinsurance contract with learning and
  ambiguity aversion}.
\newblock \bibinfo{journal}{Scandinavian Actuarial Journal} ,
  \bibinfo{pages}{1--22}.
\bibitem[{Irgens and Paulsen(2004)}]{irgens2004optimal}
\bibinfo{author}{Irgens, C.}, \bibinfo{author}{Paulsen, J.},
  \bibinfo{year}{2004}.
\newblock \bibinfo{title}{Optimal control of risk exposure, reinsurance and
  investments for insurance portfolios}.
\newblock \bibinfo{journal}{Insurance: Mathematics and Economics}
  \bibinfo{volume}{35}, \bibinfo{pages}{21--51}.
\bibitem[{Li et~al.(2016)Li, Li and Xiong}]{LLX2016}
\bibinfo{author}{Li, B.}, \bibinfo{author}{Li, D.}, \bibinfo{author}{Xiong,
  D.}, \bibinfo{year}{2016}.
\newblock \bibinfo{title}{Alpha-robust mean-variance reinsurance-investment
  strategy}.
\newblock \bibinfo{journal}{Journal of Economic Dynamics and Control}
  \bibinfo{volume}{70}, \bibinfo{pages}{101--123}.
\bibitem[{Li and Young(2022)}]{li2022stackelberg}
\bibinfo{author}{Li, D.}, \bibinfo{author}{Young, V.R.}, \bibinfo{year}{2022}.
\newblock \bibinfo{title}{Stackelberg differential game for reinsurance:
  Mean-variance framework and random horizon}.
\newblock \bibinfo{journal}{Insurance: Mathematics and Economics}
  \bibinfo{volume}{102}, \bibinfo{pages}{42--55}.
\bibitem[{Li et~al.(2018)Li, Zeng and Yang}]{li2018robust}
\bibinfo{author}{Li, D.}, \bibinfo{author}{Zeng, Y.}, \bibinfo{author}{Yang,
  H.}, \bibinfo{year}{2018}.
\newblock \bibinfo{title}{Robust optimal excess-of-loss reinsurance and
  investment strategy for an insurer in a model with jumps}.
\newblock \bibinfo{journal}{Scandinavian Actuarial Journal}
  \bibinfo{volume}{2018}, \bibinfo{pages}{145--171}.
\bibitem[{Liang and Young(2018)}]{liang2018minimizing}
\bibinfo{author}{Liang, X.}, \bibinfo{author}{Young, V.R.},
  \bibinfo{year}{2018}.
\newblock \bibinfo{title}{Minimizing the probability of ruin: Optimal per-loss
  reinsurance}.
\newblock \bibinfo{journal}{Insurance: Mathematics and Economics}
  \bibinfo{volume}{82}, \bibinfo{pages}{181--190}.
\bibitem[{Liang and Guo(2007)}]{liang2007optimal}
\bibinfo{author}{Liang, Z.}, \bibinfo{author}{Guo, J.}, \bibinfo{year}{2007}.
\newblock \bibinfo{title}{Optimal proportional reinsurance and ruin
  probability}.
\newblock \bibinfo{journal}{Stochastic Models} \bibinfo{volume}{23},
  \bibinfo{pages}{333--350}.
\bibitem[{Maenhout(2004)}]{maenhout2004robust}
\bibinfo{author}{Maenhout, P.J.}, \bibinfo{year}{2004}.
\newblock \bibinfo{title}{Robust portfolio rules and asset pricing}.
\newblock \bibinfo{journal}{Review of Financial Studies} \bibinfo{volume}{17},
  \bibinfo{pages}{951--983}.
\bibitem[{Marinacci(2002)}]{M2002}
\bibinfo{author}{Marinacci, M.}, \bibinfo{year}{2002}.
\newblock \bibinfo{title}{Probabilistic sophistication and multiple priors}.
\newblock \bibinfo{journal}{Econometrica} \bibinfo{volume}{70},
  \bibinfo{pages}{755--764}.
\bibitem[{Pagan and Schwert(1990)}]{PS1990}
\bibinfo{author}{Pagan, A.R.}, \bibinfo{author}{Schwert, G.W.},
  \bibinfo{year}{1990}.
\newblock \bibinfo{title}{Alternative models for conditional stock volatility}.
\newblock \bibinfo{journal}{Journal of Econometrics} \bibinfo{volume}{45},
  \bibinfo{pages}{267--290}.
\bibitem[{Schmidli(2001)}]{schmidli2001optimal}
\bibinfo{author}{Schmidli, H.}, \bibinfo{year}{2001}.
\newblock \bibinfo{title}{Optimal proportional reinsurance policies in a
  dynamic setting}.
\newblock \bibinfo{journal}{Scandinavian Actuarial Journal}
  \bibinfo{volume}{2001}, \bibinfo{pages}{55--68}.
\bibitem[{Wang and Siu(2020)}]{wang2020robust}
\bibinfo{author}{Wang, N.}, \bibinfo{author}{Siu, T.K.}, \bibinfo{year}{2020}.
\newblock \bibinfo{title}{Robust reinsurance contracts with risk constraint}.
\newblock \bibinfo{journal}{Scandinavian Actuarial Journal}
  \bibinfo{volume}{2020}, \bibinfo{pages}{419--453}.
\bibitem[{Yan et~al.(2020)Yan, Han, Pun and Wong}]{Yan2020}
\bibinfo{author}{Yan, T.}, \bibinfo{author}{Han, B.}, \bibinfo{author}{Pun,
  C.S.}, \bibinfo{author}{Wong, H.Y.}, \bibinfo{year}{2020}.
\newblock \bibinfo{title}{Robust time-consistent mean–variance portfolio
  selection problem with multivariate stochastic volatility}.
\newblock \bibinfo{journal}{Mathematics and Financial Economics}
  \bibinfo{volume}{14}, \bibinfo{pages}{699--724}.
\bibitem[{Yang et~al.(2022)Yang, Zhang and Zhu}]{yang2022robust}
\bibinfo{author}{Yang, L.}, \bibinfo{author}{Zhang, C.}, \bibinfo{author}{Zhu,
  H.}, \bibinfo{year}{2022}.
\newblock \bibinfo{title}{{Robust stochastic Stackelberg differential
  reinsurance and investment games for an insurer and a reinsurer with delay}}.
\newblock \bibinfo{journal}{Methodology and Computing in Applied Probability}
  \bibinfo{volume}{24}, \bibinfo{pages}{361--384}.
\bibitem[{Yang and Chen(2022)}]{yang2022optimal}
\bibinfo{author}{Yang, P.}, \bibinfo{author}{Chen, Z.}, \bibinfo{year}{2022}.
\newblock \bibinfo{title}{Optimal reinsurance pricing, risk sharing and
  investment strategies in a joint reinsurer-insurer framework}.
\newblock \bibinfo{journal}{IMA Journal of Management Mathematics}
  \bibinfo{volume}{00}, \bibinfo{pages}{1--34}.
\bibitem[{Yi et~al.(2013)Yi, Li, Viens and Zeng}]{YLVZ2013}
\bibinfo{author}{Yi, B.}, \bibinfo{author}{Li, Z.}, \bibinfo{author}{Viens,
  F.G.}, \bibinfo{author}{Zeng, Y.}, \bibinfo{year}{2013}.
\newblock \bibinfo{title}{{Robust optimal control for an insurer with
  reinsurance and investment under Heston's stochastic volatility model}}.
\newblock \bibinfo{journal}{Insurance: Mathematics and Economics}
  \bibinfo{volume}{53}, \bibinfo{pages}{601--614}.
\bibitem[{Yi et~al.(2015)Yi, Viens, Li and Zeng}]{yi2015robust}
\bibinfo{author}{Yi, B.}, \bibinfo{author}{Viens, F.}, \bibinfo{author}{Li,
  Z.}, \bibinfo{author}{Zeng, Y.}, \bibinfo{year}{2015}.
\newblock \bibinfo{title}{Robust optimal strategies for an insurer with
  reinsurance and investment under benchmark and mean-variance criteria}.
\newblock \bibinfo{journal}{Scandinavian Actuarial Journal}
  \bibinfo{volume}{2015}, \bibinfo{pages}{725--751}.
\bibitem[{Yuan et~al.(2022)Yuan, Liang and Han}]{yuan2022robust}
\bibinfo{author}{Yuan, Y.}, \bibinfo{author}{Liang, Z.}, \bibinfo{author}{Han,
  X.}, \bibinfo{year}{2022}.
\newblock \bibinfo{title}{{Robust reinsurance contract with asymmetric
  information in a stochastic Stackelberg differential game}}.
\newblock \bibinfo{journal}{Scandinavian Actuarial Journal}
  \bibinfo{volume}{2022}, \bibinfo{pages}{328--355}.
\bibitem[{Yuen et~al.(2015)Yuen, Liang and Zhou}]{yuen2015optimal}
\bibinfo{author}{Yuen, K.C.}, \bibinfo{author}{Liang, Z.},
  \bibinfo{author}{Zhou, M.}, \bibinfo{year}{2015}.
\newblock \bibinfo{title}{Optimal proportional reinsurance with common shock
  dependence}.
\newblock \bibinfo{journal}{Insurance: Mathematics and Economics}
  \bibinfo{volume}{64}, \bibinfo{pages}{1--13}.
\bibitem[{Zhang et~al.(2018)Zhang, Jin, Qian and Wang}]{zhang2018optimal}
\bibinfo{author}{Zhang, N.}, \bibinfo{author}{Jin, Z.}, \bibinfo{author}{Qian,
  L.}, \bibinfo{author}{Wang, R.}, \bibinfo{year}{2018}.
\newblock \bibinfo{title}{Optimal quota-share reinsurance based on the mutual
  benefit of insurer and reinsurer}.
\newblock \bibinfo{journal}{Journal of Computational and Applied Mathematics}
  \bibinfo{volume}{342}, \bibinfo{pages}{337--351}.
\bibitem[{Zhang et~al.(2021)Zhang, Zhao and Kou}]{zhang2021optimal}
\bibinfo{author}{Zhang, Y.}, \bibinfo{author}{Zhao, P.}, \bibinfo{author}{Kou,
  B.}, \bibinfo{year}{2021}.
\newblock \bibinfo{title}{{Optimal excess-of-loss reinsurance and investment
  problem with thinning dependent risks under Heston model}}.
\newblock \bibinfo{journal}{Journal of Computational and Applied Mathematics}
  \bibinfo{volume}{382}, \bibinfo{pages}{113082}.
\bibitem[{Zheng and Shi(2021)}]{zheng2021stackelberg}
\bibinfo{author}{Zheng, Y.}, \bibinfo{author}{Shi, J.}, \bibinfo{year}{2021}.
\newblock \bibinfo{title}{Stackelberg stochastic differential game with
  asymmetric noisy observations}.
\newblock \bibinfo{journal}{International Journal of Control} ,
  \bibinfo{pages}{1--21}.

\end{thebibliography}
\end{document}